\renewcommand{\arraystretch}{1.5}
\newcolumntype{P}[1]{>{\centering\arraybackslash}p{#1}} % horizontally centered table cells (i.e. {|P|P|P|})
\newcolumntype{M}{>{\centering\arraybackslash}m} % also vertically centered (i.e. {|M|M|M|})
\newtheorem{theorem}{Theorem}[section]
\newtheorem{lemma}[theorem]{Lemma}
\newtheorem{definition}[theorem]{Definition}
\newtheorem{observation}[theorem]{Observation}
\newtheorem{claim}[theorem]{Claim}
\newtheorem{fact}[theorem]{Fact}
\definecolor{darkgreen}{rgb}{0,0.5,0}
\algnewcommand\algorithmicswitch{\textbf{switch}}
\algnewcommand\algorithmiccase{\textbf{case}}
\newcommand{\eps}{\varepsilon}
\DeclareMathOperator*{\poly}{poly}
\newcommand{\set}[1]{\left\{#1\right\}}
\def\Vor{\mbox{\sf Vor}}
\def\YES{\mbox{\tt YES}}
\def\NO{\mbox{\tt NO}}
\def\ID{\mbox{\sf ID}}
\def\Pr{\mbox{\textup{Pr}}}
\newcommand{\Vn}[1]{{\sf #1}}
\newcommand{\SuperDeg}{\Delta_\textup{super}}
\newcommand{\LowDeg}{\Delta_\textup{low}}
\newcommand{\MedDeg}{\Delta_\textup{med}}
\newcommand{\CrowdedVer}{V_\textup{crwd}}
\newcommand{\DesertedVer}{V_\textup{dsrt}}
\newcommand{\SuperDegEdges}{E_\textup{super}}
\newcommand{\HighDegEdges}{E_\textup{high}}
\newcommand{\MedDegEdges}{E_\textup{med}}
\newcommand{\LowDegEdges}{E_\textup{low}}
\newcommand{\RepEdges}{E_\textup{rep}}
\newcommand{\CellEdges}{E_\textup{bckt}}
\newcommand{\rep}{\textup{rep}}
\newcommand{\cell}{\textup{bckt}}
\newcommand{\low}{\textup{low}}
\newcommand{\high}{\textup{high}}
\newcommand{\super}{\textup{super}}
\newcommand{\SuperDegSpanner}{H_\textup{super}}
\newcommand{\HighDegSpanner}{H_\textup{high}}
\newcommand{\LowDegSpanner}{H_\textup{low}}
\newcommand{\RepSpanner}{H_\textup{rep}}
\newcommand{\CellSpanner}{H_\textup{bckt}}
\newcommand{\Gdense}{G_\textup{dense}}
\newcommand{\Hdense}{H_\textup{dense}}
\newcommand{\Gsparse}{G_\textup{sparse}}
\newcommand{\Hsparse}{H_\textup{sparse}}
\newcommand{\Edense}{E_\textup{dense}}
\newcommand{\Esparse}{E_\textup{sparse}}
\newcommand{\Vdense}{V_\textup{dense}}
\newcommand{\Vsparse}{V_\textup{sparse}}
\newcommand{\HdenseI}{H_\textup{dense}^\textrm{(I)}}
\newcommand{\HdenseB}{H_\textup{dense}^\textrm{(B)}}
\newcommand{\func}[1]{\textsc{#1}}
\newcommand{\neighborP}{\func{Neighbor}}
\newcommand{\degreeP}{\func{Degree}}
\newcommand{\adjacencyP}{\func{Adjacency}}
\newcommand{\CentersOf}{S}
\newcommand{\ClusterOf}{\mbox{\sf C}}
\newcommand{\CellOf}{\mbox{\sf Bucket}}
\newcommand{\RepsOf}{\mbox{\sf Reps}}
\newcommand{\CentersOfRepsOf}{RS}
\newcommand{\dist}{\mbox{\sf dist}}
\renewcommand{\paragraph}[1]{\vspace{0.15cm}\noindent {\bf #1.}}
\newcommand{\Hcal}{\mathcal{H}}
\date{}
\title{Local Computation Algorithms for Spanners}
\author{
	Merav Parter\thanks{Weizmann IS. {\tt merav.parter@weizmann.ac.il}}
	\and
	Ronitt Rubinfeld \thanks{CSAIL, MIT and TAU. {\tt ronitt@csail.mit.edu}}
	\and
	Ali Vakilian \thanks{CSAIL, MIT. {\tt \{vakilian,anak\}@mit.edu}}
	\and
	Anak Yodpinyanee \footnotemark[3]
}
\begin{document}

\maketitle

\begin{abstract}
A graph spanner is a fundamental graph structure that faithfully 
preserves the pairwise distances in the input graph up to a small multiplicative stretch.
The common objective in the computation of spanners is to achieve the 
best-known existential size-stretch trade-off {\em efficiently}.

Classical models and algorithmic analysis of graph spanners 
essentially assume that the algorithm can read the input graph, 
construct the desired spanner, and write the answer to the output tape. 
However, when considering massive graphs containing 
millions or even billions of nodes not only
the input graph, but also the output spanner might be too 
large for a single processor to store.

To tackle this challenge, we initiate the study of {\em local computation algorithms (LCAs)} for graph spanners in general graphs, 
where the algorithm should locally decide whether a given edge $(u,v) \in E$ belongs to the output (sparse) spanner or not. 
Such LCAs give the user the ``illusion'' that
a \emph{specific} sparse spanner for the graph is maintained, 
without ever fully computing it.
We present several results for this setting, including:
\begin{itemize}
\item[$\bullet$] For general $n$-vertex graphs and for parameter $r \in \{2,3\}$, there exists an LCA
	for $(2r-1)$-spanners with $\widetilde{O}(n^{1+1/r})$ edges
	and sublinear probe complexity of $\widetilde{O}(n^{1-1/2r})$. These size/stretch trade-offs are best possible (up to polylogarithmic factors).
\item[$\bullet$] For every $k \geq 1$ and $n$-vertex graph with maximum degree $\Delta$, 
	there exists an LCA for  $O(k^2)$ spanners with $\widetilde{O}(n^{1+1/k})$ edges, probe complexity of $\widetilde{O}(\Delta^4 n^{2/3})$, and random seed of size $\mathrm{polylog}(n)$. This improves upon, and extends the work of [Lenzen-Levi, ICALP'18].
\end{itemize}

We also complement these constructions by providing a polynomial lower bound on the probe complexity of LCAs for graph spanners that holds even for the simpler task of computing a sparse connected subgraph with $o(m)$ edges. 

To the best of our knowledge, our results on 3 and 5-spanners are the first LCAs with sublinear (in $\Delta$) probe-complexity for $\Delta = n^{\Omega(1)}$.
\end{abstract}

\section{Introduction}
One of the fundamental structural problems in graph theory is to find a {\em sparse} structure which preserves the pairwise distances of vertices. In many applications, it is crucial for the sparse structure to be a {\em subgraph} of the input graph; this problem is called the {\em spanner} problem.
For an input graph $G=(V,E)$, a {\em $k$-spanner} $H \subseteq G$  (for $k \geq 1$)
satisfies that for any $v,u\in V$, the distance from $v$ to $u$ in $H$ is at most $k$
times the distance from $v$ to $u$ in $G$, where $k$ is referred to as the
{\em stretch} of the spanner. Furthermore, to reduce the {\em cost} of the solution, it is desired
to output a minimum size/weight such subgraph $H$. 
The notion of spanners was introduced by~\citet{peleg1989graph} and has 
been used widely in different applications such as routing schemes~\cite{awerbuch1992routing,Peleg:2000}, synchronizers~\cite{peleg1989optimal,awerbuch1990network}, SDD's~\cite{spielman2011spectral} and spectral sparsifiers~\cite{kapralov2012spectral}.

It is folklore that for every $n$-vertex graph $G$, 
there exists a $(2k-1)$-spanner $H \subseteq G$ with $O(n^{1+1/k})$ edges.
In particular, if the {\em girth conjecture} of \citet{erdos1965some} is true, then this size-stretch trade-off is optimal.
Spanners have been considered in many different models such as distributed algorithms~\cite{derbel2006fast,derbel2007deterministic,baswana07,DerbelGPV08, derbel2009local,pettie2010distributed,elkin2017efficient} 
and dynamic algorithms~\cite{elkin2011streaming,baswana2012fully,BodwinK16,bernstein2019deamortization}. 

\paragraph{Local computation of small stretch spanners}
When the graph is so large that it does not fit into the main memory, the existing
algorithms are not sufficient for computing a spanner. Instead, we aim at designing an algorithm that answers \emph{queries} of the form ``is the edge $(u,v)$ in the spanner?''~without computing the whole solution upfront.
One way to get around this issue is to consider the {\em Local Computation Algorithms (LCAs)} model (also known as the \emph{Centralized Local model}), introduced by \citet{RubinfeldTVX11} and \citet{alon2012space}. There can be many different plausible $k$-spanners; however, the goal of LCAs for the $k$-spanner problem is to design an algorithm that, given access to {\em primitive} probes (i.e. \neighborP, \degreeP~and \adjacencyP~probes) on the input graph $G$, for each {\em query} on an edge $e \in E(G)$ {\em consistently} with respect to a {\em unique} $k$-spanner $H \subseteq G$ (picked by the LCA arbitrarily), outputs whether $e \in H$. The performance of the LCA is measured based on the {\em quality of solution} (i.e. number of edges in $H$) and the {\em probe complexity} (the maximum number of probes per each query) of the algorithm\footnote{We may also measure the \emph{time complexity} of an LCA. In our LCAs, the time complexities are clearly only a factor of $\poly(\log n)$ higher than the corresponding probe complexities, so we focus our analysis on probe complexities.}. In other words, an LCA gives us the ``illusion'' as if we have query access to
a precomputed {\em $k$-spanner} of $G$.

The study of LCAs with {\em sublinear} probe complexity for {\em nearly linear} size spanning subgraphs (or {\em sparsifiers}) is initiated by \citet{LRR14,levi2016local} for some restricted families of graphs such as minor-closed families.
However, their focus is mainly on designing LCAs that preserve the {\em connectivity} while allowing the stretch factor to be as large as $n$. 
Moreover, in their work, the input graph is sparse (has $O(n)$ edges), while the classical $k$-spanner problem becomes relevant only when the input graph is dense (with superlinear number of edges). Recently, \citet{LeviLenzen17}  designed the first sparsifier LCA in {\em general graphs} with $(1+\eps)n$ edges, stretch $O(\log^2 n\cdot \poly(\Delta/\eps))$ and probe complexity of $O(\poly(\Delta/\eps) \cdot n^{2/3})$, where $\Delta$ is the maximum degree of the input graph.

In this work, we show that sublinear time LCAs for spanners are indeed
possible in several cases. We give: (I) $3$ and $5$-spanners for general graphs with optimal trade-offs between the number of edges and the stretch parameter (up to polylogarithmic factors), and (II) general $k$-spanners, either in the dense regime (when the minimum degree is at least $n^{1/2-1/(2k)}$) or in the sparse regime (when the maximum degree is $n^{1/12-\eps}$).

\paragraph{Broader scope and agenda: local computation algorithms for dense graphs} 
LCAs have been established by now for a large collection of problems, including Maximal Independent Set, Maximum Matching, and Vertex Cover \cite{RubinfeldTVX11,alon2012space,mansour2013local,even2014deterministic,mansour2015constant,reingold2016new, LeviRY17}. These algorithms typically suffer from a probe complexity that is exponential in $\Delta$ and thus are efficient only in the sparse regime when $\Delta=O(1)$. 

To this end, obtaining LCAs even with a polynomial dependency in $\Delta$ is a major open problem for many classical local graph problems, as noted in \cite{mansour2012converting,LeviRY17,ghaffari2018sparsifying}. For instance, recently \citet{ghaffari2018sparsifying} obtained an LCA for the MIS problem with probe complexity of $\Delta^{O(\log\log \Delta)}\cdot \log n$ improving upon a long line of results. Their result also illustrates the connection between LCAs with good dependency on $\Delta$, and algorithms for the massively parallel computation model with sublinear space per machine. Recently, \cite{LeviRY17} and \cite{LeviLenzen17} provided LCAs with probe complexities \emph{polynomial} in $\Delta$ for the problems of $(1-\eps)$-maximum matching and sparse connected subgraphs, respectively. Note that in the context of spanners, such algorithms are still inefficient when the maximum degree is polynomial in $n$, which is precisely the setting where graph sparsification is applied. 

\subsection{Additional related work: spanners in many other related settings}\label{sec:morerelated}
\paragraph{Local distributed algorithms} 
The construction of spanners in the distributed local model, where messages are unbounded, has been studied extensively
in both the randomized and the deterministic settings \cite{baswana07,elkin2017efficient,derbel2006fast,derbel2007deterministic,DerbelGPV08, derbel2009local,pettie2010distributed}: the state of the art of both randomized and deterministic constructions is $O(k)$ rounds.

\paragraph{Dynamic algorithms for graph spanners} In the dynamic setting, the goal is to maintain a spanner
in a setting where edges keep on being inserted or deleted. The main complexity measure is the \emph{update time} which is the computation time needed to fix the current spanner upon a single edge insertion or deletion. 
Most of the dynamic algorithms for spanners maintain an auxiliary clustering structure that aids this modification of current spanner.  \cite{BodwinK16} provided the first dynamic algorithms with sublinear worst-case update time for 
$3$-spanners and $5$-spanners. Recently, \cite{bernstein2019deamortization} showed a general deamortization technique that provides worst-case update time of $\widetilde{O}(1)$ with high probability, for any fixed stretch value of $k$. It would be interesting to see if those recent tools can be useful in the local centralized setting as well. Note that in the LCA setting there is a polynomial lower bound even without the stretch constraint, thus our setting is provably harder.

\paragraph{Streaming algorithms} 
In the setting of dynamic streaming, the input graph is presented online as a long stream of insertions and deletions to its edges. For spanners, the goal is to maintain a sparse spanner for the graph using small space and few passes over the stream. \citet{ahn2012graph} showed the first a sketch-based algorithm for spanners in this setting, yielding $(k^{\log_2 5}-1)$-spanner with $\widetilde{O}(n^{1+1/k})$ edges and $O(\log k)$ passes. \citet{kapralov2014spanners} showed an alternative tradeoff yielding $O(2^k)$-spanner with $\widetilde{O}(n^{1+1/k})$ edges using only \emph{two} passes. In {\em dynamic} streaming, one can keep the entire solution and the challenge is to update the solution though the pass over the stream. In contrast, in the LCA model, one cannot afford keeping the entire solution (i.e., already the number of vertices is too large) but the input graph remains as is.

\subsection{Our results and techniques}
\newcounter{ideacounter}

In this paper we initiate the study of LCAs for graph spanners in \emph{general graphs} which concerns with the following task:
\emph{How can we decide quickly (e.g., sublinear in $n$ time) if a given edge $e$ belongs to a sparse spanner (with fixed stretch) of the input graph, without preprocessing and storing any auxiliary information?} 
In the design of LCAs for graph problems, 
the set of defined {\em probes} to the input graph plays an important role. 
Here we consider the following common probes: 
{\neighborP~probes} (``what is the $i^\textrm{th}$ neighbor of $u$''?), {\degreeP~probes} (``what is $\deg(u)$?'') and {\adjacencyP~probes} (``are $u$ and $v$ neighbors''?) \cite{goldreich2011brief,goldreich1998property}. We emphasize that the answer to an \adjacencyP~probe on an ordered pair $\langle u, v \rangle$ is the index of $v$ in $\Gamma(u)$ if\footnote{$\Gamma(u)$ denotes the neighbor set of $u$, whereas $\Gamma^+(u) = \Gamma(u) \cup \{u\}$.} the edge exists and $\bot$ otherwise.  
Note that if the maximum degree in the input graph is $O(1)$, each \adjacencyP~probe can be implemented by $O(1)$ number of \neighborP~probes.

The problem of designing LCAs for spanners is closely related to designing LCAs for {\em sparse connected subgraphs} with $(1+\eps)n$ edges which was first introduced by \cite{LRR14}. With the exception of\cite{LeviLenzen17}, a long line of results for this problem usually concerns special \emph{sparse} graph families, rather than general graphs.
A summary of these results with a comparison to our results is provided in 
Table~\ref{table:results}. 
\begin{table*}[!h]
\centering
\renewcommand{\arraystretch}{1.5} 
\resizebox{\textwidth}{!}{%
\begin{tabular}{|c||c|c|c|c|c|}
\hline
\multicolumn{2}{|c|}{\bf Reference} & \shortstack{\bf Graph Family} & \shortstack{\bf \# Edges} & \shortstack{\bf Stretch Factor} & \shortstack{\bf Probe Complexity} \\ \hline\hline
\parbox[t]{2mm}{\multirow{7}{*}{\rotatebox[origin=c]{90}{\bf Prior Works}}} & \multirow{3}{*}{\cite{LRR14}} & Bounded Degree Graphs & $(1+\eps)n$ & $-$ & $\Omega(\sqrt{n})$\\
& {} & Expanders & $(1+\eps)n$ & $-$ & $O(\sqrt{n})$\\
& {} & Subexponential growth & $(1+\eps)n$ & $-$ & $O(\sqrt{n})$\\
\cline{2-6}
& \cite{levi2015quasi} & Minor-free & $(1+\eps)n$ & $\poly(\Delta,1/\eps)$ & $\poly(\Delta,1/\eps)$\\ \cline{2-6}
& \cite{levi2016local} & Minor-free & $(1+\eps)n$ & $O((\log \Delta)/\eps)$ & $\poly(\Delta,1/\eps)$\\ \cline{2-6}
& \cite{levi2017constructing} & Expansion $({1/ \log n})^{1+o(1)}$ & $(1+\eps)n$ & super-exponential in ${1/\eps}$ & super-exponential in ${1/\eps}$ \\ \cline{2-6}

& {\cite{LeviLenzen17}} & General & $(1+\eps)n$ & $O(\log^2 n \cdot \poly(\Delta/\eps))$ & $O(n^{2/3}\cdot\poly(\Delta/\eps))$\\\hline

\parbox[t]{2mm}{\multirow{4}{*}{\rotatebox[origin=c]{90}{\bf This Work}}} 
& Theorem~\ref{thm:35-main}& General  & $\widetilde{O}(n^{1+1/r})$ & $2r-1~(r\in\{2,3\})$ & $\widetilde{O}(n^{1-1/(2r)})$\\
& Theorem~\ref{thm:fivespannerk} & Min degree $O(n^{1/2-1/(2k)})$ & $\widetilde{O}(n^{1+1/k})$ & $5$ & $\widetilde{O}(n^{1-1/(2k)})$\\ 
& Theorem~\ref{thm:lowdegree} & Max degree $O(n^{1/12-\eps})$ & $\widetilde{O}(n^{1+1/k})$ & $O(k^2)$ & $\widetilde{O}(n^{1-4\epsilon})$\\
& Theorem~\ref{them:lowrbound} & General & $o(m)$ & any $k\leq n$ & $\Omega(\min\{\sqrt{n}, n^2/m\})$ \\ \hline
\end{tabular}
}
\caption[Results for graph spanners in the LCA model]{Table of results on LCAs for the spanner problem. The symbol $'-'$ indicates that the stretch is not analyzed. The input graph is a simple graph with $n$ vertices, $m$ edges, maximum degree $\Delta$, and belongs to the indicated graph family. $\widetilde{O}$ hides a factor of $\poly(\log n, k)$.}
\label{table:results}
\end{table*}

\subsubsection{LCAs for $3$ and $5$-Spanners for General Graphs}
Our first contribution is the local construction of $3$ and $5$-spanners for general graphs, while achieving the optimal trade-offs between the number of edges and the stretch factors (up to polylogarithmic factors)\footnote{Indeed, the girth conjecture of Erd\H{o}s is resolved for these stretch factors; see e.g.,\cite{wenger1991extremal}.}. In particular, our LCAs have $n^{1-\eps}$ probe complexity even when the input graph is dense with $\Delta=\Omega(n)$; note that in such a case, given a query edge $(u,v)$, the LCA should return yes or no without being able to inspect the neighbor lists $\Gamma(u)$ and $\Gamma(v)$.
In what follows we show how to manipulate the common distributed construction by \citet{baswana07} to yield LCAs for $3$-spanners and $5$-spanners with sublinear probe complexity.

\paragraph{The common distributed approach} \label{par:common3}
Most distributed spanner constructions are based on thinning the graph via clustering:
construct a random set $S$ of {\em centers} by adding each vertex to $S$ independently with some fixed probability. For each vertex $v$ sufficiently close to a center in $S$, include the edges of the shortest path connecting $v$ to its closest member $s \in S$: this induces a \emph{cluster} around each center $s\in S$, where every pair of vertices in the same cluster are connected by a short path. Then, add edges connecting pairs of neighboring clusters to ensure the desired stretch factor.

The following algorithm constructs a $3$-spanner $H \subseteq G$ with $\widetilde{O}(n^{3/2})$ edges. 
First, add to $H$ all edges incident to vertices of degree at most $\sqrt{n}$. 
Second, pick a collection $S$ of centers by sampling each vertex independently with probability $\Theta(\log n/\sqrt{n})$.
Each vertex $v$ of degree at least $\sqrt{n}$ picks a \emph{single} neighboring center $s \in S\cap \Gamma(v)$ (which exists w.h.p.) as its \emph{center}, then adds $(v,s)$ to $H$, forming a collection of $|S|= O(\sqrt{n})$ clusters (stars) around these centers. Lastly, every vertex $u$ adds only one edge to each of its neighboring clusters -- note that this last step may add edges whose endpoints are both non-centers. This results in a 3-spanner: For omitted edge $(u,v)$ in $G$, if $u$ and $v$ are in the same cluster, then they have a path of length $2$ through their shared center $s$. If $u$ and $v$ are in different clusters, an edge from $u$ to some other vertex $w$ in $v$'s cluster would have been chosen, providing the path $\langle u, w, s, v\rangle$ of desired stretch $3$ connecting $u$ and $v$, where $s$ is $v$'s center.

\paragraph{The challenge and key ideas} 
Recall that our goal is to design an LCA for $3$-spanners $H \subseteq G$ of size $\widetilde{O}(n^{3/2})$ and probe complexity of $\widetilde{O}(n^{3/4})$: the LCA is given an edge $(u,v)$ and must answer whether $(u,v) \in E(H)$. 
First, if $\deg(u)$ or $\deg(v)$ is at most $\sqrt{n}$, then the algorithm can immediately say $\YES$. This requires only two \degreeP~probes for the endpoints $u,v$. Hence, the interesting case is where both $u$ and $v$ have degrees at least $\sqrt{n}$. 

We start by sampling each vertex into the center set $S$ with probability of $p=\Theta(\log n/\sqrt{n})$, thus w.h.p.~guaranteeing that each high-degree vertex has at least one sampled neighbor. For clarity of explanation, assume that given the ID of a vertex $v$, the LCA algorithm can decide (with no further probes) whether $v$ is sampled. Upon selecting the set of centers $S$, the above mentioned distributed algorithm has two degrees of freedom (which our LCA algorithm will enjoy).
First, for a high-degree vertex $v$, there could be potentially many sampled neighbors in $S$: the distributed algorithm lets $v$ join the cluster of an arbitrarily sampled neighbor. The second degree of freedom is in connecting a high-degree vertex to neighboring clusters. In the distributed algorithm, a vertex connects to an arbitrarily chosen neighbor in each of its neighboring clusters. Since the answers of the LCA algorithm should be consistent, it is important to carefully fix these decisions to allow small probe complexity. 

\paragraph{The na\"{i}ve approach for 3-spanners and its shortcoming}
The most na\"{i}ve approach is as follows: for each $v$, traverse the list $\Gamma(v)$ in a fixed order and pick the \emph{first} neighbor that satisfies the required conditions. That is, 
a vertex joins the cluster of its \emph{first} sampled neighbor (center) and connects to its \emph{first} representative neighbor in each of its neighboring clusters. To analyze the probe complexity of such a construction, consider a query edge $(u,v)$ where $\deg(u),\deg(v)\geq \sqrt{n}$. 
By probing for the first $\sqrt{n}$ neighbors of $u$ and $v$, one can compute the cluster centers $c_u$ and $c_v$ of $u$ and $v$ with high probability. The interesting case is where $u$ and $v$ belong to different clusters. In such a case, the LCA algorithm should say $\YES$ only if $v$ is the first neighbor of $u$ that belongs to the cluster of $c_v$. 
To check if this condition holds, the algorithm should probe for each of the neighbors $w$ of $u$ that appears before $v$ in $\Gamma(u)$, and say $\NO$ if there exists such earlier neighbor $w$ that belongs to the cluster of $c_v$. 
Here, it remains to show how this cluster-membership testing procedure is implemented.

A \emph{cluster-membership test}, for a pair $\langle s,w \rangle$ with $s \in S$, must return $\YES$ iff $w$ belongs to the cluster of the center $s$. The above mentioned algorithm thus makes $O(\deg(v))$ 
cluster-membership tests for each $w$ preceding $u$ in $\Gamma(v)$ and $s = c_v$.  
Since each center is sampled with probability $p=\log n/\sqrt{n}$, the probe complexity of a single cluster-membership test is $O(\sqrt{n})$ w.h.p., leading to a total probe complexity of $O(\deg(v)\cdot \sqrt{n})$. 

\stepcounter{ideacounter}\paragraph{Idea (\Roman{ideacounter}) -- Multiple centers for efficient cluster-membership test}
The key idea in our solution is to pick the cluster centers in a way that allows answering each cluster-membership test for a pair $\langle s,w \rangle$ using a single $\neighborP$ probe! Towards this goal, we let each high-degree vertex join \emph{multiple} clusters, instead of just one. In particular, for a vertex $w$, we look at the subset $\Gamma_1(w)$ consisting of its first $\sqrt{n}$ neighbors in $\Gamma(w)$. We then let $w$ join the clusters of all sampled neighbors in $\Gamma_1(w)\cap S$. Since each vertex is a center with probability $p$, this implies that, w.h.p., $w$ joins $\Theta(p\cdot |\Gamma_1(w)|)=\Theta(\log n)$ many clusters. 
Though this approach adds a multiplicative $O(\log n)$ factor to the size of our spanner, it will pay off dramatically in terms of the probe complexity of our LCA.
In particular, this modification enables the algorithm to test cluster-membership with a single {\adjacencyP} probe: the vertex $w$ belongs to the cluster of $s$, if the index of $s$ in $w$'s neighbor-list is at most $\sqrt{n}$ (the index is returned by the \adjacencyP~probe on $u$ and $s$). This idea alone decreases the probe complexity of our LCA to $\widetilde{O}(\deg(w))$.

\stepcounter{ideacounter}\paragraph{Idea (\Roman{ideacounter}) -- Neighborhood partitioning} 
The multiple center technique above allows our LCA to handle edges adjacent to a vertex $u$ of degree at most $n^{3/4}$. For $\deg(u)>n^{3/4}$, our LCA cannot afford to look at all neighbors of $u$. 
To this end, we \emph{partition} the neighbors of $u$ into \emph{blocks} of size $n^{3/4}$ each. 
Rather than adding only one edge between $u$ to each neighboring cluster, 
we make the decision on which edges to keep for each block \emph{independently},
by scanning only the block containing $v$ and keeping $(u,v)$ if $v$ belongs to the cluster
that was not previously seen in this block.
Though this leads to an increase in the number of edges  by a factor of $\deg(u)/n^{3/4} \leq n^{1/4}$, we can now
keep the probe complexity down to $\widetilde{O}(n^{3/4})$ as we only need to scan the block containing $v$ given the query $(u,v)$ instead of $u$'s entire neighbor-list. To keep the size of the spanner small, e.g., $\widetilde{O}(n^{3/2})$, we use the fact that 
$O(n^{1/4} \log n)$ sampled vertices are enough to hit the neighborhoods of all vertices with degree more than $n^{3/4}$ with high probability.
Since for each block of size $n^{3/4}$ in the neighborhood of $u$ the algorithm adds $O(|S|)$ edges, the total number of edges added per vertex is  $O(|S|\cdot \deg(u)/n^{3/4})=\widetilde{O}(n^{3/2})$, as desired. 

\paragraph{Overview of the LCA for $5$-spanners} 
For $5$-spanners, the desired number of edges is $\widetilde{O}(n^{4/3})$. This allows us to immediately add to the spanner all edges incident to low-degree vertices $u$ with $\deg(u)=\widetilde{O}(n^{1/3})$. The common distributed construction for $5$-spanners computes $O(n^{2/3})$ clusters by sampling each center independently with probability $\Theta(\log n/n^{1/3})$. By letting each high-degree vertex (i.e., with $\deg(u)=\Omega(n^{1/3})$) join the cluster of one of its sampled neighbors,  the spanner contains a collection of $O(n^{2/3})$ (vertex-disjoint) clusters that, w.h.p., cover all high-degree vertices. Finally, each pair of neighboring clusters $C_1, C_2$ are connected by adding an edge 
$(u,v) \in (C_1 \times  C_2)\cap E$ to the spanner $H$. 
It is straightforward to verify that $H$ is a $5$-spanner of size $\widetilde{O}(n^{4/3})$. 

Designing LCAs for the $5$-spanner problem turns out to be significantly more challenging than the $3$-spanner case. The reason is that deciding whether an edge $(u,v)$ is in the $5$-spanner requires information from the {\em second neighborhoods} of $v$ and $u$, which is quite cumbersome when one cannot even read the entire neighborhood of a vertex. Our solution extends the $3$-spanner construction in two ways: some of the edges added to our $5$-spanner are between \emph{cluster} pairs, instead of edges between a vertex and a cluster as in the $3$-spanner solution. Another set of edges added to the $5$-spanner is between pairs of vertex and cluster, but unlike the $3$-spanner case, these clusters have now \emph{radius two}. 

\stepcounter{ideacounter}\paragraph{Idea (\Roman{ideacounter}) -- Cluster partitioning (bucketing)}
The standard clustering-based construction of $5$-spanners adds an edge between every pair of neighboring clusters (stars).  
This clustering-based construction cannot be readily implemented with the desired probe complexity. To see why, consider clusters centered at $s$ and $t$, containing $u$ and $v$ respectively.
A na\"{i}ve attempt spends $\deg(s) \cdot \deg(t)$ probes for vertices between these clusters, as to consistently pick a unique edge between the two clusters. 

One of our tools extends the idea of neighborhood partitioning from $3$-spanner into cluster partitioning.
Each of the $O(n^{2/3})$ clusters is partitioned into balanced {\em buckets} of size $\Theta(n^{1/3})$.\footnote{Note that each cluster may have at most one bucket of size $o(n^{1/3})$.} The algorithm then picks only one edge between any pair of neighboring buckets.  Since the number of buckets can be shown to be $\widetilde{O}(n^{2/3})$, the spanner size still remains $\widetilde{O}(n^{4/3})$.
Unlike partitioning \emph{neighbor-lists}, partitioning a \emph{cluster} requires the full knowledge of its members -- which are no longer nicely indexed in a list.
To be able to efficiently partition a clusters, the algorithm allows only vertices with degree at most $n^{5/6}$ to be chosen as cluster centers. The benefit of this restriction is that one can inspect the entire neighborhood of a center in $O(n^{5/6})$. The drawback of this approach is that it only clusters vertices that have sufficiently many neighbors (i.e., at least $n^{1/3}$) with degree less than $n^{5/6}$. The remaining vertices are handled via their high-degree neighbors (i.e., of degree at least $n^{5/6}$) as described next. 

\stepcounter{ideacounter}\paragraph{Idea (\Roman{ideacounter}) -- Representatives}
Using the neighborhood-partitioning idea from $3$-spanner, all vertices with degree at least $n^{5/6}$ can be clustered by sampling $\widetilde{O}(n^{1/6})$ cluster centers. 
By partitioning the neighborhood of each high-degree vertex into disjoint blocks each of size $\widetilde{O}(n^{5/6})$, one can construct a $3$-spanner for all edges incident to these high-degree vertices with probe complexity of $\widetilde{O}(n^{5/6})$ while using $\widetilde{O}(n^{4/3})$ edges. 
To take care of vertices of degrees less than $n^{5/6}$ that have many high-degree neighbors, we let them join the cluster of their high-degree neighbors, hence creating {\em clusters of depth $2$}.

To choose which cluster to join (in the second level), our vertex, which has many high-degree neighbors, simply chooses and connects itself to one or more high-degree neighbors, called its \emph{representatives}. To determine the representatives of a vertex $u$, we simply pick $\Theta(\log n)$ random neighbors of $u$, and w.h.p.~one of them will have high-degree, and hence is chosen as $u$'s representative.

We implement our LCA by first picking $|S| = \widetilde{O}(n^{1/6})$ centers.
Consider the query edge $(u,v)$ where $\deg(u),\deg(v) \geq n^{1/3}$ and $u$ has many high-degree neighbors. Here, $u$ has $\Theta(\log n)$ representatives, each of which has $\Theta(\log n)$ centers in $S$ w.h.p., so $u$ belongs to $O(\log^2 n)$ clusters.
As in the $3$-spanner case, we keep $(u,v)$ if $v$ is the first neighbor of $u$ in the cluster that $v$ belongs to.  We find the representatives of each neighbor of $u$ by making $O(\log n)$ probes, and for all these $\deg(u) \cdot O(\log n) = \widetilde{O}(n^{5/6})$ representatives, check if they belong to any of $v$'s $O(\log^2 n)$ clusters with $\widetilde{O}(n^{5/6})$ total probes.

\begin{theorem}[$3$ and $5$-spanners]\label{thm:35-main}
For every $n$-vertex simple undirected graph $G$, there exists an LCA for $(2r-1)$-spanners with $\widetilde{O}(n^{1+1/r})$ edges and probe complexity $\widetilde{O}(n^{1- 1/(2r)})$ for $r\in\{2,3\}$. Moreover, the algorithm only uses a seed of $O(\log^2 n)$ random bits.
\end{theorem}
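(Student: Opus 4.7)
The plan is to prove the two cases $r=2$ and $r=3$ separately, both by instantiating the clustering template of Baswana--Sen together with the four ideas already sketched in the excerpt. For a fixed random seed I will deterministically describe the spanner $H$, using first-in-adjacency-list tie-breaking throughout to guarantee that every query $(u,v)$ has a unique answer. The argument for each case then has three parts: (a) bound $|E(H)|$ by counting edges added per vertex, cluster, or bucket-pair; (b) give a local decision procedure for the query $(u,v)\in E$ and bound its probes; and (c) show that all high-probability events (every high-degree vertex has a sampled neighbor; every representative set contains a high-degree vertex; etc.) hold w.h.p.\ by Chernoff-style concentration.

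For $r=2$, I would first issue one \degreeP\ probe at each endpoint: if $\deg(u)\le \sqrt{n}$ or $\deg(v)\le\sqrt{n}$, return $\YES$. Otherwise sample $S$ from the seed with probability $p=\Theta(\log n/\sqrt{n})$ and let every vertex $w$ join the cluster of each sampled vertex inside the prefix $\Gamma_1(w)$ of length $\sqrt{n}$ of $\Gamma(w)$ (Idea I). The crucial payoff is that the cluster-membership test ``does $w$ belong to the cluster of $s$?'' becomes a single \adjacencyP\ probe, whose answer is an index that one checks against $\sqrt{n}$. For $\deg(u)\le n^{3/4}$ I keep $(u,v)$ iff $v$ is the first neighbor of $u$ in its own cluster; for $\deg(u)> n^{3/4}$ I apply Idea II and partition $\Gamma(u)$ into blocks of size $n^{3/4}$, making the first-occurrence decision only within the block containing $v$. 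Identifying the $\widetilde O(1)$ clusters of $v$ costs $\widetilde O(\sqrt{n})$ probes, scanning one block of $u$ costs $n^{3/4}$ \neighborP\ probes, and each of those candidates uses $\widetilde O(1)$ \adjacencyP\ probes, giving the desired $\widetilde O(n^{3/4})$ total. The size bound $\widetilde O(n^{3/2})$ follows from $|S|=\widetilde O(\sqrt n)$ and at most $\deg(u)/n^{3/4}$ blocks per vertex.

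For $r=3$, I sample $|S|=\widetilde O(n^{1/6})$ centers restricted to vertices of degree at most $n^{5/6}$, so that a center's full cluster can in principle be materialized and bucketed. Vertices of degree in $[n^{1/3},n^{5/6}]$ are clustered directly via Idea I at scale $n^{1/3}$, while higher-degree vertices use Idea IV: $\Theta(\log n)$ random neighbors act as representatives, producing w.h.p.\ a depth-two cluster membership. Edges whose endpoints lie in two different depth-one clusters are then handled via Idea III, partitioning each cluster into buckets of size $n^{1/3}$ and keeping a single first-occurrence edge per adjacent bucket-pair; edges incident to very-high-degree vertices are disposed of by the same block-partitioning trick from $r=2$ at scale $n^{5/6}$. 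The number of buckets is $\widetilde O(n^{2/3})$, giving $\widetilde O(n^{4/3})$ edges, and each query triggers at most $\widetilde O(n^{5/6})$ probes when locating the $O(\log^2 n)$ clusters of $v$ and scanning the relevant block or bucket at $u$. Finally, since the only randomness is a Bernoulli indicator per vertex for $S$ and $\Theta(\log n)$ uniform neighbor indices per vertex, and all concentration bounds used here survive under $O(\log n)$-wise independence, a standard $O(\log n)$-wise independent hash family provides the required $O(\log^2 n)$-bit seed. The main obstacle I foresee is the stretch analysis for $r=3$: when an omitted edge has endpoints in distinct depth-two clusters, the witness path of length at most $5$ must traverse a representative, a center, an inter-bucket edge, and possibly another representative, and I will need a careful case split on the ``type'' of each endpoint (low-degree, cluster-member via Idea I, or cluster-member via representative) to verify that the kept edges always close such a path.
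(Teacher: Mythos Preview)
Your overall plan follows the paper's, but there is a concrete error in the $r=2$ size count that also infects the $r=3$ sketch. You claim the block case has size $\widetilde O(n^{3/2})$ because ``$|S|=\widetilde O(\sqrt n)$ and at most $\deg(u)/n^{3/4}$ blocks per vertex.'' With a single center set of size $\widetilde O(\sqrt{n})$, however, each block of size $n^{3/4}$ may meet all $\widetilde O(\sqrt{n})$ clusters, so a vertex $u$ with $\deg(u)>n^{3/4}$ contributes up to $\widetilde O(\sqrt{n})\cdot \deg(u)/n^{3/4}$ kept edges; summing over $u$ gives $\widetilde O(m/n^{1/4})$, which is $\widetilde O(n^{7/4})$ on dense graphs. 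The paper repairs this by sampling a \emph{second}, sparser center set $S'$ of size $\widetilde O(n^{1/4})$ (rate $\Theta(\log n/n^{3/4})$) specifically for the block construction: $S'$ still hits every first-$n^{3/4}$-prefix w.h.p., but now each block contributes only $|S'|=\widetilde O(n^{1/4})$ edges, restoring the $\widetilde O(n^{3/2})$ bound. Your $r=2$ proposal is missing this second set.

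For $r=3$ the center sets are similarly tangled. The $\widetilde O(n^{1/6})$ centers you describe are the paper's $S'$---sampled from \emph{all} vertices and used both for the $3$-spanner on $\SuperDegEdges$ and for the depth-$2$ clusters reached via representatives; they are not the set that gets bucketed. Bucketing (Idea~III) uses a \emph{separate} set $S$ of size $\widetilde O(n^{2/3})$, sampled at rate $\Theta(\log n/n^{1/3})$ and restricted to degree at most $n^{5/6}$ so that a center's full cluster can be listed. That degree restriction is precisely why ``clustered directly via Idea~I at scale $n^{1/3}$'' cannot cover every vertex in $[n^{1/3},n^{5/6}]$: if all of a vertex's first $n^{1/3}$ neighbors are super-high-degree, it has no center in $S$. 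The paper therefore splits $V_{[n^{1/3},n^{5/6}]}$ into \emph{deserted} vertices (at least half of the first $n^{1/3}$ neighbors have degree $\le n^{5/6}$; clustered via $S$ and handled by buckets) and \emph{crowded} vertices (routed through representatives into $S'$-clusters). Your sketch has all four ingredients present but wired to the wrong center sets and lacking this deserted/crowded split; once you separate $S$ and $S'$ at both $r=2$ and $r=3$ and add the split, the stretch case analysis you flag as the main obstacle becomes short---indeed $\RepEdges$ already achieves stretch~$3$, not~$5$, in the paper's construction.
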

In fact, if $G$ has \emph{minimum} degree $\omega(n^{1/3})$, we may apply the $5$-spanner construction (with modified parameters) to obtain $5$-spanners with even smaller number of edges as indicated in Table~\ref{table:results} (Theorem~\ref{thm:fivespannerk}): this minimum degree assumption indeed allows even sparser spanners, bypassing the girth conjecture that holds for \emph{general} graphs.

\subsubsection{LCA for $O(k^2)$-spanners}  
Our second contribution is the local construction of $O(k^2)$-spanners with $O(n^{1+1/k})$ edges for any $k \geq 1$, which has sub-linear probe complexity for graphs of maximum degree $\Delta = O(n^{1/12-\eps})$. 
Our approach improves upon and extends the recent work of \citet{LeviLenzen17}.
The  work of \cite{LeviLenzen17} aims at locally constructing a spanning subgraph with $O(n)$ edges, but the stretch parameter of their subgraph might be as large as $O(\poly(\Delta)\log^2 n)$. In addition, this construction requires a random seed of polynomial size.  
In our construction, we reduce the stretch parameter of the constructed subgraph to $O(k^2)$, independent of both $n$ and $\Delta$, while using only $\widetilde{O}(n^{1+1/k})$ edges. 
In addition, we implement our randomized constructions using $\poly(\log n)$ independent random bits, whereas \cite{LeviLenzen17} uses $\poly(n)$ bits. We remark that for the LCAs with large stretch parameter considered in \cite{LeviLenzen17}, our techniques can still be applied to exponentially reduce the required amount of random bits, and save a factor of $\Delta$ in the probe complexity. 

\begin{theorem}[$O(k^2)$-spanners]\label{thm:lowdegree}
For every integer $k\geq 1$ and every $n$-vertex simple undirected graph $G$ with maximum degree $\Delta$, there exists a (randomized) LCA for $O(k^2)$-spanner with $\widetilde{O}(n^{1+1/k})$ edges and probe complexity $\widetilde{O}(\Delta^4 n^{2/3})$. Moreover, the algorithm only uses $O(\log^2 n)$ random bits.
\end{theorem}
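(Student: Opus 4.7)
The plan is to implement a local version of Baswana--Sen's hierarchical-clustering construction of $(2k-1)$-spanners, while adapting the partial-BFS machinery of Levi--Lenzen to keep probe complexity sublinear in the dense regime and to cut the random seed down to $O(\log^2 n)$ bits. The stretch loss from $2k-1$ to $O(k^2)$ will pay for making cluster membership a purely local (neighborhood-based) property, and the $\widetilde{O}(\Delta^4 n^{2/3})$ probe bound will come from truncating BFS explorations once a concentration-based size budget is exceeded.

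First, set up $k$ nested center levels $S_0 = V \supseteq S_1 \supseteq \cdots \supseteq S_k$: assign every vertex $v$ an $O(\log n)$-bit hashed rank $\rho(v)$ drawn from a $\Theta(\log n)$-wise independent family (expressible with $O(\log^2 n)$ seed bits), and declare $v \in S_i$ iff $\rho(v) < n^{-i/k}$, so $|S_i| \approx n^{1-i/k}$. Define the level-$i$ cluster of $v$ to be the depth-$i$ ball around the lex-smallest element of $S_i$ within distance $i$ of $v$, if any. The spanner $H$ contains $(u,v) \in E$ iff for some level $i \le k$ the endpoints lie in distinct level-$(i-1)$ clusters and at least one of them is not clustered at level $i$. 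With sufficiently independent ranks, the standard Baswana--Sen size analysis yields $|E(H)| = \widetilde{O}(n^{1+1/k})$ in expectation, while the fact that cluster membership is now determined by shortest paths (rather than by a greedy global cluster-joining process) costs an extra $k$ factor in the stretch, giving $O(k^2)$.

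Second, the LCA answers a query $(u,v)$ by running, from each endpoint, a bounded BFS of depth at most $k$ that, at level $i$, returns the lex-smallest vertex of $S_i$ in the depth-$i$ ball, or the special symbol $\bot$ if none is found within a probe budget of $\widetilde{O}(\Delta^4 n^{2/3})$. If the budget is exceeded we declare the endpoint \emph{unclustered} at that level, which by the spanner rule can only add more edges and cannot violate correctness. Comparing the level-$(i-1)$ and level-$i$ cluster identifiers of $u$ and $v$ across all $i$ is enough to decide whether $(u,v)\in H$. Derandomization follows the Levi--Lenzen template: Chernoff-type concentration for the relevant ``depth-$i$ ball contains a member of $S_i$'' events goes through under $\Theta(\log n)$-wise independence, so the short seed suffices.

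The main obstacle is the probe-complexity bound. A literal depth-$k$ BFS branches as $\Delta^k$, far above the $\widetilde{O}(\Delta^4 n^{2/3})$ target, so we must argue that with high probability the BFS from any fixed start collapses at every level: either a member of $S_i$ is encountered within $O(\log n)$ hops, or the explored region becomes so dense that within $\widetilde{O}(n^{2/3})$ probes a vertex of $S_i$ must have been visited, by a balls-and-bins argument exploiting the $n^{-i/k}$ sampling density. Balancing the sampling probabilities, the BFS depths, and the $\Delta^4$ slack so that this survives bounded independence \emph{simultaneously} for all $k$ levels is the most delicate step; the size, stretch, and seed-length bounds then follow from routine adaptations of Baswana--Sen together with standard facts about $\Theta(\log n)$-wise independent hashing.
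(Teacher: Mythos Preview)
Your plan has a genuine gap in the probe-complexity step, and the fallback you propose then breaks the size bound.

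Consider any level $i$ with $i > 2k/3$. Then $|S_i| \approx n^{1-i/k} < n^{1/3}$, so a vertex sampled into $S_i$ has density below $n^{-2/3}$. Your balls-and-bins argument says that once the BFS has visited $\widetilde{O}(n^{2/3})$ vertices it must have hit $S_i$; but the expected number of $S_i$-members among $n^{2/3}$ visited vertices is $n^{2/3}\cdot n^{-i/k} = n^{2/3-i/k}=o(1)$ in this range, so with high probability you hit \emph{none}. Thus for every level beyond $2k/3$ the truncated BFS will typically exhaust its budget without finding a center, and you will declare the endpoint ``unclustered'' at that level. The Baswana--Sen size analysis, however, uses crucially that a vertex genuinely unclustered at level $i$ has at most $O(n^{1/k})$ neighboring level-$(i{-}1)$ clusters (otherwise one would have been sampled). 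When ``unclustered'' is an artifact of a probe budget rather than an actual absence of nearby centers, this bound is simply false, and your spanner rule can force $\Theta(\Delta)$ edges per such vertex per level. So the sentence ``can only add more edges and cannot violate correctness'' is right about stretch and wrong about size: it is precisely the $\widetilde{O}(n^{1+1/k})$ bound that collapses. (Separately, your rule as written keeps \emph{every} edge to a distinct level-$(i{-}1)$ cluster rather than one representative per cluster, which already loses the size bound even without truncation.)

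The paper avoids this entirely by \emph{not} running a $k$-level hierarchy. It samples a \emph{single} center set $S$ at density $\widetilde{\Theta}(n^{-1/3})$ and splits vertices into sparse (small $k$-ball, so a distributed $(2k{-}1)$-spanner can be simulated in $\widetilde{O}(\Delta^2 n^{2/3})$ probes) and dense (close to some center). Dense vertices are grouped into Voronoi cells of radius $\le k$, each cell is refined into $O(n^{1/3})$-size clusters, and the cells are connected in a ``supergraph'' by a three-rule scheme: mark $\widetilde{O}(n^{1/3})$ cells, connect marked clusters to all neighbors, and for each (cluster, marked cluster) pair add edges to the $\widetilde{O}(n^{1/k})$ lowest-rank common neighboring cells. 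The $O(k)$ bound on the supergraph stretch comes from an inductive rank-decrease argument (each step drops the rank by a factor $n^{1/k}$), and this is where the $O(\log^2 n)$-bit seed enters: the rank is a concatenation of $k$ blocks, each an $O(\log n)$-wise independent hash, so that step $i$ of the induction depends only on block $i$. The $O(k^2)$ overall stretch is then (radius $\le k$ per cell) $\times$ ($O(k)$ cells on the path). None of this requires locating members of a set sparser than $n^{-1/3}$, which is exactly what your hierarchical plan cannot avoid.
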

The high level structure is as in~\cite{LeviLenzen17}: for a given stretch parameter $k$, partition the edges in $G$ into the \emph{sparse} set $\Esparse$ and the \emph{dense} set $\Edense$. Roughly speaking, the sparse set $\Esparse$ only consists of edges $(u,v)$ for which the $k$-neighborhood in $G$ of either $u$ or $v$ contains at most $O(n^{2/3})$ vertices.
For this sparse region in the graph, we can simulate a standard distributed algorithm for spanners~\cite{baswana07,Censor-HillelPS16} (using only a poly-logarithmic number of random bits), with small probe complexity. This yields an LCA handling the sparse edges with $O(\Delta^2 n^{2/3})$ probe complexity. 

To take care of the dense edges, we sample a collection of $O(n^{2/3}\log n)$ centers and partition the (dense) vertices into \emph{Voronoi cells} around these centers.%

The main challenge is in connecting the Voronoi cells, keeping in mind that taking an edge between every pair of cells adds too many edges to the spanner. 
To get around it, the main contribution of~\cite{LeviLenzen17} was in designing a set of rules for connecting bounded-size sub-structures in Voronoi cells, called {\em clusters}. The high-level description of the rules are as follows\footnote{Here, we state a simplified version of the rules. In particular, the rules are expressed in terms of clusters whose exact definitions are skipped for now. Refer to the longer version of our paper for the precise definitions of the rules.}: \emph{mark} a random subset of $O(n^{1/3}\log n)$ Voronoi cells (among the $n^{2/3}$ Voronoi cells), then connect\footnote{We \emph{connect} two vertex sets by adding the unique lexicographically-first edge between the two vertex sets (if any exists) based on the vertex $\ID$s of the endpoints.} them according to the following rules using $\widetilde{O}(n)$ edges each. Rule (1): connect every marked Voronoi cells to each of its neighboring Voronoi cells. Rule (2): if a Voronoi cell has no neighboring marked Voronoi cells, then connect it to all its neighboring Voronoi cells as well. 
Rule (3): For each pair of (not necessarily adjacent) Voronoi cell $\Vn{a}$ and marked Voronoi cell $\Vn{c}$ sharing common neighboring Voronoi cells $\Gamma(\Vn{a}) \cap \Gamma(\Vn{c})$, keep an edge from $\Vn{a}$ to a single Voronoi cell $\Vn{b^*} \in \Gamma(\Vn{a}) \cap \Gamma(\Vn{c})$ (i.e., $\Vn{b^*}$ has the minimum $\ID$ in $\Gamma(\Vn{a}) \cap \Gamma(\Vn{c})$). 
This last rule handles the edges of (unmarked) Voronoi cells that have some neighboring marked Voronoi cell. 

\stepcounter{ideacounter}\paragraph{Idea (\Roman{ideacounter}) -- Establishing the $O(k^2)$ stretch guarantee} In our implementation, the radius of each  Voronoi cell is $O(k)$ (as opposed to $O(\Delta \log n)$ in~\cite{LeviLenzen17}). Thus, it suffices to show that the spanner path from Voronoi cell supervertices $\Vn{a}$ to $\Vn{b}$ only visits $O(k)$ other Voronoi cells. To this end, we impose a random ordering of the Voronoi cells, by assigning them distinct random \emph{ranks}. We then make the following modification to Rule (3): add an edge from $\Vn{a}$ to $\Vn{b}$ if there exists a marked Voronoi cell $\Vn{c}$ such that the rank $r(\Vn{b})$ of $\Vn{b}$ is among the $O(n^{1/k}\log n)$ lowest ranks in $\Gamma(\Vn{a}) \cap \Gamma(\Vn{c})$, restricted to those discovered by the LCA. 
This modified rule allows us to extend the inductive connectivity argument of \cite{LeviLenzen17} to show that every pair of adjacent cells are connected by a path that goes through $O(k)$ cells -- since each cell has radius $O(k)$, the final stretch is $O(k^2)$. 

\stepcounter{ideacounter}\paragraph{Idea (\Roman{ideacounter}) -- Graph connectivity with bounded independence}
One of our key technical contributions is in showing that one can implement the above randomized random rank assignment using small number of random bits. We show that the ranks of Voronoi cells can be computed using $T=\Theta(k)$ hash functions $h_1,\cdots, h_{T}$ chosen uniformly at random form a family of $O(\log n)$-wise independent hash functions of the form $\{0,1\}^{\log n} \rightarrow \{0,1\}^{O((\log n) /k)}$.
We define our rank function as a concatenation of $h_i$'s on the $\ID$ of the Voronoi cell's center: for the Voronoi cell centered at $v$, $r(v) = h_1(\ID(v))\circ \ldots \circ h_{T}(\ID(v))$.  
We then carefully adopt the inductive stretch argument to this randomized rank assignment with limited independence so that in the $i^{\textrm{th}}$ step, our analysis only relies on the hash function $h_i$.

\subsubsection{Lower Bounds}
To establish the lower bound, we construct two distributions over undirected $d$-regular graph instances that contain a designated edge $e$. For graphs in the first family, it holds that after removing $e$, w.h.p., they remain connected while in the second family, removing $e$ disconnects the endpoints of $e$ and leave them in separate connected components.
We show that for the edge $e$, any LCA that makes $o(\min\{\sqrt{n},n/d\}) = o(\min\{\sqrt{n},n^2/m\})$ probes can only distinguish whether the underlying graph is from the first family or the second family with probability $1/2+o(1)$. 

Our approach mainly follows from the analysis of \citet{kaufman2004tight}, on the lower bound construction of~\cite{LRR14}. While \cite{kaufman2004tight} studies a rather different problem of bipartiteness testing, we consider similar probe types and obtain a similar lower bound as those of \cite{kaufman2004tight}.
On the other hand, the construction of~\cite{LRR14} shows the probe complexity of $\Omega(\sqrt{n})$ for LCAs for spanning graphs that only use \neighborP~probes, not \adjacencyP~probes.

\begin{theorem}[Lower Bound]
\label{them:lowrbound}
Any local randomized LCA that computes, with success probability at least $2/3$, a spanner of the simple undirected $m$-edge input graph $G$ with $o(m)$ edges, has probe complexity $\Omega(\min\{\sqrt{n}, n^2/m\})$.
\end{theorem}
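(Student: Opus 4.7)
The plan is to build two distributions $\mathcal{D}_1, \mathcal{D}_2$ over $d$-regular $n$-vertex graphs (with $d = 2m/n$), each carrying a designated edge $e$, so that any LCA returning spanners of $o(m)$ edges must give very different YES/NO answers on $e$ across the two models, while no $o(\min\{\sqrt{n}, n/d\})$-probe algorithm can distinguish them. Under $\mathcal{D}_1$, I would sample $G$ as a uniformly random $d$-regular graph on $V$ and let $e$ be a uniformly random edge. Under $\mathcal{D}_2$, I would sample two independent uniformly random $d$-regular graphs on disjoint vertex sets $V_1, V_2$ of size $n/2$, pick a uniformly random edge $(u,u')$ inside the first and $(v,v')$ inside the second, delete these two edges, and insert the two cross-edges $(u,v)$ and $(u',v')$; designate $e=(u,v)$. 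The resulting $G$ is still $d$-regular, and the only edges joining $V_1$ to $V_2$ are the two cross-edges.

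Next, I would establish an $\Omega(1)$ gap in the YES-probability. Under $\mathcal{D}_2$, any spanning subgraph must contain at least one of the two cross-edges to preserve connectivity between $V_1$ and $V_2$, hence $\Pr[(u,v)\in H] + \Pr[(u',v')\in H] \geq 1$, and by symmetry of the two bridges we obtain $\Pr[e \in H] \geq 1/2$. Under $\mathcal{D}_1$, with probability $\geq 2/3$ the spanner $H$ has $o(m)$ edges, so averaging over the uniformly chosen $e$, the LCA answers YES on $e$ with probability at most $o(1)+1/3$. Any LCA succeeding on both models with probability $\geq 2/3$ therefore exhibits an $\Omega(1)$ gap between its output distributions on $e$ under $\mathcal{D}_1$ and $\mathcal{D}_2$, which lower-bounds the total variation distance between its probe transcripts (the sequences of queries and responses seen by the algorithm) in the two models.

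To bound this TV distance, I would expose $G$ lazily through a configuration-model coupling, answering each \neighborP, \degreeP, or \adjacencyP probe by matching a newly revealed half-edge uniformly among the unmatched half-edges (restricted to the correct side of the partition under $\mathcal{D}_2$). Let $\mathsf{Bad}$ be the event that, within the first $q$ probes, either (i) a revealed half-edge is matched to a half-edge at an already-discovered vertex (a ``collision''), or (ii) under $\mathcal{D}_2$ the transcript exposes the second cross-edge $(u',v')$ -- for instance a probe visits $u'$ or $v'$, or an \adjacencyP query happens to hit a cross-edge. Following the analysis of Kaufman, Krivelevich, and Ron for random regular graphs, a birthday-paradox and union-bound argument should yield $\Pr[\mathsf{Bad}] = O(q^2/n) + O(qd/n)$, and conditioned on $\neg\mathsf{Bad}$ the transcripts under $\mathcal{D}_1$ and $\mathcal{D}_2$ are identically distributed, since each side's exposed matching is uniform and no cross-side information has leaked.

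Combining these, for $q = o(\min\{\sqrt{n}, n/d\}) = o(\min\{\sqrt{n}, n^2/m\})$ we would obtain $\Pr[\mathsf{Bad}] = o(1)$, so the LCA's output on $e$ has TV distance $o(1)$ between $\mathcal{D}_1$ and $\mathcal{D}_2$, contradicting the $\Omega(1)$ gap from the second paragraph. The hardest step will be item (ii) in the presence of \adjacencyP probes: such a probe can target arbitrary vertex pairs and in principle directly certify or refute a cross-edge. Adapting the Kaufman--Krivelevich--Ron argument, I would show that after $q$ probes the LCA's view contains only $O(q)$ vertices whose identities are meaningfully correlated with the hidden bridge, so the probability any of its adjacency queries strikes one of the $O(1)$ cross-edges is $O(qd/n)$; coupling this cleanly with the standard $O(q^2/n)$ collision bound for \neighborP probes is the crux of the lower bound.
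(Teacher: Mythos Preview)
Your overall strategy mirrors the paper's closely---two hard distributions over $d$-regular graphs, a designated edge whose removal disconnects in one model but not the other, and indistinguishability via a configuration-model coupling in the style of Kaufman--Krivelevich--Ron. The indistinguishability portion (your ``$\mathsf{Bad}$'' event and the $O(q^2/n)+O(qd/n)$ bound) is essentially the paper's Lemmas~\ref{lem:no-edge}--\ref{lem:distance}.

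The flaw is in your gap argument, and it stems from using \emph{two} cross-edges in $\mathcal{D}_2$. You write $\Pr[(u,v)\in H]+\Pr[(u',v')\in H]\geq 1$, but the LCA is only required to succeed with probability $2/3$, so the correct inequality is $\Pr[(u,v)\in H\text{ or }(u',v')\in H]\geq 2/3$, which together with symmetry only gives $\Pr[e\in H\mid \mathcal{D}_2]\geq 1/3$. Since your $\mathcal{D}_1$ bound is $\Pr[e\in H\mid \mathcal{D}_1]\leq 1/3+o(1)$, the gap is $-o(1)$ and no contradiction follows. This is exactly the degenerate threshold: with two bridges and success probability $2/3$ the argument collapses.

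The paper avoids this by making $(x,a,y,b)$ the \emph{unique} bridge between the two halves in $\mathcal{D}^-$ (handling the parity by taking $n\equiv 2\pmod 4$ and odd $d$, so that each half-table has an even number of free cells after removing the single bridge endpoint). Then every valid spanner must contain this edge, yielding $\Pr[e\in H\mid \mathcal{D}^-]\geq 2/3$ directly. By indistinguishability, a deterministic algorithm (after Yao) answers \YES\ on $(x,a,y,b)$ for strictly more than half of the compatible instances; summing this over all quadruples $(x,a,y,b)$ shows that over a uniform instance the expected spanner size exceeds $m/2$, contradicting $o(m)$. The fix to your construction is thus to drop the second cross-edge and absorb the parity constraint---after that, your outline coincides with the paper's proof.
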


\subsection{Discussion}
We study LCAs for spanners and provide new tools for dealing with large degrees in the local model. We believe these tools should pave the way toward the design of new LCAs for dense graphs. 
We leave a number of remaining open questions, perhaps the most compelling of which is:  Can we provide for \emph{general} graphs, an LCA for $(2k-1)$ spanners, $\widetilde{O}(n^{1+1/k})$ edges and probe complexity $\widetilde{O}(n^{1-1/(2k)})$? Our tools already solve the problems in the dense regime or in the sparse regime\footnote{Up to having stretch of $O(k^2)$ in the latter. Refer to Table~\ref{table:results} for more details on our results in those regimes.}, but there is still an unknown regime to be explored. 

Aside for the LCA setting, our constructions raise some interesting thoughts regarding the notion of optimally in graph spanners. It is folklore to believe that with a budget of $n^{1+1/k}$ edges for our spanner, the best stretch that one can obtain is $(2k-1)$.  However, a deeper look in the girth conjecture of Erd\H{o}s reveals that this tightness holds only when the degrees are of the edge endpoints are at most $c\cdot n^{1/k}$. If one does not care about constant factors in the spanner size, then we can just pick these \emph{tight} edges into the spanner and have a stretch $1$ for them. We then ask: for a given budget of $n^{1+1/k}$, what is the best stretch that can be obtained for an edge $(u,v)$? As we see in this paper, once the degrees of $u$ or $v$ are high, a stretch much better than $2k-1$ can be provided. It would be interesting to further understand the tradeoff between stretch, spanner size and the density of the input graph.

\subsection{Model Definition and Preliminaries}
\paragraph{Graph notation}
Throughout, we consider simple unweighted undirected graphs $G = (V, E)$ on $n = |V|$ vertices and $m = |E|$ edges. 
Each vertex $v$ is labeled by a unique $O(\log n)$-bit value $\ID(v)$\footnote{We do \emph{not} require $\ID$s to be a bijection $V\rightarrow[n]$ as in other LCA papers.}. 
For $u \in V$, let $\Gamma(u,G) = \{v: (u,v) \in E\}$ be the neighbors of $u$, $\deg(u,G) = |\Gamma(u,G)|$ be its degree, and define $\Gamma^+(u,G) = \Gamma(u,G) \cup \{u\}$. Denote $V_{I} = \{v \in V: \deg(v,G) \in I\}$ where $I$ is an interval. For $u,v \in V$, let $\dist(u,v,G)$ be the shortest-path distance between $u$ and $v$ in $G$. Let $\Gamma^{k}(u,G) = \{v: \dist(u,v,G) \leq k\}$ be the $k^\textrm{th}$-neighborhood of $u$, and denote its size $\deg_k(u,G) = |\Gamma^{k}(u,G)|$. For subsets $V_1,V_2 \subseteq V$, let $E(V_1,V_2)=E \cap (V_1 \times V_2)$.
The parameter $G$ may be omitted for the input graph.

We assume that the input graph has an adjacency list representation: each neighbor set has a fixed ordering, $\Gamma(u)=\{v'_{1}, \ldots, v'_{\deg(u)}\}$; this ordering may be arbitrarily (e.g., not necessarily sorted by vertex $\ID$s). Many of the algorithms in this paper are based on partitioning the neighbor-list into balanced-size blocks. For $\Delta \in [n]$ and $u \in V$ such that $\deg(u) \geq \Delta$, let $\Gamma_{\Delta,1}(u), \ldots, \Gamma_{\Delta,\Theta(\deg(u)/\Delta)}(u)$ be blocks of neighbors obtained by partitioning $\Gamma(u)$ into consecutive parts. Each block is of size $\Delta$, except possibly for the last block that is allowed to contain up to $2\Delta$ vertices.

\paragraph{Local Computation Algorithms}
We adopt the definition of LCAs by Rubinfeld et al.~\cite{RubinfeldTVX11}. A local algorithm has access to the \emph{adjacency list oracle} $\mathcal{O}^G$ which provides answers to the following probes (in a single step):
\begin{compactitem}
\item{} \textbf{\neighborP~probes}: Given a vertex $v \in V$ and an index $i$, the $i^\textrm{th}$ neighbor of $v$ is returned if $i \leq \deg(v)$. Otherwise, $\bot$ is returned. The orderings of neighbor sets are fixed in advance, but can be arbitrary.
\item{} \textbf{\degreeP~probes}: Given a vertex $v \in V$, return $\deg(v)$. This probe type is defined for convenience, and can alternatively be implemented via a binary search using $O(\log n)$ \neighborP~probes.
\item{} \textbf{\adjacencyP~probes}: Given an ordered pair $\langle u,v \rangle$, if $v \in \Gamma(u)$ then the index $i$ such that $v$ is the $i^\textrm{th}$ neighbor of $u$. Otherwise, $\bot$ is returned.
\end{compactitem}
\begin{definition}[LCA for Graph Spanners]
An \emph{LCA} $\mathcal{A}$ for graph {\em spanners} is a (randomized) algorithm
with the following properties. $\mathcal{A}$ has access to the adjacency list oracle $\mathcal{O}^G$ of the
input graph $G$, a tape of random bits, and local read-write computation memory. 
When given an input (query) edge $(u,v) \in E$, $\mathcal{A}$ accesses $\mathcal{O}^G$ by making probes, then returns \YES\ if $(u,v)$ is in the spanner $H$, or returns \NO\ otherwise.
This answer must only depend on the query $(u,v)$, the graph $G$, and the random bits. For a fixed tape of random bits, the answers given by $\mathcal{A}$ to all possible edge queries, must be consistent with one particular sparse spanner.
\end{definition}

The main complexity measures of the LCA for graph spanners are the size and stretch of the output spanner, as well as the probe complexity of the LCA, defined as the maximum number of probes that the algorithm makes on $\mathcal{O}^G$ to return an answer for a single input edge.
Informally speaking, imagine $m$ instances of the same LCA, each of which is given an edge of $G$ as a query, while the \emph{shared} random tape is broadcasted to all. Each instance decides if its query edge is in the subgraph by making probes to $\mathcal{O}^G$ and inspecting the random tape, but may not communicate with one another by any means. The LCA succeeds for the input graph $G$ and the random tape if the collectively-constructed subgraph is a desired spanner.
All the algorithms in this paper are randomized and, for any input graph, succeed with high probability $1-1/n^c$ over the random tape.

\paragraph{Paper Organization}
In Section~\ref{sec:three} and~\ref{sec:5spanner} we describe our results for $3$ and $5$-spanners in general graphs.
For simplicity, we first describe all our randomized algorithms as using full independence, then in Section \ref{APPEND:bounds}, we explain how these algorithms can be implemented using a seed of poly-logarithmic number of random bits). Next, in Section~\ref{sec:fullsparse} we show the LCA of the $O(k^2)$-spanners. Finally, in Section~\ref{sec:lowerbound}, we provide a lower bound result for a
simpler task of computing a spanning subgraph with the specified probes.

\paragraph{Clarification}
Throughout we use the term ``spanner construction'' when describing how to construct our spanners. These construction algorithms are used only to define the unique spanner, based on which the LCA makes its decisions: we never construct the full, global spanner at any point.

\section{LCA for $3$-Spanners}\label{sec:three}
In this section, we present the $3$-spanner LCA with probe complexity of $\widetilde{O}(n^{3/4})$. We begin in Section~\ref{sec:3cat} by establishing some observations that allow us to ``take care'' of different types of edges separately based on the degrees of their endpoints.
In Section~\ref{sec:3high}-\ref{sec:3super} we provide constructions that take care of each type of edges; the analysis of stretch, probe complexity and spanner size for each case is included in their respective sections.
We establish our final LCA for 3-spanners in Section~\ref{sec:3final}.

\subsection{Edge classification}\label{sec:3cat}

\begin{definition}[Subgraphs taking care of edges]\label{def:takecare}
For stretch parameter $k$ and set of edges $E' \subseteq E$, we say that the subgraph $H' \subseteq G$ \emph{takes care} of $E'$ if for every $(u,v) \in E'$, $\dist(u,v,H')\leq k$.
\end{definition}

Observe that if we have a collection of subgraphs $H_i$'s such that every edge in $(u,v)\in E$ is taken care by at least one $H_i$, then the union $H$ of the $H_i$'s constitutes a $k$-spanner for $G$.

\begin{observation}[Spanner construction by combining subgraphs]\label{take-care-lca}
For a collection of subsets $E_1, \ldots, E_\ell\subseteq E$ where $\cup_{i\in[\ell]} E_i = E$, if $H_i$ is a subgraph of $G$ that takes care of $E_i$, then $H = \cup_{i\in[\ell]} H_i$ is a $k$-spanner of $G$. Further, if we have an LCA $\mathcal{A}_i$ for computing each $H_i$ (i.e., deciding whether the query edge $(u,v) \in H_i$ and reporting \YES~or \NO~accordingly), we may construct a final LCA that runs every $\mathcal{A}_i$ and answer \YES~precisely when at least one of them does so. The performance of our overall LCA (number of edges, probes, or random bits) can then be bounded by the respective sum over that of $\mathcal{A}_i$'s.
\end{observation}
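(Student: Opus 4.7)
My plan is to prove the observation in two essentially independent pieces: first the structural $k$-spanner property of $H = \bigcup_{i\in[\ell]} H_i$, and then the algorithmic compositional statement about the LCAs. Both are direct consequences of the relevant definitions (of ``takes care of'' in Definition~\ref{def:takecare}, of a $k$-spanner, and of an LCA), so the proof will be quite short and I would not expect any real obstacle.

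First I would verify that $H = \bigcup_{i\in[\ell]} H_i \subseteq G$ is a $k$-spanner of $G$. Pick an arbitrary edge $(u,v) \in E$. Since $E = \bigcup_{i\in[\ell]} E_i$, there is some index $i$ with $(u,v) \in E_i$; by hypothesis $H_i$ takes care of $E_i$, so $\dist(u,v,H_i) \leq k$. Any path realizing this distance in $H_i$ also lies in $H$ because $H_i \subseteq H$, hence $\dist(u,v,H) \leq k$. The containment $H \subseteq G$ is immediate from $H_i \subseteq G$ for every $i$, and so the $k$-spanner property follows.

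Next I would construct the composite LCA $\mathcal{A}$ explicitly. On query edge $(u,v)$, $\mathcal{A}$ simulates each $\mathcal{A}_i$ in turn on the same query, giving each $\mathcal{A}_i$ its own disjoint slice of the shared random tape, and reports \YES\ iff at least one of the $\mathcal{A}_i$ reports \YES. Because for any fixed tape each $\mathcal{A}_i$ is consistent with a single subgraph $H_i$, the collection of answers produced by $\mathcal{A}$ over all queries is consistent with $H = \bigcup_i H_i$. The probe count of $\mathcal{A}$ on any query is at most $\sum_i$ (probes of $\mathcal{A}_i$); likewise the random-bit requirement of $\mathcal{A}$ is at most the sum of those of the $\mathcal{A}_i$, and the total number of edges in $H$ is at most $\sum_i |E(H_i)|$.

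The only mild point of care — and the place where I would be explicit rather than wave hands — is randomness accounting: to preserve each $\mathcal{A}_i$'s consistency across queries, $\mathcal{A}$ must feed each $\mathcal{A}_i$ the same fixed, dedicated slice of the shared random tape on every query. Given this, a trivial union bound over the $\ell$ subroutines turns the individual high-probability success guarantees of the $\mathcal{A}_i$ into a high-probability success guarantee for $\mathcal{A}$, completing the proof.
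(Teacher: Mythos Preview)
Your proposal is correct and matches the paper's treatment: the paper states this as an observation without a formal proof, offering only the one-line remark before the statement and the note afterward that each $\mathcal{A}_i$ must be invoked on every query since $H_i$ may contain edges outside $E_i$. Your write-up is simply a careful elaboration of this self-evident fact, and the extra care you take with randomness (dedicated slices per $\mathcal{A}_i$) is a valid way to ensure per-subroutine consistency, even though the paper does not spell this out.
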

Note that $H_i$ may contain edges of $E$ that are not in $E_i$, thus it is necessary that the overall LCA invokes every $\mathcal{A}_i$ even if $\mathcal{A}_i$ does not take care of the query edge.

\paragraph{Graph partitioning}
A vertex $v$ is low-degree if $\deg(v)\leq \sqrt{n}$, it is high-degree if $\deg(v)\geq \sqrt{n}$ and it is super-high degree if $\deg(v)\geq n^{3/4}$. Our LCA for $3$-spanner assigns each edge of $E$ into one or more of the subsets $\LowDegEdges$, $\HighDegEdges$, or $\SuperDegEdges$ based on the degrees of its endpoints, where
$$\LowDegEdges=\{(u,v) \in E \mid \min\{\deg(u),\deg(v)\}\leq \sqrt{n}\}, $$
$$\HighDegEdges=\{(u,v) \in E \mid \sqrt{n}< \min\{\deg(u),\deg(v)\}\leq n^{3/4}\},\mbox{~and~}\SuperDegEdges=E \setminus (\LowDegEdges \cup \HighDegEdges).$$

Because vertices of degree at most $\sqrt{n}$ have $O(n\cdot \sqrt{n}) = O(n^{3/2})$ incident edges in total, we may afford to keep all these edges, letting $\LowDegSpanner = (V,\LowDegEdges)$. Thus, an LCA simply needs to check the degrees of both endpoints (via \degreeP~probes), and answer \YES~precisely when both (or in fact, even one) have degrees at most $\sqrt{n}$. From now on, assume that $\deg(u),\deg(v)\geq \sqrt{n}$.

\subsection{$3$-spanner for the edges $\HighDegEdges$}\label{sec:3high} 
We pick a random \emph{center} set $S$ of size $O(\sqrt{n} \log n)$ by sampling vertex $v \in V$ into $S$ independently with probability $p=\Theta((\log n)/\sqrt{n})$. For now, we assume that given an $\ID$ of a vertex $v$, we can decide in $O(1)$ time if $v \in S$. At the end of the section, we describe how to implement this using a seed of $O(\log n)$ random bits. 
For each endpoint $v$ of  $\HighDegEdges$, let $S(v)=\Gamma'(v)\cap S$ where $\Gamma'(v)$ is the set of the first $\sqrt{n}$ neighbors of $v$ in $\Gamma(v)$. By Chernoff bound we have that $|S(v)|=\Theta(\log n)$ (and in particular, $S(v)$ is non-empty). We call $S(v)$ the \emph{multiple-center} set of $v$. 
The algorithm adds to $\HighDegSpanner$ the edges $(v,s)$ connecting $v$ to each of its centers $s \in S(v)$. This adds a total of $O(n\log n)$ edges.

Next, for every $v$ with $\deg(v)=O(n^{3/4})$, the algorithm traverses its neighbor list $\Gamma(v)=\{u_1,\ldots, u_\ell\}$ and adds the edges $(u_i,v) \in \HighDegEdges$ to the spanner $\HighDegSpanner$ only if $u_i$ belongs to a \emph{new} cluster; i.e., $u_i$ has a center $s \in S(u_i)$ that no previous neighbor $u_{j}$, $j< i$, has as its center in $S(u_j)$. Since the algorithm adds an edge whenever a new center is revealed and there are $O(\sqrt{n}\log n)$ centers, the total number of edges added to the spanner is $O(n^{3/2}\log n)$.

We next describe the LCA that, given an edge $(u,v) \in \HighDegEdges$, says $\YES$ iff $(u,v) \in \HighDegSpanner$.
We assume throughout that $\deg(v)\leq \deg(u)$, so $\deg(v)=O(n^{3/4})$.
First, by probing for the first $\sqrt{n}$ neighbors of $u$ and $v$, one can compute the center-sets $S(u)$ and $S(v)$ each containing $O(\log n)$ centers in $S$. 
Next, the algorithm probes for all of $v$'s neighbors $\Gamma(v)=\{u_1,\ldots, u_j=u, \ldots, u_\ell\}$. 
For every neighbor $u_{i}$ appearing before $u$ in $\Gamma(v)$, i.e., for every $i < j$, and for every center $s \in S(u)$, the algorithm makes a \emph{cluster-membership test} for $s$ and $u_i$. This cluster-membership test can be answered by making a single $\adjacencyP$ probe on the pair $\langle u_i, s \rangle$, namely $s\in S(u_i)$ only if $s$ is among the first $\sqrt{n}$ neighbors of $u_i$.
Eventually, the algorithm $\mathcal{A}_{\high}$ answers $\YES$ only if there exists $s' \in S(u)$ such that $s' \notin \bigcup_{i=1}^{j-1}S(u_{i})$. It is straightforwards to verify that the probe complexity is $\widetilde{O}(\deg(u)+\sqrt{n})=O(n^{3/4})$.

Finally, we show that $\HighDegSpanner$ is indeed a $3$-spanner. For every edge $(u,v)$ not added to the spanner, let $s \in S(u)$ and let $u_i$ be the first vertex in $\Gamma(v)$ satisfying $s \in S(u_i)$. By construction, $(u_i,v)\in \HighDegSpanner$ and also the edges $(u_i,s)$ and $(u,s)$ are in the spanner $\HighDegSpanner$, providing a path of length $3$ in $\HighDegSpanner$. See Figure~\ref{fig:3high} for an illustration of $\HighDegSpanner$.

\begin{figure}[h!]
  \begin{center}
    \includegraphics[width=0.55\textwidth]{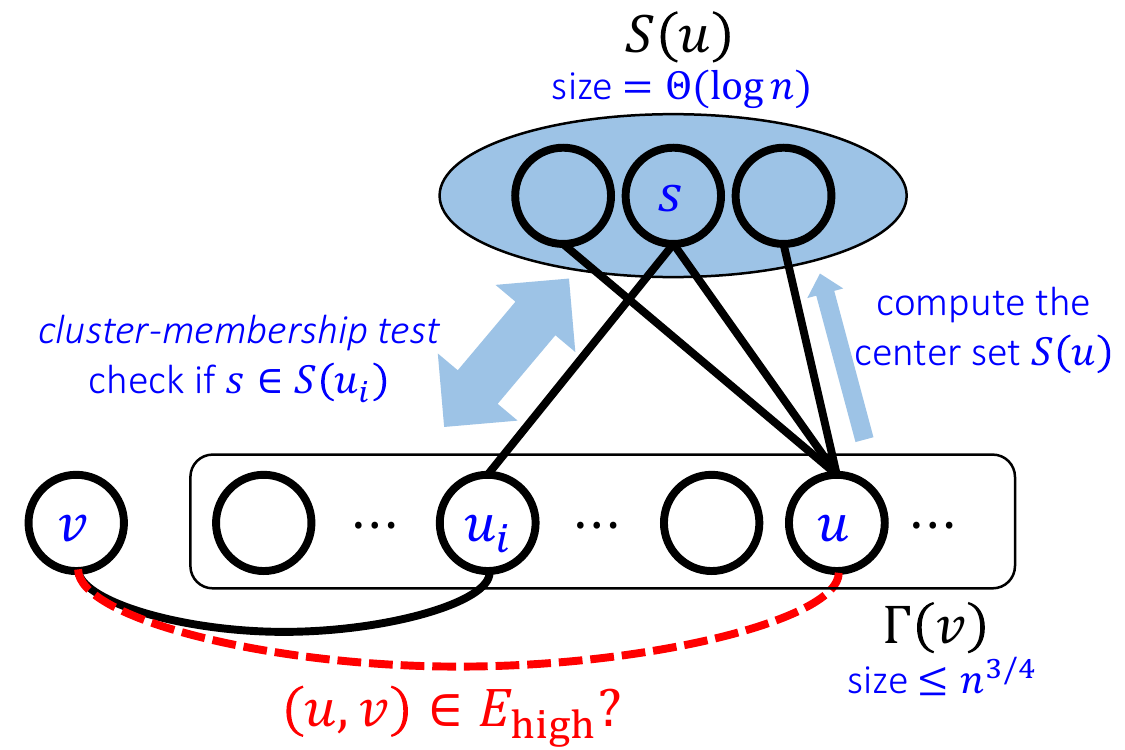}
  \end{center}
  \caption{Illustration for the local construction of $\HighDegSpanner$.}
\label{fig:3high}
\end{figure}

\subsection{$3$-spanner for the edges $\SuperDegEdges$}\label{sec:3super} 
We proceed by describing the construction of the $3$-spanner $\SuperDegSpanner$ that takes care of the edges $\SuperDegEdges$. Let $S'$ be a collection of $O(n^{1/4}\log n)$ centers obtained by sampling each $v \in V$ independently with probability $p'=\Theta((\log n)/n^{3/4})$. For each vertex $v$, define its center set $S'(v)$ to be the members of $S'$ among the first $n^{3/4}$ neighbors of $v$, and if $\deg(v)\leq n^{3/4}$, then $S'(v)=S' \cap \Gamma(v)$. First, as in the construction of $\HighDegSpanner$, the algorithm connects each $v$ to each of its centers by adding the edges $(u,s)$ for every $u$ and $s \in S'(u)$ to the spanner $\SuperDegSpanner$.

Consider a vertex $v$ and divide its neighbor list into consecutive \emph{blocks} $\Gamma_1(v),\ldots, \Gamma_\ell(v)$, each of size $n^{3/4}$ (expect perhaps for the last block). In every block $\Gamma_i(v)=\{u_{i,1}, \ldots, u_{i,\ell'}\}$, the algorithm adds the edge $(v,u_{i,j})$ to the spanner $\SuperDegSpanner$ only if $u_{i,j}$ belongs to a \emph{new} cluster with respect to all other vertices that appear before it in that block. Formally, the edge  $(v,u_{i,j})$ is added iff there exists $s \in S'(u_{i,j})$ such that $s \notin \bigcup_{q \leq j-1}S'(u_{i,q})$. 
This completes the description of the construction.
Observe that within each block, the LCA adds an edge for each new center. W.h.p., there are $O(n/n^{3/4}) = O(n^{1/4})$ blocks and $|S'| = O(n^{1/4} \log n)$ centers, so $O(\sqrt{n} \log n)$ edges are added for each $v$, yielding a spanner of size $O(n^{3/2}\log n)$.

The LCA $\mathcal{A}_{\super}$ is very similar to $\mathcal{A}_{\high}$: the main distinction is that given an edge $(u,v)$ with $\deg(u)\geq n^{3/4}$, the algorithm $\mathcal{A}_{\super}$ will probe only for the block $\Gamma_i(v)=\{u_{i,1}, \ldots, u_{i,j}=u,u_{i,\ell'}\}$ to which $v$ belongs, ad will make its decision only based on that block. 
By probing for the degree of $v$, and the index $j$ such that $u$ is the $j^\textrm{th}$ neighbor of $v$, one can compute the block $\Gamma_i(v)$ by making $n^{3/4}$ $\neighborP$ probes. 
In addition, by probing for the first $n^{3/4}$ neighbors of both $u$ and $v$, one can compute the multiple-center sets $S'(u)$ and $S'(v)$. Finally, the algorithm applies a cluster-membership test for each pair $s \in S'(u)$ and $u_{i,q}$ for $q \leq j-1$. It returns $\YES$ only if there exists $s \notin \bigcup_{q \leq j-1}S'(u_{i,q})$.  Hence, the number of probes made by the LCA is w.h.p.~bounded by $|\Gamma_i(v)| \cdot |S'(u)| = O(n^{3/4}\log n)$.

We now show that $\SuperDegSpanner$ is a $3$-spanner for the edges $\SuperDegSpanner$. Let $(u,v)$ be such that $\deg(u)\geq n^{3/4}$ and let $\Gamma_i(v)$ be the block in $\Gamma(v)$ to which $u$ belongs. Since $\deg(u)\geq n^{3/4}$, w.h.p. $|S'(u)|=\Theta(\log n)$.  Assume that $(u,v) \notin \SuperDegSpanner$. Fix $s \in S'(u)$ and let $u_{i,q}$ be the first vertex in $\Gamma_i(v)$ that belongs to the cluster of $s$. Since $(u,v) \notin \SuperDegSpanner$, such a vertex $u_{i,q}$ is guaranteed to exist. The spanner $\SuperDegSpanner$ contains the edges $(s,u), (s,u_{i,q})$ and $(v,u_{i,q})$, thus containing a path of length $3$ between $u$ and $v$. See Figure~\ref{fig:3super} for an illustration of $\SuperDegSpanner$.

\begin{figure}[!h]
  \begin{center}
    \includegraphics[width=0.75\textwidth]{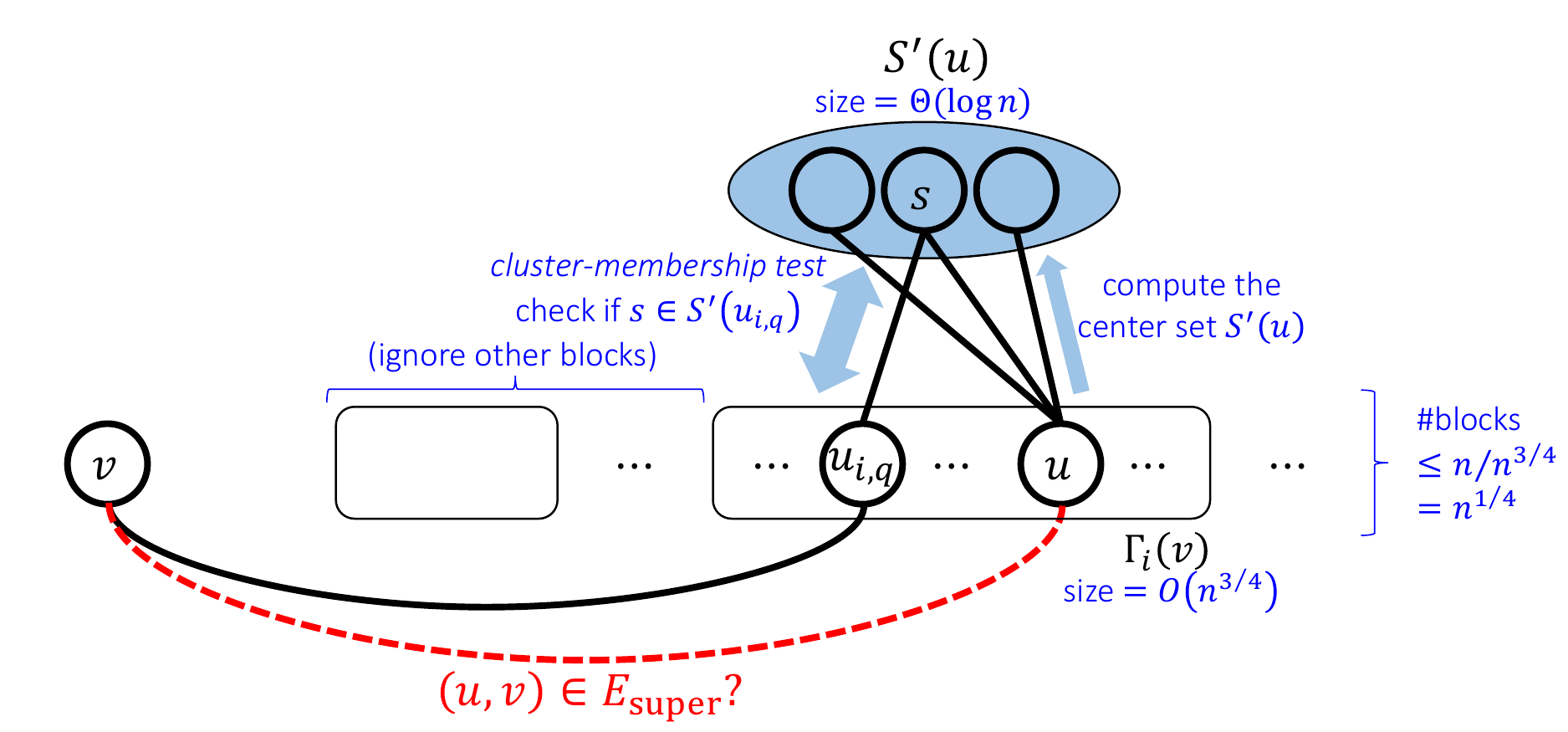}
  \end{center}
  \caption{Illustration for the local construction of $\SuperDegSpanner$.}
\label{fig:3super}
\end{figure}

\subsection{The Final LCA}\label{sec:3final} Given an edge $(u,v)$ the algorithm says $\YES$ if one of the following holds:
\begin{compactitem}
\item
$\deg(u),\deg(v)\leq \sqrt{n}$.
\item
$u \in S(v) \cup S'(v)$ (or vice versa).
\item
the local algorithm $\mathcal{A}_{\high}$ says $\YES$ on edge $(u,v)$.
\item
the local algorithm $\mathcal{A}_{\super}$ says $\YES$ on edge $(u,v)$.
\end{compactitem}
This completes the $3$-spanner LCA from Theorem \ref{thm:35-main}. 

\paragraph{Missing piece: computing centers in the LCA model}
In the LCA model, we do not generate the entire set $S$ (or $S'$) up front. Instead, we may verify whether $v \in S$ on-the-fly using $v$'s $\ID$ by, e.g., applying a random map (chosen according to the given random tape) from $v$'s $\ID$ to $\{0, 1\}$ with expectation $p$. In fact, this hitting set argument does not require full independence -- the discussion on reducing the amount of random bits is given in Section~\ref{APPEND:bounds}, but for now we formalize it as the following observation.
\begin{observation}[Local Computation of Centers]\label{obs:centercheck}
Let $S$ be a center set obtained by placing each vertex into $S$ independently with probability $p = \Theta(\log n/\Delta)$. W.h.p., $S$ forms a hitting set for the collection of neighbor sets of all vertices of degree at least $\Delta$. Further, under the LCA model, we may check whether $v \in S$ locally without making any probes.
\end{observation}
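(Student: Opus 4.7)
The plan is to establish the two claims separately and to observe that both follow from elementary arguments, hence the ``observation'' designation. The first claim is a textbook hitting set argument, while the second is a remark about how sampling can be simulated using the shared random tape without any queries to $\mathcal{O}^G$.

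For the hitting set claim, I would fix a vertex $v$ with $\deg(v) \geq \Delta$ and bound the probability that $\Gamma(v) \cap S = \emptyset$. Since each vertex is placed into $S$ independently with probability $p = c \log n/\Delta$ for a sufficiently large constant $c$, this probability equals $(1-p)^{\deg(v)} \leq (1-p)^{\Delta} \leq \exp(-p\Delta) = n^{-c}$. Applying a union bound over the at most $n$ vertices of degree at least $\Delta$ yields that, with probability at least $1 - n^{-(c-1)}$, every such vertex has a neighbor in $S$; choosing $c$ large enough makes this a high-probability guarantee. This only uses independence of the sampling coins across vertices, so the full independence assumption is invoked only in a black-box way.

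For the local-check claim, I would note that in the LCA model the random tape is part of the input to every instance of the algorithm and is read without any probe to $\mathcal{O}^G$. Under the full-independence assumption used throughout this section, each vertex $v$ is associated with its own independent Bernoulli$(p)$ random variable, whose outcome can be obtained by evaluating a fixed function $f$ on $\ID(v)$ and the tape; since $\ID(v)$ is already given as part of the query (or has already been returned by some earlier probe that produced $v$), no further graph access is required to decide whether $v \in S$.

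The main thing to be careful about is that this observation is invoked for \emph{every} vertex encountered by the LCA, including vertices whose IDs are produced by \neighborP\ probes during the algorithm's execution; once $\ID(v)$ is in hand, membership in $S$ is determined from the tape alone, so no amortization or bookkeeping is needed. The only real subtlety, namely that this argument in fact goes through with only $\mathrm{polylog}(n)$ random bits rather than full independence, is explicitly deferred to Section~\ref{APPEND:bounds} and so does not need to be addressed here.
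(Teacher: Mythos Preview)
Your proposal is correct and matches the paper's treatment: the paper does not give a formal proof of this observation, but the surrounding text invokes exactly the standard hitting-set-plus-union-bound argument and the ``apply a random map from $\ID(v)$ and the tape'' idea that you spell out, and it likewise defers the bounded-independence refinement to Section~\ref{APPEND:bounds}. If anything, you have written out more detail than the paper does.
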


\section{LCA for $5$-Spanners} \label{sec:5spanner}

We now consider LCAs for $5$-spanners, aiming for spanners of size $\widetilde{O}(n^{4/3})$ with probe complexity $\widetilde{O}(n^{5/6})$. 
We start by noting that the construction of $\SuperDegSpanner$ for the $3$-spanners in fact gives for every $r\geq 1$, a $3$-spanner of size $\widetilde{O}(n^{1+1/r})$ for the subset of edges $(u,v)$ with $\min\{\deg(u),\deg(v)\}\geq n^{1-1/(2r)}$: this is achieved by instead setting the threshold for super-high degree at $n^{1-1/(2r)}$, pick $|S'| = \widetilde{O}(n^{1/(2r)})$ centers, and use block size $n^{1-1/(2r)}$. The probe complexity for querying the spanner is $\widetilde{O}(n^{1-1/(2r)})$.
For $5$-spanner, by taking $r = 3$, one takes care of all edges $(u,v)$ with $\max\{\deg(u),\deg(v)\}\geq n^{5/6}$.

Let $\LowDeg=n^{1/r}$, $\MedDeg = n^{1/2 - 1/(2r)}$ and $\SuperDeg = n^{1-1/(2r)}$. For the purpose of constructing $5$-spanners for general graphs, we let $r=3$, simplifying the thresholds to $\LowDeg=\MedDeg = n^{1/3}$ and $\SuperDeg=n^{5/6}$.) Again, we may afford to keep all edges incident to some vertex of degree at most $\LowDeg$.

For integers $a\leq b$, let $V_{[a,b]}=\{ v \in V(G) ~\mid~ \deg(v) \in [a,b]\}$.
We will design a subgraph $H\subseteq G$ that will take care of the remaining edges $\MedDegEdges = E(V_{[\MedDeg,\SuperDeg]}, V_{[\MedDeg,\SuperDeg]})$.

\begin{definition}[Deserted and Crowded vertices]
A vertex $v \in V_{[\MedDeg,\SuperDeg]}$ is \emph{deserted} if at least half of its neighbors in $\Gamma_{\MedDeg,1}(v)$ are of degree at most $\SuperDeg$; i.e., $|\Gamma_{\MedDeg,1}(v)\cap V_{[1,\SuperDeg]}|\geq \MedDeg/2.$ Otherwise, the vertex is \emph{crowded}.
\end{definition}

\paragraph{Criteria for edges}
We aim to take care of edges for which both endpoints are in $V_{[\MedDeg,\SuperDeg]}$. To categorize our edges for the purpose of constructing $5$-spanners, we need the following partition of these vertices.

Let $\DesertedVer$ (resp., $\CrowdedVer$) be the set of deserted (resp., crowded) vertices in $V_{[\MedDeg,\SuperDeg]}$.
Given a vertex, we can verify whether it is in any of these sets using $O(\MedDeg)$ probes by checking the degrees of $v$ and each vertex in $\Gamma_{\MedDeg,1}(v)$.
We then assign each $(u,v) \in E$ into one of the four cases $\{\low,\cell,\rep,\super\}$ as given in the table below. It is straightforward to verify that when $\LowDeg = \MedDeg$ (namely when we choose $r = 3$, which also yields the required performance), these four cases take care of all edges in $E$.
We note that $\RepSpanner$ assumes that $\SuperDegSpanner$ is included: $\RepEdges$ is taken care by $\RepSpanner \cup \SuperDegSpanner$, not by $\RepSpanner$ alone.

\begin{table*}[!h]
\centering
\renewcommand{\arraystretch}{1.75} 
\resizebox{\textwidth}{!}{%
\begin{tabular}{|c||c|c|c|} \hline
  {\bf Subset} & {\bf Criteria} & {\bf \# Edges} & {\bf Probe Complexity} \\ \hline\hline
  $\LowDegEdges$ & $(u,v) \in E(V,V_{[1, \LowDeg]})$ & $O(n\cdot\LowDeg)=O(n^{1+\frac{1}{r}})$ & $O(1)$ \\ \hline
  $\CellEdges$ & $(u,v) \in E(\DesertedVer, \DesertedVer)$ & $O(\frac{n^2 \log^2 n}{\MedDeg^2})=O(n^{1+\frac{1}{r}} \log^2 n)$ & $O((\SuperDeg + \MedDeg^2) \log^2 n)=O(n^{1-\frac{1}{2r}} \log^2 n)$ \\ \hline
  $\RepEdges$ & $(u,v) \in E(V_{[\MedDeg, \SuperDeg]}, \CrowdedVer)$ & $O(\frac{n^2}{\SuperDeg} \cdot \log n)=O(n^{1+\frac{1}{r}} \log n)$ & $O(\SuperDeg \log^3 n)=O(n^{1-\frac{1}{2r}} \log^3 n)$ \\ \hline
  $\SuperDegEdges$ & $(u,v) \in E(V, V_{[\SuperDeg, n)})$ & $O(\frac{n^3\log n}{\SuperDeg^2})=O(n^{1+\frac{1}{r}} \log n)$ & $O(\SuperDeg \log n)=O(n^{1-\frac{1}{2r}} \log n)$ \\ \hline
\end{tabular}
}
\caption{Edge categorization for the construction of $5$-spanners.}
\end{table*}

\paragraph{LCA for $\CellEdges$: the cluster partitioning method}\label{sec:5cell}
The algorithm is as follows.
\begin{compactitem}
\item Only vertices of degree at most $\SuperDeg$ are chosen to be in $S$ with probability $p = \Theta((\log n)/\MedDeg)$. Since at least half the vertices in $\Gamma_{\MedDeg,1}(v)$ for any $v\in \DesertedVer$ have degree smaller than $\SuperDeg$, we have that w.h.p.  $|S(v)|=\Theta(\log n)$ the cluster-membership test can be done with constant number of probes. Let us denote by $\ClusterOf(s) = \{s\} \cup \{v: s \in S(v)\}$ the cluster of center $s$.
\item The partitioning of clusters into buckets is defined in a consistent way (regardless of the given query edge); for instance, create a list of vertices in the cluster, sort them according to their $\ID$s, divide the list into buckets of size $\MedDeg$ possibly except for the last one. Note that we partition $\ClusterOf(s)$ and $\ClusterOf(t)$ separately -- we do not combine their elements. Similarly, once we obtain buckets containing $u$ and $v$, the order in which we check the adjacency of $u'$ and $v'$ must be consistent. To this end, define the $\ID$ of an edge $(u,v)$ as $($\ID$(u), $\ID$(v))$, where the comparison between edge $\ID$s is lexicographic. Thus, this step only adds the edge of minimum $\ID$ between the two clusters.
\item We also set the precondition $(u,v)\in E(V_{[\MedDeg,n)}, V_{[\MedDeg,n)})$, and consistently only allow candidate pairs $(u',v')\in E(V_{[\MedDeg,n)}, V_{[\MedDeg,n)})$, to ensure that the lexicographically first edge of this exact specification is added if one exists. We do not restrict to $\CellEdges$, which require both endpoints to be deserted vertices, because checking whether $(u',v') \in \CellEdges$ would take $\Theta(\MedDeg)$ probes instead of constant probes. We restrict to edges whose endpoints have degrees at least $\MedDeg$ instead of considering the entire $E$ so that $\CentersOf$ would be well-defined. \end{compactitem}

\vspace{4pt}\noindent\fbox{\begin{minipage}{\dimexpr\textwidth-2\fboxsep-2\fboxrule\relax}
\textbf{Local construction of $\CellSpanner$.} Each $v\in V_{[1,\SuperDeg]}$ is added to $S$ with probability $p = \Theta(\log n/\MedDeg)$.
\begin{compactenum}[]
\item \textbf{(A)} If $u \in \CentersOf(v)$ or $v \in \CentersOf(u)$, answer \YES.
\item \textbf{(B)} If $(u,v)\in E(V_{[\MedDeg,n)}, V_{[\MedDeg,n)})$:
\begin{compactitem}
\item Compute $S(u)$ and $S(v)$ by iterating through $\Gamma_{\MedDeg,1}(u)$ and $\Gamma_{\MedDeg,1}(v)$.
\item For each pair of $s \in S(u)$ and $t \in S(v)$:
\begin{compactitem}
\item Partition each of the clusters $\ClusterOf(s)$ and $\ClusterOf(t)$ into buckets of size (mostly) $\MedDeg$. Denote the buckets containing $u$ and $v$ by $\CellOf(u,s)$ and $\CellOf(v,t)$, respectively.
\item Iterate through each pair of $u' \in \CellOf(u,s)$ and $v' \in \CellOf(v,t)$ and check if $(u',v')\in E(V_{[\MedDeg,n)}, V_{[\MedDeg,n)})$. Answer \YES~if the edge of minimum $\ID$ found is $(u',v') = (u,v)$.
\end{compactitem}
\end{compactitem}
\end{compactenum}
\end{minipage}}\vspace{2pt}

\begin{lemma}\label{lem:cellsp}
For $1 \leq \MedDeg \leq \sqrt{n} \leq \SuperDeg \leq n$, there exists a subgraph $\CellSpanner \subseteq G$ such that w.h.p.:
\begin{compactenum}[\hspace{\parindent}(i)]
\item $\CellSpanner$ has $O(\frac{n^2 \log^2 n}{\MedDeg^2})$ edges, 
\item $\CellSpanner$ takes care of $\CellEdges$; that is, for every $(u,v)\in \CellSpanner$, $\dist(u,v,\CellSpanner)\leq 5$, and 
\item for a given edge $(u,v) \in E$, one can test if $(u,v) \in \CellSpanner$ by making $O((\SuperDeg + \MedDeg^2) \log^2 n)$ probes.
\end{compactenum}
\end{lemma}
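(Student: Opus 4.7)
The construction is already spelled out in the box; the proof just has to verify the three bullets. First, a warm-up concentration argument. For any vertex $v$ that is relevant to us (either deserted, or more generally queried in Part~(B)), at least $\MedDeg/2$ of its first $\MedDeg$ neighbors have degree at most $\SuperDeg$ and are therefore eligible to lie in $S$; since each such neighbor is sampled independently with probability $p=\Theta(\log n/\MedDeg)$, Chernoff plus a union bound over the $n$ vertices give $|S(v)|=\Theta(\log n)$ w.h.p.\ for every deserted $v$. This also yields $|S|=O(n\log n/\MedDeg)$ and, counting incidences, $\sum_{s\in S}|\ClusterOf(s)|=\sum_v|S(v)|+|S|=O(n\log n)$.

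\textbf{Size (i).} Part~(A) adds, for each $v$, an edge to each of the $O(\log n)$ elements of $S(v)$, contributing $O(n\log n)$ edges in total. In Part~(B), for every ordered pair of buckets (one from $\ClusterOf(s)$ and one from $\ClusterOf(t)$) the construction keeps at most one edge. By the incidence bound above, the total number of buckets across all clusters is $O(n\log n/\MedDeg)$, so the number of bucket pairs, and hence Part~(B) edges, is at most $O\!\bigl((n\log n/\MedDeg)^2\bigr)=O(n^2\log^2 n/\MedDeg^2)$, dominating Part~(A).

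\textbf{Stretch (ii).} Let $(u,v)\in\CellEdges$, so both $u$ and $v$ are deserted and lie in $V_{[\MedDeg,\SuperDeg]}\subseteq V_{[\MedDeg,n)}$. By the claim above, we may pick some $s\in S(u)$ and $t\in S(v)$; the star edges $(u,s),(s,u'),(v',t),(t,v)$ will be taken by Part~(A) for any $u'\in\ClusterOf(s)$ and $v'\in\ClusterOf(t)$. Part~(B), when triggered on the pair $(s,t)$, partitions the two clusters into buckets deterministically (e.g., sorted by $\ID$), locates the two buckets $\CellOf(u,s)$ and $\CellOf(v,t)$, and scans the Cartesian product in a fixed (lexicographic) order; the pair $(u,v)$ itself satisfies all requirements, so some edge $(u',v')\in E(V_{[\MedDeg,n)},V_{[\MedDeg,n)})$ between the two buckets is added. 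Concatenating $u\to s\to u'\to v'\to t\to v$ gives a path of length exactly $5$ in $\CellSpanner$, as required.

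\textbf{Probe complexity (iii).} Part~(A) and the first step of Part~(B) only require the first $\MedDeg$ neighbors of $u$ and $v$ (with a \degreeP{} probe on each to check the $V_{[1,\SuperDeg]}$ condition and the seed-based check of Observation~\ref{obs:centercheck} to decide membership in $S$), which costs $O(\MedDeg)$ probes. In Part~(B) there are $|S(u)|\cdot|S(v)|=O(\log^2 n)$ center pairs. For each center $s\in S(u)\cup S(v)$ (of which there are $O(\log n)$) we enumerate $\ClusterOf(s)$ by making $\deg(s)=O(\SuperDeg)$ \neighborP{} probes on $s$ and one \adjacencyP{} probe $\langle w,s\rangle$ per neighbor $w$, which returns the index of $s$ in $w$'s list and thereby decides whether $s$ lies among $w$'s first $\MedDeg$ neighbors; this also lets us sort and bucket by $\ID$ consistently, so buckets are reused across queries. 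Then, for each of the $O(\log^2 n)$ center pairs, the Cartesian iteration over $\CellOf(u,s)\times\CellOf(v,t)$ uses $O(\MedDeg^2)$ \adjacencyP{} and \degreeP{} probes to find the lex-minimum valid edge. Summing, the total probe count is $O\!\bigl(\SuperDeg\log n+\MedDeg^2\log^2 n\bigr)=O\!\bigl((\SuperDeg+\MedDeg^2)\log^2 n\bigr)$. The only delicate point is making sure that all of this uses the \emph{same} bucketing whenever Part~(B) is invoked; this is secured by sorting $\ClusterOf(s)$ by $\ID$ and using a fixed edge $\ID$ for tie-breaking, so the outcome depends only on $G$ and the random seed, not on the particular query.
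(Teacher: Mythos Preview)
Your proof is correct and follows essentially the same approach as the paper's own proof: the size bound via counting total bucket incidences $\sum_s|\ClusterOf(s)|=O(n\log n)$, the stretch argument via the path $u\to s\to u'\to v'\to t\to v$, and the probe bound of $O(\SuperDeg)$ per center plus $O(\MedDeg^2)$ per center pair. Your accounting in (iii) is in fact marginally tighter (you observe each cluster need only be enumerated once rather than once per pair, giving $O(\SuperDeg\log n)$ rather than $O(\SuperDeg\log^2 n)$ for that term), but this does not change the stated asymptotic bound; one small point you left implicit is that the upper bound $|S(v)|=O(\log n)$ holds for \emph{all} $v\in V_{[\MedDeg,n)}$ (not just deserted ones), which is what makes $\sum_v|S(v)|=O(n\log n)$ go through.
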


\begin{proof}

\paragraph{(i) Size} In (A) we add $|S(v)| = \Theta(\log n)$ edges for each $v \in \DesertedVer$, which constitutes to $O(n \log n)$ edges in total. In (B), we add one edge between each pair of buckets. We now compute the total number of buckets. The total size of clusters $\sum_{s\in S}|\ClusterOf(s)| \leq |S| + \sum_{v \in V_{[\MedDeg,n)}} |S(v)| = O(n \log n)$, so there can be up to $O((n \log n)/\MedDeg)$ full buckets of size $\MedDeg$. As buckets are formed by partitioning $|S|$ clusters, there are up to $|S| = \Theta((n \log n)/\MedDeg)$ remainder buckets of size less than $\MedDeg$. Thus, there are $\Theta((n \log n)/\MedDeg)$ buckets, and $O(((n \log n)/\MedDeg)^2)$ edges are added in (B).

\paragraph{(ii) Stretch} Suppose that $(u,v)$ is omitted. Fix centers $s \in S(u)$ and $t \in S(v)$, then the lexicographically-first edge $(u',v') \in E(\CellOf(u,s),\CellOf(v,t))$ must have been added to $\CellSpanner$, forming the path $\langle u, s, u', v', t, v\rangle$ (or shorter, if there are repeated vertices), yielding $\dist(u,v,\CellSpanner)\leq 5$.

\paragraph{(iii) Probes} Computing $S(u)$ and $S(v)$ takes $O(\MedDeg)$ probes. For each pairs of centers, we scan through the entire neighbor-lists $\Gamma(s)$ and $\Gamma(t)$ and collect all vertices in their respective clusters. This takes $O(\SuperDeg)$ probes each because we restrict to centers of degree at most $\SuperDeg$. Given the clusters, we identify the buckets containing $u$ and $v$ each of size $O(\MedDeg)$. We then check through candidates $(u',v')$ between these buckets, taking $O(\MedDeg^2)$ \adjacencyP~probes. So, each pair of centers requires $O(\SuperDeg + \MedDeg^2)$ total probes. We repeat the process for $|S(u)| \cdot |S(v)| = O(\log^2 n)$ pairs of centers w.h.p., yielding the claimed probe complexity.
\end{proof}
\begin{figure}[!h]
  \begin{center}
    \includegraphics[width=0.7\textwidth]{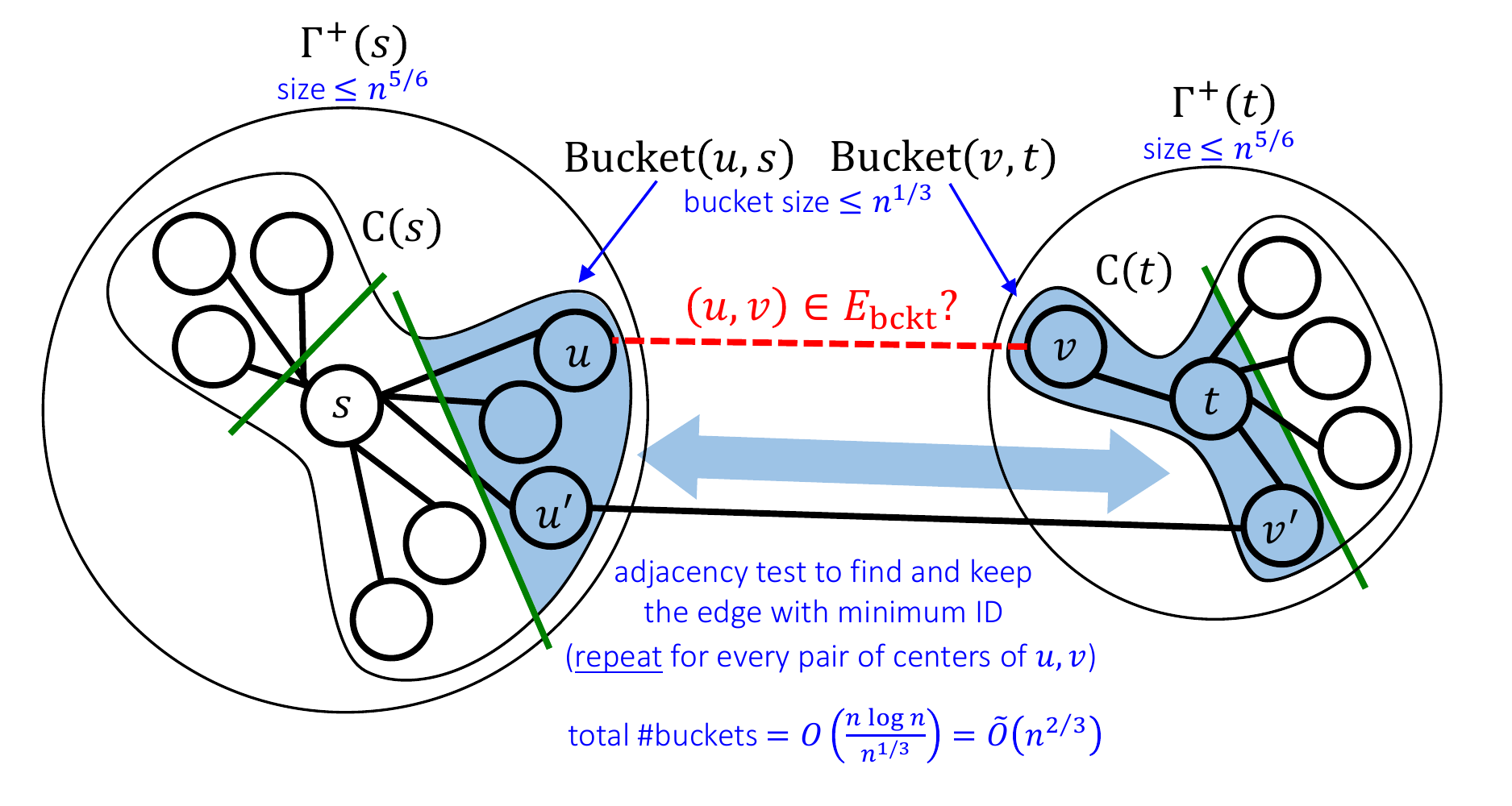}
  \end{center}
  \caption{Illustration for the local construction of $\CellSpanner$. Green lines show the partition of clusters into buckets.}
\label{fig:5cell}
\end{figure}

\paragraph{LCA for $\RepEdges$: the Representative method}\label{sec:5rep}
We first explain the computation of the representative set $\RepsOf(v)$ for a croweded vertex $v \in \CrowdedVer$, i.e., a collection of neighbors of $v$ that have  degree at least $n^{5/6}$. Using the random bits and the vertex $\ID$, we sample a set $R_v$ of $\Theta(\log n)$ (not necessarily distinct) indices in $[\MedDeg]$ at random (for details, see Sec.~\ref{APPEND:bounds}). Denote the neighbor-list of $v$ by $\{x'_1, \ldots, x'_{\deg(v)}\}$, then define $\RepsOf(v) = \{x'_i: i \in R_v \textrm{ and } \deg(x'_i) \geq \SuperDeg\}$. Then since at least half of the vertices in $\Gamma_{\MedDeg,1}(v)$ are of degree at least $\SuperDeg$, w.h.p.~$\RepsOf(v) \neq \emptyset$. For consistency, we allow the same definition for $\RepsOf(v)$ for any $v \in V_{[\MedDeg,n)}$ as well, even if it may result in empty sets of representatives.
Hence computing $\RepsOf(v)$ takes $O(\log n)$ probes\footnote{The na\"{\i}ve solution traverses the entire $\MedDeg$ first neighbors of $v$ which is too costly.}.

Let $\SuperDegEdges=\{(u,v) \in E ~\mid~ \max\{\deg(u),\deg(v)\}\geq n^{5/6}\}$ and apply the $3$-spanner algorithm algorithm of Sec.~\ref{sec:three} to construct a subgraph $\SuperDegSpanner$ that takes care of the edges $\SuperDegEdges$. To construct $\SuperDegSpanner$ the algorithm (fully described\footnote{Upon replacing the degree threshold of $n^{3/4}$ with $n^{5/6}$.} in Sec. \ref{sec:three}) samples a set 
$S'$ of centers by picking each $v \in V$ independently with probability $O(\log n/n^{5/6})$. For every $v$ with $\deg(v)\geq n^{5/6}$, let $S'(v)$ be the sampled neighbors in $S'\cap \Gamma_1(v)$ where $\Gamma_1(v)$ is the first block of size $n^{5/6}$ in $\Gamma(v)$. This allows us to check membership to a cluster of $s \in S'$ using a single adjacency probe. 
The idea would be to extend the $1$-radius clusters of $S'$ by one additional layer consisting of the crowded vertices connected to the cluster via their representatives.

For convenience, for a crowded $v$, define $\CentersOfRepsOf(v) = \cup_{x' \in \RepsOf(v)} S'(x')$, the set of (multiple) centers of any of $v$'s representatives. Observe that by adding the edge $(v,x')$ to $\RepSpanner$ for every $x' \in \RepsOf(v)$, it yields that $\dist(v,s,\RepSpanner \cup \SuperDegSpanner) \leq 2$ for any $s \in \CentersOfRepsOf(v)$.

Consider the query $(u,v)$, and suppose that $v = v'_i$ is the $i^\textrm{th}$ neighbor in $u$'s neighbor-list, $\Gamma(u) = \{v'_1, \ldots, v'_{\deg(u)}\}$. We then add $(u,v)$ to $\RepSpanner$ if and only if $v$ introduces a new center through some representative; that is, $\CentersOfRepsOf(v'_i) \setminus \cup_{j<i} \CentersOfRepsOf(v'_j) \neq \emptyset$. To verify this condition locally, we first compute $\CentersOfRepsOf(v)$, and for each of $\{v'_j\}_{j<i}$, $\RepsOf(v'_j)$. Then, we discard $(u,v)$ if for every center $s \in \CentersOfRepsOf(v)$, there exists $x$ and $v'_j$ where $x \in \RepsOf(v'_j)$ and $s \in S'(x)$; the last condition takes constant probes to verify. This gives the full LCA for constructing $\RepSpanner$ below.

\begin{figure}[!h]
\vspace{2pt}\noindent\fbox{\begin{minipage}{\dimexpr\textwidth-2\fboxsep-2\fboxrule\relax}
\textbf{Local construction of $\RepSpanner$.} Each $v\in V$ is added to $S'$ with probability $p = \Theta((\log n)/\SuperDeg)$.
\begin{compactenum}[]
\item \textbf{(A)} If $v \in V_{[\MedDeg,\SuperDeg]}$ and $u \in \RepsOf(v)$, answer \YES.
\item \textbf{(B)} If $u, v \in V_{[\MedDeg,\SuperDeg]}$:
\begin{compactitem}
\item Compute $\CentersOfRepsOf(v)$.
\item Denote the neighbor-list of $u$ by $\{v'_1, \ldots, v'_{\deg(u)}\}$; identify $i$ such that $v = v'_i$.
\item For each vertex $w \in \{v'_1, \ldots, v'_{i-1}\}$, if $w \in V_{[\MedDeg,\SuperDeg]}$, compute $\RepsOf(w)$.
\item For each $s \in \CentersOfRepsOf(v)$, iterate to check for a vertex $x$ in any of the $\RepsOf(w)$'s obtained above, such that $s \in S'(x)$. Answer \YES~if there exists a vertex $s$ where no such $x$ exists.
\end{compactitem}
\end{compactenum}
\end{minipage}}
\caption[Procedure for the local construction of $\RepSpanner$]{Procedure for the local construction of $\RepSpanner$.}\end{figure}

\begin{lemma}\label{lem:repsp}
For $1 \leq \MedDeg \leq \SuperDeg \leq n$, there exists a subgraph $\RepSpanner \subseteq G$ such that w.h.p.:
\begin{compactenum}[\hspace{\parindent}(i)]
\item $\RepSpanner$ has $O(n^2/\SuperDeg \cdot \log n)$ edges, 
\item $\RepSpanner \cup \SuperDegSpanner$ takes care of $\RepEdges$; that is, for every $(u,v)\in \RepEdges$, $\dist(u,v,\RepSpanner \cup \SuperDegSpanner)\leq 3$, and 
\item for a given edge $(u,v) \in E$, one can test if $(u,v) \in \RepSpanner$ by making $O(\SuperDeg \log^3 n)$ probes.
\end{compactenum}
\end{lemma}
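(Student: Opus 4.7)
The plan is to verify each of the three claims by exploiting the two kinds of edges in $\RepSpanner$: the ``rule~(A)'' direct edges $(v,x)$ for every $v\in V_{[\MedDeg,\SuperDeg]}$ and every $x\in\RepsOf(v)$, and the ``rule~(B)'' edges $(u,v)$ added only when $v$ introduces a center $s\in\CentersOfRepsOf(v)\setminus\bigcup_{j<i}\CentersOfRepsOf(v'_j)$ not yet witnessed by an earlier neighbor of $u$. I would dispatch size and probe complexity by direct accounting, and isolate the stretch argument as the main obstacle.

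For the size bound~(i), I would argue that rule~(A) contributes at most $|\RepsOf(v)|=O(\log n)$ edges per vertex $v$, for a total of $O(n\log n)$. For rule~(B), each edge $(u,v'_i)$ added to $\RepSpanner$ is uniquely chargeable to a distinct newly-revealed center $s\in\CentersOfRepsOf(v'_i)$ in $u$'s neighbor ordering; since there are at most $|S'|=O((n/\SuperDeg)\log n)$ centers in total, each $u$ contributes $O((n/\SuperDeg)\log n)$ rule-(B) edges, summing to $O(n^2/\SuperDeg\cdot\log n)$ overall.

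For the probe bound~(iii), I would trace the steps of the local decision procedure. Computing $\RepsOf(v)$ uses the $\Theta(\log n)$ pre-sampled indices in $[\MedDeg]$ together with $O(\log n)$ degree probes, and for each of the $O(\log n)$ representatives $x_v$, assembling $S'(x_v)$ requires reading its first $\SuperDeg$ neighbors, for $O(\SuperDeg\log n)$ probes. The main loop scans the at most $\deg(u)\le\SuperDeg$ earlier neighbors $w$ of $u$, spending $O(\log n)$ probes per $w$ to obtain $\RepsOf(w)$, which yields a collection of $O(\SuperDeg\log n)$ candidate representatives. Finally, testing ``$s\in S'(x)$'' for each of the $O(\log^2 n)$ centers $s\in\CentersOfRepsOf(v)$ and each candidate $x$ amounts to one $\adjacencyP$ probe, giving a grand total of $O(\SuperDeg\log^3 n)$.

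The main obstacle is the stretch claim~(ii) at the sharp value $3$. For $(u,v)\in\RepEdges$ omitted from $\RepSpanner$, fix any $s\in\CentersOfRepsOf(v)$ and the witnessing $x_v\in\RepsOf(v)$ with $s\in S'(x_v)$; rule~(A) gives $(v,x_v)\in\RepSpanner$ and the construction of $\SuperDegSpanner$ adds the direct center edge $(x_v,s)\in\SuperDegSpanner$, so $\dist(v,s,\RepSpanner\cup\SuperDegSpanner)\le 2$. The omission condition supplies an earlier neighbor $w=v'_j$ of $u$ with some $x\in\RepsOf(w)$ satisfying $s\in S'(x)$, so analogously $\dist(u,s)\le 2$ via $u\!-\!w\!-\!x\!-\!s$; combining naively yields a $4$-hop path (or $5$-hop if one must invoke $\SuperDegSpanner$'s $3$-spanner guarantee on $(u,w)$ in place of the direct $G$-edge). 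To tighten this to three hops, I plan to use the crucial fact that $u\in V_{[\MedDeg,\SuperDeg]}$ has degree at most $\SuperDeg$, so its entire neighbor list forms a single ``first block'' in the $\SuperDegSpanner$ construction: for every center $s\in S'$ that is itself adjacent to $u$ in $G$, the edge $(u,s)$ lies in $\SuperDegSpanner$. This reduces the question to checking that the witnessing $s$ (or an equivalent center reachable within one hop of $u$ in $\SuperDegSpanner$) can always be chosen to be adjacent to $u$, collapsing the $u$-to-$s$ sub-path to a single edge and yielding the $3$-hop concatenation $u\!-\!s\!-\!x_v\!-\!v$. Verifying that such an $s$ always exists, possibly by a pigeonhole argument over $\CentersOfRepsOf(v)\cap S'(u)$ or by invoking the crowded-vertex structure of $v$, will be the central technical step of the proof.
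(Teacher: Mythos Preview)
Your treatment of parts (i) and (iii) is correct and essentially identical to the paper's: the paper charges rule-(A) edges at $O(\log n)$ per vertex and rule-(B) edges at $|S'|=O((n\log n)/\SuperDeg)$ per vertex $u$, and for probes it traces exactly the same steps you outline (compute $\CentersOfRepsOf(v)$ with $O(\SuperDeg\log n)$ probes, compute $\RepsOf(w)$ for each of $\deg(u)\le\SuperDeg$ earlier neighbors, then $|\CentersOfRepsOf(v)|\cdot O(\SuperDeg\log n)$ adjacency tests).

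For part (ii), you have correctly identified a real issue, but your proposed resolution does not work. The paper's proof says only ``this follows from the argument given in the overview, and is similar to the analysis of $\HighDegSpanner$.'' Carrying out that analogy gives exactly your five-hop path: pick $s\in\CentersOfRepsOf(v)$, let $w$ be the \emph{first} neighbor of $u$ with $s\in\CentersOfRepsOf(w)$ (so $(u,w)\in\RepSpanner$ by rule~(B), since $s$ is then a new center at $w$), and concatenate $u\!-\!w\!-\!x_w\!-\!s\!-\!x_v\!-\!v$. Since the clusters here have radius two rather than one, the analogy to $\HighDegSpanner$ naturally yields stretch $5$, not $3$; and stretch $5$ is all that is required for the enclosing $5$-spanner theorem. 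The ``$\le 3$'' in the lemma statement appears to be a slip.

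Your plan to force stretch $3$ by arguing that some witnessing center $s$ must lie in $\Gamma(u)$ (and hence $(u,s)\in\SuperDegSpanner$) has a genuine gap. Your premise that $\deg(u)\le\SuperDeg$ implies $S'(u)=S'\cap\Gamma(u)$ is fine, but nothing in the construction forces any $s\in\CentersOfRepsOf(v)$ to be adjacent to $u$: such an $s$ is a center of $x_v$ (and of $x_w$ for some neighbor $w$ of $u$), i.e., it lies at distance two from $v$ and from $w$, with no constraint tying it to $\Gamma(u)$. A pigeonhole over $\CentersOfRepsOf(v)\cap S'(u)$ can be vacuous, and the crowded-vertex hypothesis on $v$ says only that $v$ has many neighbors in $V_{[\SuperDeg,n)}$, not that $u$ does. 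So the correct move is to stop at the five-hop path; that is the argument the paper intends, and it suffices.
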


\begin{proof}
\paragraph{(i) Size} W.h.p., in (A) we add at most $\sum_{v\in V_{[\MedDeg,\SuperDeg]}}|\RepsOf(v)| \leq n \cdot O(\log n) = O(n \log n)$. Similarly to the analysis of $\HighDegSpanner$, in (B) we add $|S'| = O((n \log n)/\SuperDeg)$ edges per vertex $u$, so $|E(\RepSpanner)| = O(n^2/\SuperDeg \cdot \log n)$.

\paragraph{(ii) Stretch} This claim follows from the argument given in the overview, and is similar to the analysis of $\HighDegSpanner$.

\paragraph{(iii) Probes} Computing $\CentersOfRepsOf(v)$ takes $O(\log n) \cdot \SuperDeg = O(\SuperDeg \log n)$ (recall that we only check $\Gamma_{\SuperDeg,1}$ of each reprsentative). Note also that $|\CentersOfRepsOf(v)| = O(\log^2 n)$ since $v$ has $O(\log n)$ representative, each of which belongs to $\Theta(\log n)$ clusters. Computing $\RepsOf$ for each neighbor $w \in \{v'_j\}_{j<i}$ of $u$ takes $O(\log n)$ probes each, which is $O(\SuperDeg \log n)$ in total since $\deg(u) \leq \SuperDeg$. This also introduces up to $\SuperDeg \cdot O(\log n)$ representatives in total. Checking whether each of the $O(\log^2n )$ centers in $\CentersOfRepsOf(v)$ is a center of each of these $O(\SuperDeg\log n)$ representative takes, in total w.h.p., $O(\SuperDeg \log^3 n)$ probes.
\end{proof}

\begin{figure}[!h]
  \begin{center}
    \includegraphics[width=0.7\textwidth]{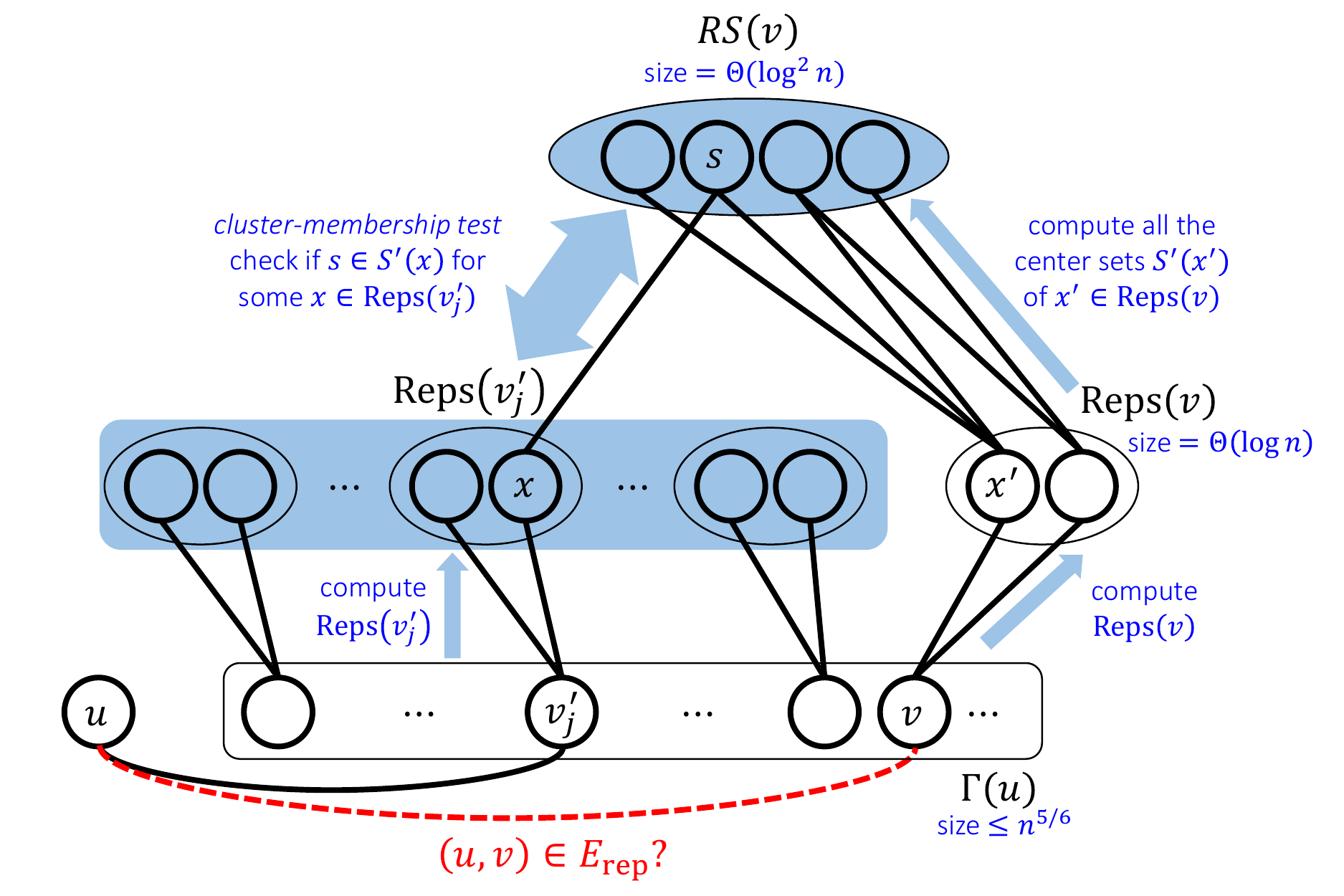}
  \end{center}
  \caption{Illustration for the local construction of $\RepSpanner$.}
\label{fig:5rep}
\end{figure}

\paragraph{Final $5$-spanner results}\label{sec:5results}
To obtain an LCA for $5$-spanners, we again invoke all of our LCAs for the four cases. Applying Lemma~\ref{lem:cellsp} and~\ref{lem:repsp}, we obtain the following LCA result for $5$-spanner in general graphs.

\begin{theorem}\label{thm:fivespannergen}
For every $n$-vertex simple undirected graph $G=(V,E)$ there exists an LCA for $5$-spanner with $O(n^{4/3}\log^2 n)$ edges and probe complexity $O(n^{5/6} \log^3 n)$. 
\end{theorem}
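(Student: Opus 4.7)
The plan is to instantiate the parameter $r = 3$ so that the thresholds collapse to $\LowDeg = \MedDeg = n^{1/3}$ and $\SuperDeg = n^{5/6}$, and then show that the four categories $\LowDegEdges, \CellEdges, \RepEdges, \SuperDegEdges$ cover every edge of $E$. The only non-trivial verification here is the middle regime: if $(u,v) \in E$ satisfies $\min\{\deg(u),\deg(v)\} > \LowDeg$ and $\max\{\deg(u),\deg(v)\} < \SuperDeg$, then both endpoints lie in $V_{[\MedDeg,\SuperDeg]}$, and each such endpoint is either deserted or crowded by definition. If both endpoints are deserted, then $(u,v)\in \CellEdges$; otherwise some endpoint is crowded, placing the edge in $\RepEdges$ (which allows the other endpoint to be anywhere in $V_{[\MedDeg,\SuperDeg]}$). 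Together with $\LowDegEdges$ and $\SuperDegEdges$, this yields $\LowDegEdges \cup \CellEdges \cup \RepEdges \cup \SuperDegEdges = E$.

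Next, I would invoke the four subgraph constructions already available to us: (i) $\LowDegSpanner = (V,\LowDegEdges)$ trivially takes care of $\LowDegEdges$ with $O(n^{4/3})$ edges and $O(1)$ probes via a pair of $\degreeP$ probes on the endpoints; (ii) $\CellSpanner$ from Lemma~\ref{lem:cellsp} takes care of $\CellEdges$; (iii) $\RepSpanner$, together with $\SuperDegSpanner$, takes care of $\RepEdges$ by Lemma~\ref{lem:repsp}; and (iv) $\SuperDegSpanner$ built by the $3$-spanner construction of Section~\ref{sec:three} with the degree threshold retuned to $n^{5/6}$ takes care of $\SuperDegEdges$ (and in fact provides stretch $3$ for them, which trivially suffices for a $5$-spanner). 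Applying Observation~\ref{take-care-lca}, the union $H = \LowDegSpanner \cup \CellSpanner \cup \RepSpanner \cup \SuperDegSpanner$ is a $5$-spanner of $G$, and the overall LCA is obtained by running the four individual LCAs on the query edge $(u,v)$ and answering \YES iff at least one does.

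Finally, I would sum the contributions column by column. For the size bound, plugging $\MedDeg = n^{1/3}$ and $\SuperDeg = n^{5/6}$ into the table gives $O(n^{4/3})$, $O(n^{4/3}\log^2 n)$, $O(n^{4/3}\log n)$, and $O(n^{4/3}\log n)$ respectively, so $|E(H)| = O(n^{4/3}\log^2 n)$ with high probability. For the probe complexity, the same substitution gives $O(1)$, $O(n^{5/6}\log^2 n)$, $O(n^{5/6}\log^3 n)$, and $O(n^{5/6}\log n)$, which sum to $O(n^{5/6}\log^3 n)$. Taking a union bound over the $\operatorname{poly}(n)$ high-probability events guaranteeing the sampling concentration in the four sub-constructions and the correctness of all cluster-membership tests yields the stated bounds with high probability. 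The only step I expect to require any care is the initial coverage check and confirming that the thresholds $\LowDeg = \MedDeg$ line up cleanly at $r=3$; everything else is a bookkeeping aggregation of results already established in Sections~\ref{sec:three}, \ref{sec:5spanner} and Lemmas~\ref{lem:cellsp}--\ref{lem:repsp}.
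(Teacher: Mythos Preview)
Your proposal is correct and matches the paper's approach exactly: the paper's proof of Theorem~\ref{thm:fivespannergen} is essentially the one-line remark that one invokes all four LCAs for the cases $\LowDegEdges$, $\CellEdges$, $\RepEdges$, $\SuperDegEdges$ and applies Lemmas~\ref{lem:cellsp} and~\ref{lem:repsp}. You have simply made explicit the coverage check (that at $r=3$ the thresholds $\LowDeg=\MedDeg$ collapse so the four cases partition $E$) and the bookkeeping of sizes and probe complexities from the table, which is precisely the content the paper leaves implicit.
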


Again, by combining results for larger degrees, we obtain an LCA for $5$-spanners with smaller sizes on graphs with minimum degree at least $n^{1/2-1/(2r)}$.
\begin{theorem}\label{thm:fivespannerk}
For every $r\geq 1$ and $n$-vertex simple undirected graph $G=(V,E)$ with minimum degree at least $n^{1/2-1/(2r)}$, there exists a (randomized) LCA for $5$-spanner with $O(n^{1+1/r} \log^2 n)$ edges and probe complexity of $O(n^{1-1/(2r)} \log^3 n)$. 
\end{theorem}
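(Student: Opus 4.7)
The plan is to observe that under the minimum-degree hypothesis the edge-categorization scheme used for the general $5$-spanner construction in Section~\ref{sec:5spanner} degenerates into only three non-trivial cases, which can be handled by the already-established lemmas with the tighter parameter settings $\MedDeg = n^{1/2-1/(2r)}$ and $\SuperDeg = n^{1-1/(2r)}$.

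More precisely, with $\deg(v)\geq \MedDeg$ for every $v\in V$, we have $V = V_{[\MedDeg,\SuperDeg]}\cup V_{[\SuperDeg,n)}$ and hence $\LowDegEdges = \emptyset$. Therefore every edge of $E$ falls into exactly one of $\SuperDegEdges$ (at least one endpoint of degree $\geq \SuperDeg$), $\CellEdges$ (both endpoints in $\DesertedVer$), or $\RepEdges$ (an endpoint in $\CrowdedVer$), exactly as in Section~\ref{sec:5spanner}. My plan is to instantiate each of the three existing sub-LCAs at the new thresholds and then glue them via Observation~\ref{take-care-lca}.

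For $\SuperDegEdges$, I would re-run the $3$-spanner construction of Section~\ref{sec:three} with the super-degree threshold set to $\SuperDeg=n^{1-1/(2r)}$ instead of $n^{3/4}$; the corresponding block size is $\SuperDeg$ and the sample rate is $\Theta(\log n / \SuperDeg)$, yielding $|S'|=\widetilde{O}(n^{1/(2r)})$ centers. A routine recount gives $O(n^{1+1/r}\log n)$ edges and probe complexity $O(\SuperDeg \log n) = O(n^{1-1/(2r)}\log n)$, while guaranteeing stretch $3$ on $\SuperDegEdges$. For $\CellEdges$, I apply Lemma~\ref{lem:cellsp} at these parameters; one just needs to verify the hypothesis $\MedDeg \leq \sqrt{n}\leq \SuperDeg$, which holds for every $r\geq 1$. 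Lemma~\ref{lem:cellsp} then delivers $O(n^2\log^2 n/\MedDeg^2) = O(n^{1+1/r}\log^2 n)$ edges and probe complexity $O((\SuperDeg + \MedDeg^2)\log^2 n)$; since $\MedDeg^2 = n^{1-1/r}\leq n^{1-1/(2r)} = \SuperDeg$, this simplifies to $O(n^{1-1/(2r)}\log^2 n)$ with stretch $5$. For $\RepEdges$, Lemma~\ref{lem:repsp} together with $\SuperDegSpanner$ yields $O(n^2\log n/\SuperDeg) = O(n^{1+1/(2r)}\log n) = O(n^{1+1/r}\log n)$ edges, probe complexity $O(\SuperDeg\log^3 n)=O(n^{1-1/(2r)}\log^3 n)$, and stretch $3$.

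Combining via Observation~\ref{take-care-lca}, the union $\SuperDegSpanner\cup\CellSpanner\cup\RepSpanner$ is a $5$-spanner for $G$; summing the bounds above gives $O(n^{1+1/r}\log^2 n)$ edges and $O(n^{1-1/(2r)}\log^3 n)$ probes. The final LCA queries all three sub-LCAs on a given edge $(u,v)\in E$ and returns \YES\ iff at least one of them does. There is no real obstacle here: the entire argument is a parameter-tuning reuse of Lemmas~\ref{lem:cellsp} and~\ref{lem:repsp} together with the parameterised $3$-spanner of Section~\ref{sec:three}. The only verifications required are the inequalities $\MedDeg\leq \sqrt{n}\leq \SuperDeg$ and $\MedDeg^2\leq \SuperDeg$, both of which hold for all $r\geq 1$, and the observation that the minimum-degree assumption is exactly what removes $\LowDegEdges$ from the partition so that the three sub-spanners suffice to cover $E$.
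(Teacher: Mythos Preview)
Your proposal is correct and is exactly the approach the paper takes: it simply re-instantiates the general parameters $\MedDeg=n^{1/2-1/(2r)}$, $\SuperDeg=n^{1-1/(2r)}$ set up at the start of Section~\ref{sec:5spanner}, observes that the minimum-degree hypothesis forces every edge into $\SuperDegEdges\cup\CellEdges\cup\RepEdges$, and reads off the edge and probe bounds from Lemmas~\ref{lem:cellsp} and~\ref{lem:repsp} together with the parametrized $3$-spanner construction for $\SuperDegEdges$. One tiny quibble: your claim that $\LowDegEdges=\emptyset$ uses $\LowDeg\le\MedDeg$, which only holds for $r\ge 3$; but this is irrelevant to the argument, since what you actually need (and correctly establish) is that the remaining three classes already cover all of $E$ once $\deg(v)\ge\MedDeg$ for every $v$.
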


\section{LCA for $O(k^2)$ Spanners}\label{sec:fullsparse}

In this section, we prove Theorem~\ref{thm:lowdegree} by showing LCAs for $O(k^2)$ spanners and $\widetilde{O}(n^{1+1/k})$ edges.
%\begin{theorem}\label{thm:spannergeneral}
%For $k \geq 1$, and $n$-vertex graph with maximum degree $\Delta$, there exists an LCA that computes an $O(k^2)$-spanner $H \subseteq G$ with $\widetilde{O}(n^{1+1/k})$ edges and probe complexity $\widetilde{O}(\Delta^4 n^{2/3})$. In particular, the probe complexity of the algorithm is sublinear for $\Delta=O(n^{1/12-\epsilon})$.
%%Hence, the algorithm is \emph{sub-linear} for $\Delta=\widetilde{o}(n^{1/12})$.
%\end{theorem}
The solution is inspired by the result of \citet{LeviLenzen17} and it is extended in two major aspects. First, we improve upon the stretch factor of the constructed spanner from $O(\log n \cdot(\Delta + \log n))$ down to $O(k^2)$ for any $k\geq 1$, thereby removing the dependencies on $\Delta$ and $n$ completely, at the cost of increasing the number of spanner edges from $\Theta(n)$ to $\widetilde{O}(n^{1+1/k})$. Second, we show how to implement a key part of their algorithm using a collection of $k$ bounded independence hash functions to 
reduce the number of random bits (kept at each machine) from linear to only polylogarithmic in $n$. We also remark that the probe complexity in our construction is improved by a factor of $\Delta$ compared to \cite{LeviLenzen17}.

\subsection{High-level Overview}
%Recall that $\Gamma^k(v,G)$ denote the vertices in the $k^\textrm{th}$-neighborhood of $v$ in $G$, and $\deg_k(v,G)=|\Gamma^k(v,G)|$.

We now provide some preliminaries and an outline of our $O(k^2)$-spanner construction. Throughout the main part of this section, we fix two parameters $L=\Theta(n^{1/3})$ and $p=\Theta(\log n/L)$. We only need to consider $k=O(\log n)$ because, by the size-stretch tradeoff of spanners, any $k=\Omega(\log n)$ yields a spanner of (roughly) linear size, $\widetilde{O}(n^{1+1/k}) = \widetilde{O}(n)$. We note that LCAs in this section only make use of the \neighborP~probes. 

\paragraph{Sparse and dense vertices} We first sample a collection $S$ of $O((n\log n)/L) = \widetilde{O}(n^{2/3})$ centers, which is implemented locally by having each vertex elect itself as a center with probability $p$. We remark that we never explicitly enumerate the entire set $S$, but only rely on the fact that we may locally determine whether a given vertex $v$ is a center based on its \ID~and the randomness, without using any probes. Next, we partition our vertices into {\em sparse} and {\em dense} vertices with respect to the center set $S$ based on their distances to the respective closest centers: a vertex $v$ is considered sparse if it is at distance more than $k$ away from all centers, and it is dense otherwise. By a hitting set argument, if the $k^\textrm{th}$-neighborhood of $v$ is of size at least $L$, then it most likely contains a center, making $v$ a dense vertex. This observation suggests that to verify that a vertex is dense, we do not necessarily need to find some center in $v$'s potentially large $k^\textrm{th}$-neighborhood: it also suffices to confirm that the neighborhood itself is large.

\begin{definition}[Sparse and dense]
A vertex $v$ is \emph{sparse} in $G$ if $\Gamma^{k}(v,G) \cap S=\emptyset$ and otherwise, it is \emph{dense}. Denote the sets of sparse vertices and dense vertices by $\Vsparse$ and $\Vdense$, respectively.
\end{definition}
We next partition the edge set of $G$ into $\Esparse = E(V,\Vsparse)$ and $\Edense = E(\Vdense,\Vdense)$, then take care\footnote{As a reminder, to ``take care'' of an edge $(u,v)$, we ensure that in the constructed spanner, there is a $u$-$v$ path whose length is at most the desired stretch factor. See Definition~\ref{def:takecare} for its formal definition.} of them by constructing $\Hsparse \subseteq \Esparse$ and $\Hdense \subseteq \Edense$, so that $H=\Hsparse \cup \Hdense$ gives a spanner for all edges of $G$. See Table~\ref{table:k2-summary} for a summary of the properties of each spanner.

\begin{table*}[!h]
\begin{center}
%\vspace{-.1in}
\renewcommand{\arraystretch}{1.5} % this reduces the vertical spacing between rows
%\linespread{0.9}\selectfont
\resizebox{\textwidth}{!}{%
\begin{tabular}{|c||c|c|c|c|} \hline
{\bf Subset} & {\bf Criteria} & {\bf Spanner Edges} & {\bf \# Edges} & {\bf Probe Complexity} \\ \hline\hline
$\Esparse $& at least one endpoint is sparse & $\Hsparse$ & $O(kn^{1+1/k})$ & $O(\Delta^2 L^2)$ \\ \hline
\multirow{2}{*}{$\Edense$} & \multirow{2}{*}{both endpoints are dense} & $\HdenseI$ & $O(n)$ & $O(\Delta^2 L^2)$ \\ \cline{3-5}
 & & $\HdenseB$ & $O(n^{1+1/k} \log^4 n)$ & $O(p\Delta^4 L^3 \log n)$ \\ \hline
\end{tabular}
}
%\vspace{-.1in}
\caption[Edge categorization for the construction of $O(k^2)$-spanners]{Edge categorization for the construction of $O(k^2)$-spanners, with respective spanner sizes and probe complexities.}
%\vspace{-.1in}
\label{table:k2-summary}
\end{center}
\end{table*}

The outline of the construction is given as follows. For convenience, tables of various probe complexities for computing $\Hsparse$ and $\Hdense$ are provided: Table~\ref{table:sparse-probes} (page \pageref{table:sparse-probes}) and Table~\ref{table:dense-probes} (page \pageref{table:dense-probes}), respectively.

\paragraph{Taking care of $\Esparse$} (Section~\ref{sec:k2-sparse})
Attempting to leverage the clustering approach, we need to partition our vertices based on their distances to $S$. However, some vertices can be very far from all centers: connecting them to their respective closest centers would still incur a large stretch factor. We observe that every sparse vertex $v$ has a small $k^\textrm{th}$-neighborhood: $\deg_k(v,G)=|\Gamma^k(v,G)| = O(L)$ (hence the name ``sparse''). Thus, we may test whether some vertex $v$ is sparse by simply examining up to $O(L)$ vertices closest to it, using $O(\Delta L)$ probes. To take care of sparse vertices' incident edges $\Esparse$, we can then afford to identify the query edge's endpoints' $k^\textrm{th}$-neighborhoods and simulate a $k$-round distributed $(2k-1)$-spanner algorithm on the subgraph $\Gsparse = (V, \Esparse)$. We locally obtain our spanner $\Hsparse$ of $\Gsparse$ using $O(\Delta^2 L^2)$ probes.

\paragraph{Partitioning of dense vertices into Voronoi cells} (Section~\ref{sec:k2-dense-voronoi})
In the subgraph induced by dense vertices $\Gdense = (\Vdense, \Edense)$, all vertices are at distance at most $k$ from some center. We partition them into \emph{Voronoi cells} by connecting each of them to its closest center. We show that each dense vertex can find its shortest path to its center in $O(\Delta L)$ probes. Building on this subroutine, we straightforwardly connect vertices within each Voronoi cell to their center via these shortest paths, forming a \emph{Voronoi tree} of depth at most $k$, which in turn bounds the diameter of every Voronoi cell in our spanner by $2k$. In particular, our construction improves upon the construction of \cite{LeviLenzen17} that provides a diameter bound of $O(\Delta + \log n)$. We denote by $\HdenseI$ the set of Voronoi tree edges, as each tree spans vertices \textbf{i}nside the same Voronoi cell.

\paragraph{Refining Voronoi cells into small clusters} (Section~\ref{sec:k2-dense-cluster})
Naturally as our next step, we would like to consider our Voronoi cells as ``supervertices,'' and connect them via an $O(k)$-spanner with respect to this ``supergraph''. However, determining the connectivity in this supergraph is impossible in sub-linear probes, as a Voronoi cell may contain as many as $\Theta(n)$ vertices. To handle this issue, we define a local rule based on the subtree sizes of the Voronoi tree, which \emph{refines} our Voronoi cells. We show that this rule partitions the dense vertices into $\widetilde{O}(n/L)$ clusters of size $O(L)$ each, such that each vertex can identify its \emph{entire} cluster using $O(\Delta^3 L^2)$ probes.

\paragraph{Connecting between Voronoi cells through clusters} (Section~\ref{sec:k2-dense-connect-overview})
We then formalize local criteria for connecting Voronoi cells (through clusters), forming the set of spanner edges between clusters, $\HdenseB$, using $\widetilde{O}(\Delta^4 L^2)$ probes: the union $\Hdense = \HdenseI \cup \HdenseB$ is the desired spanner of $\Gdense$. For any omitted edge between clusters, $\HdenseB$ contains a path connecting the endpoints' Voronoi cells that, w.h.p., visits only $O(k)$ other Voronoi cells along the way. Since each Voronoi cell has a $2k$-diameter spanning Voronoi tree in $\HdenseI$, $\Hdense$ achieves the desired $O(k^2)$ stretch factor. The rules for choosing $\HdenseB$ are based on marking $\widetilde{O}(n^{1/3})$ random Voronoi cells along with the clusters therein, then adding at most $\widetilde{O}(n^{1/k})$ edges per each pair of cluster and marked cluster, using a total of $\widetilde{O}(n^{1+1/k})$ edges. Sections~\ref{sec:k2-dense-connect-algo}-\ref{sec:k2-dense-connect-proof} formalize these ideas into an efficient LCA, then show the desired properties of the constructed $\Hdense$ and wrap up the proof, respectively.

%\paragraph{Where to put the summary of differences?}

\paragraph{Reducing the required amount of independent random bits} For simplicity, our analysis in this section uses a linear number of independent random bits. This assumption for the above construction is deferred to Section~\ref{APPEND:bounds}, where we provide an implementation using only $O(\log^2 n)$ independent random bits.

%\paragraph{Various extensions} (Section~\ref{sec:dense-extension}) We extend our construction in two different ways. Firstly, observe that our probe complexity of $\widetilde{O}(\Delta^4 n^{2/3})$ is not sub-linear for $\Delta = \Omega(n^{1/12})$, so we create a reduction that supports a larger maximum degree at the cost of using more edges. We construct a collection of random subgraphs $G_i$ of the input graph $G$ that altogether covers $G$, then efficiently simulate the behavior of the oracle for $G_i$. In particular, each $G_i$ has sufficiently small maximum degree that the spanner $H_i$ of $G_i$ can be computed using the developed algorithm with sub-linear probes; in addition, the union of these $H_i$'s forms the desired spanner of $G$. Here, we obtain an LCA for constructing $O(k^2)$-spanners with $o(n^{4/3})$ edges (e.g., not dominated by our $5$-spanner results) that has sub-linear probe complexity for $\Delta=O(n^{3/8-\epsilon})$, greatly improving upon the earlier bound of $\Omega(n^{1/12-\epsilon})$. Secondly, we also consider weighted graphs and apply a similar approach, but instead $G$ is partitioned into subgraphs where edges within the same subgraph are of similar weights.

\subsection{LCA for computing a $(2k-1)$-spanner $\Hsparse$ for $\Esparse$}\label{sec:k2-sparse}

\paragraph{Checking if a vertex is sparse or dense}
We first propose a variant of the breadth-first search (BFS) algorithm that, when executed starting from a vertex $v$, either finds $v$'s center or verifies that $v$ is sparse. We justify the necessity to employ a different BFS variant from that of the prior works, namely \cite{levi2016local,LeviLenzen17}, as follows. In these prior works, the BFS algorithm explores \emph{all} vertices in an entire level of the BFS tree in each step until some center is encountered, and chooses the center with the lowest $\ID$ among them. This distance tie-breaking rule via $\ID$~directly ensures that the set of vertices choosing the same center induces a connected component in $G$\footnote{If $v$ chooses $s$ at distance $d$ as its center, and another vertex $u$ is at distance $d'<d$ from $s$, then $u$ must also choose $s$ because $s$ is the center of minimum $\ID$ in $\Gamma^d(v,G)\supset \Gamma^{d'}(u,G)$, and there are no other centers in $\Gamma^{d-1}(v,G)\supset\Gamma^{d'-1}(u,G)$.}.

We have shown before that it suffices to explore $L$ vertices closest to a dense vertex $v$ in order to \emph{discover} some center. However, to \emph{choose} $v$'s center via the above approach, we must explore the entire last level of the BFS tree in order to apply the tie-breaking rule: this last level may contain as many as $\Theta(\Delta L)$ vertices. Instead, we aim to further reduce a factor of $\Delta$ from the probe complexity by designing a BFS algorithm that picks the first center it discovers as $v$'s center: this center may not be the lowest-$\ID$ center in that level. The desired connectivity guarantee does not trivially follow under this rule, and will be further discussed in Section~\ref{sec:k2-dense-voronoi}; for now we focus on $\Gsparse$.

%For ease of presentation, we define the new type of probe called the \textbf{\allneighborsP~probe} that, given a vertex $v$, return all neighbors of $v$. This probe can be implemented easily using $\deg(v) \leq \Delta$ \neighborP~probes.
We provide our BFS variant as follows. Note that $Q$ denotes a first-in first-out queue, and $D$ denotes the set of discovered vertices. We say that the BFS algorithm \emph{discovers} a vertex $w$ when $w$ is added to $D$.%; in the process, the directed edge $(u,w)$ is added to $v$'s BFS tree.

\begin{figure}[!h]
\vspace{2pt}\noindent\fbox{\begin{minipage}{\dimexpr\textwidth-2\fboxsep-2\fboxrule\relax}
\textbf{BFS variant} of a search for centers starting at vertex $v$
\begin{compactitem}[\quad]
\item $Q.\textrm{enqueue}(v)$, $D.\textrm{add}(v)$
\item \textbf{while} $Q$ is not empty% \textbf{do}
\begin{compactitem}[\quad]
\item $u \leftarrow Q.\textrm{dequeue}$
\item probe for all neighbors $\Gamma(u,G)$ of $u$
\item \textbf{for each} $w \in \Gamma(u,G) \setminus D$ in the increasing order of $\ID$s
\begin{compactitem}[\quad]
\item $Q.\textrm{enqueue}(w)$, $D.\textrm{add}(w)$ $\quad\rhd$ $w$ is \emph{discovered}
\end{compactitem}
\end{compactitem}
\end{compactitem}
\end{minipage}}\vspace{4pt}
\caption[BFS variant for finding centers]{BFS variant for finding centers.}
\end{figure}

Denote by $D^k_L(v)$ the set of the first $L$ vertices discovered by the BFS variant, restricting to vertices at distance at most $k$ from $v$. (Equivalently speaking, if we adjust the BFS algorithm above so that it also terminates as soon as we have discovered $L$ vertices or dequeued a vertex at distance $k$ from $v$, then $D^k_L(v)$ would be the set $D$ upon termination.) Note that $D^k_L(v,G) \subseteq \Gamma^k(v,G)$, and the containment is strict when $\deg_k(v,G) > L$.

\paragraph{BFS probe complexity} Recall that each vertex elects itself as a center with probability $p_{\textrm{center}}=(c_{\textrm{center}}\log n)/L$. We choose a sufficiently large constant $c_{\textrm{center}}$ so that, by the hitting set argument, w.h.p., $D^k_L(v,G) \cap S \neq \emptyset$ for every $v$ with $|D^k_L(v,G)| = L$. That is, w.h.p., every vertex $v$ with $\deg_k(v,G) \geq L$ must be dense. Equivalently:
\begin{observation}\label{obs:sparsedeg}
W.h.p., for every sparse vertex $v$, $\deg_k(v,G) < L$.
\end{observation}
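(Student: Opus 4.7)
The plan is to prove the contrapositive: with high probability, every vertex $v$ with $\deg_k(v,G)\geq L$ is dense, i.e.\ satisfies $\Gamma^k(v,G)\cap S \neq \emptyset$. This is a standard hitting-set argument exploiting the fact that centers are sampled independently with probability $p=(c_{\textrm{center}}\log n)/L$, and should not involve any subtlety.

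First, I would fix a vertex $v$ with $\deg_k(v,G)\geq L$. The key observation is that the set $D^k_L(v,G)$ produced by the deterministic BFS variant (whose tie-breaking depends only on vertex $\ID$s, not on the random tape used to sample $S$) is a deterministic function of $v$ and $G$. Therefore the randomness in $S$ is independent of the choice of this set. Moreover $|D^k_L(v,G)|=L$ and $D^k_L(v,G)\subseteq \Gamma^k(v,G)$.

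Next, since each of the $L$ vertices in $D^k_L(v,G)$ is added to $S$ independently with probability $p$,
\[
\Pr\bigl[D^k_L(v,G)\cap S=\emptyset\bigr] \;\leq\; (1-p)^L \;\leq\; e^{-pL} \;=\; n^{-c_{\textrm{center}}}.
\]
Whenever the event $D^k_L(v,G)\cap S\neq\emptyset$ occurs, the containment $D^k_L(v,G)\subseteq\Gamma^k(v,G)$ immediately gives $\Gamma^k(v,G)\cap S\neq\emptyset$, so $v$ is dense.

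Finally, a union bound over all (at most $n$) vertices $v$ with $\deg_k(v,G)\geq L$ shows that the probability that any such $v$ fails to be dense is at most $n\cdot n^{-c_{\textrm{center}}}=n^{-(c_{\textrm{center}}-1)}$, which is $n^{-\Omega(1)}$ for a sufficiently large constant $c_{\textrm{center}}$. Taking the contrapositive yields the claim: w.h.p., every sparse vertex $v$ satisfies $\deg_k(v,G)<L$. There is no real obstacle in this argument; the only thing to be careful about is invoking the fact that $D^k_L(v,G)$ is a fixed set, independent of the sampling of $S$, so that the bound $(1-p)^L$ is legitimate without worrying about adaptive dependence between BFS discoveries and the realization of $S$.
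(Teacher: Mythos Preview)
Your proposal is correct and follows essentially the same hitting-set argument the paper invokes: fix the deterministic set $D^k_L(v,G)$ of size $L$, bound $\Pr[D^k_L(v,G)\cap S=\emptyset]\le(1-p)^L\le n^{-c_{\textrm{center}}}$, and union-bound over all vertices. Your explicit remark that $D^k_L(v,G)$ is determined solely by $G$ and the vertex $\ID$s (hence independent of the sampling of $S$) is a useful clarification that the paper leaves implicit.
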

This observation leads to a subroutine for verifying whether a vertex $v$ is sparse or dense based on $D^k_L(v,G)$: 
\begin{claim}
$v$ is sparse if and only if both of the following holds: $|D^k_L(v,G)|<L$ and $D^k_L(v,G) \cap S = \emptyset$.
\end{claim}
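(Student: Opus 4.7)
The claim is essentially a direct consequence of the definitions together with Observation~\ref{obs:sparsedeg}, so my plan is to verify both implications separately, keeping in mind that the ``only if'' direction holds w.h.p.\ (inheriting the probabilistic guarantee from the observation), while the ``if'' direction is deterministic.

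For the forward direction, I would assume $v$ is sparse, i.e.\ $\Gamma^k(v,G)\cap S=\emptyset$. Since by construction $D^k_L(v,G)\subseteq \Gamma^k(v,G)$, the second condition $D^k_L(v,G)\cap S=\emptyset$ is immediate. For the first condition, I would invoke Observation~\ref{obs:sparsedeg}, which guarantees w.h.p.\ that $\deg_k(v,G)<L$ for every sparse $v$. Because $D^k_L(v,G)$ collects only vertices within distance $k$, its size is bounded by $\deg_k(v,G)$, hence $|D^k_L(v,G)|\le \deg_k(v,G)<L$.

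For the backward direction (which does not require any probabilistic input), I would argue as follows. Suppose $|D^k_L(v,G)|<L$. By the equivalent termination condition for $D^k_L(v,G)$ stated right after its definition, the BFS variant terminated neither by discovering $L$ vertices nor by dequeuing a vertex at distance exactly $k$; the only remaining way for the queue $Q$ to empty is that every vertex in $\Gamma^k(v,G)$ has already been discovered. Thus $D^k_L(v,G)=\Gamma^k(v,G)$. Combined with the assumption $D^k_L(v,G)\cap S=\emptyset$, this gives $\Gamma^k(v,G)\cap S=\emptyset$, which is precisely the definition of $v$ being sparse.

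The only subtle point is ensuring that $|D^k_L(v,G)|<L$ really forces the BFS to have exhausted $\Gamma^k(v,G)$ rather than stopping for some other reason; I would spell this out explicitly by noting that the BFS only halts in three ways (queue empty, $L$ discovered vertices, or a level-$k$ dequeue), and that the first is the only one compatible with the strict inequality. No other step is non-routine, so I do not expect any real obstacle beyond cleanly stating this case analysis and citing Observation~\ref{obs:sparsedeg} for the w.h.p.\ qualification.
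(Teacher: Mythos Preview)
Your approach matches the paper's: both directions are handled exactly as you outline, with Observation~\ref{obs:sparsedeg} supplying the w.h.p.\ bound $\deg_k(v,G)<L$ in the forward direction, and the implication ``$|D^k_L(v,G)|<L \Rightarrow D^k_L(v,G)=\Gamma^k(v,G)$'' driving the backward direction (the paper phrases the latter as the contrapositive, starting from ``$v$ dense'', but the content is identical).

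One small correction to your case analysis: it is \emph{not} true that the level-$k$ dequeue termination is incompatible with $|D^k_L(v,G)|<L$. If $\deg_k(v,G)<L$ but $v$ has vertices at distance exactly $k$ (and possibly beyond), the modified BFS will dequeue a distance-$k$ vertex and halt by the third condition, with $|D|=\deg_k(v,G)<L$. What saves you is that this case \emph{also} yields $D^k_L(v,G)=\Gamma^k(v,G)$: by FIFO order, before the first distance-$k$ vertex is dequeued, every distance-$(k{-}1)$ vertex has already been dequeued, so every distance-$\le k$ vertex has been discovered. Thus both the ``queue empty'' and ``level-$k$ dequeue'' terminations give $D^k_L(v,G)=\Gamma^k(v,G)$ whenever $|D^k_L(v,G)|<L$, and your conclusion stands; just adjust the sentence claiming that only the first termination mode is possible.
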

\begin{proof} \textbf{(Sparse)} If $v$ is sparse ($\Gamma^k(v,G)\cap S = \emptyset$), then by Obs.~\ref{obs:sparsedeg}, $\deg_k(v,G) < L$, so $D^k_L(v,G)=\Gamma^k(v,G)$ and both conditions follow.
\textbf{(Dense)} If $v$ is dense ($\Gamma^k(v,G)\cap S \neq \emptyset$), we assume $|D^k_L(v,G)| < L$, then $D^k_L(v,G) = \Gamma^k(v,G)$ and hence $D^k_L(v,G) \cap S=\Gamma^k(v,G) \cap S \neq \emptyset$.
\end{proof}

To compute $D^k_L(v,G)$ we must discover (up to) $L$ distinct vertices. Recall that we always probe for all neighbors of a vertex at a time. Observe that for any positive integer $\ell$, among the neighbor sets of $\ell-1$ vertices in the same connected component of size at least $\ell$, at least one must necessarily contain an $\ell^\textrm{th}$ vertex from the component. Inductively, probing for all neighbors of $\ell-1$ vertices during the BFS algorithm must reveal at least $\ell$ vertices unless the entire component containing $v$ is exhausted. Hence, we conclude that we only need to probe for all neighbors of $L-1$ vertices during our BFS in order to compute $D^k_L(v,G)$, requiring $O(\Delta L)$ probes in total.

\paragraph{Local simulation of a distributed spanner algorithm}
We construct a $(2k-1)$-spanner $\Hsparse \subseteq \Esparse$ via a local simulation of a $k$-round distributed algorithm for constructing spanners on the subgraph $\Gsparse$. Since we also want the randomized algorithm to operate on $O(\log n)$-wise independence random bits, we will use the distributed construction of \citet{baswana07} with bounded independence \cite{Censor-HillelPS16}:

\begin{theorem}[From \cite{baswana07,Censor-HillelPS16}]
There exists a randomized $k$-round distributed algorithm for computing a $(2k-1)$-spanner $H$ with $O(kn^{1+1/k})$ edges for the unweighted input graph $G$. More specifically, for every $(u,v)\in H$, at the end of the $k$-round procedure, at least one of the endpoints $u$ or $v$ (but not necessarily both) has chosen to include $(u,v)$ in $H$. Moreover, this algorithm only requires $O(\log n)$-wise independence random bits.
\end{theorem}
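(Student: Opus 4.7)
The plan is to combine the classical Baswana--Sen $(2k-1)$-spanner construction \cite{baswana07} with a bounded-independence sampling step \cite{Censor-HillelPS16}. The algorithm proceeds in $k$ phases maintaining a partial clustering $\mathcal{C}_i$ of the vertices, with $\mathcal{C}_0 = \{\{v\} : v \in V\}$. In phase $i \in \{1, \ldots, k-1\}$, each cluster of $\mathcal{C}_{i-1}$ (identified by its center) independently \emph{survives} into $\mathcal{C}_i$ with probability $n^{-1/k}$. Every vertex $v$ then acts locally: if $v$ has a neighbor in some surviving cluster, $v$ joins one such cluster, adds the connecting edge to $H$, and adds exactly one representative edge to each non-surviving adjacent cluster; otherwise, $v$ adds one edge to every cluster of $\mathcal{C}_{i-1}$ adjacent to it. In phase $k$, every remaining vertex adds one edge per adjacent cluster of $\mathcal{C}_{k-1}$. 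Each phase takes $O(1)$ communication rounds, so the entire procedure fits in $k$ rounds.

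For the stretch bound, I would first show by induction on $i$ that every cluster of $\mathcal{C}_i$ spans a tree of radius at most $i$ in $H$ rooted at its center. For an omitted edge $(u,v) \notin H$, let $j$ be the last phase in which $u$ remained in a surviving cluster (or $j = k-1$ if $u$ survived to the end). Since $(u,v)$ was not chosen, in phase $j+1$ (or phase $k$) $u$ must have placed a representative edge $(u,w)$ into $v$'s cluster $C \in \mathcal{C}_j$. Combining $(u,w)$ with the radius-$j$ tree path $w \leadsto \mathrm{center}(C) \leadsto v$ gives a $u$-$v$ path of length at most $1 + 2j \leq 2k-1$.

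For the size bound, the standard charging argument says that each vertex $v$ contributes at most one spanner edge per adjacent cluster per phase, and the expected number of non-surviving adjacent clusters at phase $i$ is at most $n^{1/k}$ because each cluster survives with probability $n^{-1/k}$. Summing over $n$ vertices and $k$ phases yields the $O(kn^{1+1/k})$ bound in expectation; high-probability concentration then follows from standard tail bounds. The bounded-independence refinement replaces the Chernoff bound in this tail argument with a $t$-wise concentration inequality for $t = O(\log n)$ (e.g., the Bellare--Rompel inequality), which gives polynomially small failure probability under only $O(\log n)$-wise independent survival coins and hence allows a union bound over all vertices and phases.

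The ``endpoint responsibility'' property is immediate from the algorithm: every edge $(u,v) \in H$ is inserted by an explicit local rule at one specific endpoint in one specific phase (either the ``join cluster'' rule or the ``representative edge'' rule), and that endpoint can be designated as the owner. I expect the main obstacle to lie in the bounded-independence analysis: one must ensure that, at phase $i$, conditional on the clustering $\mathcal{C}_{i-1}$ determined by earlier phases, the survival coins of the clusters in $\mathcal{C}_{i-1}$ retain the required $O(\log n)$-wise independence. This can be arranged by using an independent $O(\log n)$-wise independent hash per phase, so that each concentration bound is applied only to coins of a single phase while the prior clustering is treated as a fixed outcome of conditioning.
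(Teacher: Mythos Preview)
The paper does not prove this theorem; it is quoted verbatim as an external result from \cite{baswana07,Censor-HillelPS16} and used as a black box in the local simulation for $\Hsparse$. So there is no ``paper's own proof'' to compare against, and your sketch is essentially a summary of those cited works rather than a competing argument.

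That said, your size argument contains a genuine slip. As you describe phase~$i$, a clustered vertex $v$ ``adds exactly one representative edge to each non-surviving adjacent cluster,'' and you then claim ``the expected number of non-surviving adjacent clusters at phase $i$ is at most $n^{1/k}$.'' The second sentence is false: if $v$ has $d$ adjacent clusters, the expected number that do not survive is $d(1-n^{-1/k})$, which can be $\Theta(d)$. The actual Baswana--Sen rule is that $v$ fixes an ordering of its adjacent clusters (by edge weight, or lexicographically in the unweighted case), joins the \emph{first} surviving cluster in that order, and adds one edge only to the non-surviving clusters that \emph{precede} it in the order. The number of such clusters is then a geometric-type random variable with success probability $n^{-1/k}$, giving expectation $O(n^{1/k})$ per vertex per phase; this is the quantity to which the $O(\log n)$-wise tail bound of \cite{Censor-HillelPS16} is applied. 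Your bounded-independence plan (fresh $O(\log n)$-wise hash per phase, treat $\mathcal{C}_{i-1}$ as fixed by conditioning) and your endpoint-responsibility observation are both correct.
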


For a query edge $(u,v)$, we first verify that at least one of $u$ or $v$ is sparse; otherwise we handle it later during the dense case. Without loss of generality, assume that $v$ is sparse. To simulate the distributed algorithm on $\Gsparse$ for vertex $v$, we first learn its $k^\textrm{th}$-neighborhood $\Gamma^k(v, G)$, and collect all the induced edges therein. We then verify every vertex in $\Gamma^k(v,G)$ whether it is dense or sparse, so that we can determine the edges that also appear in $\Esparse$, and simulate the distributed algorithm as if it is executed on $\Gsparse$ accordingly.

According to the description of the distributed algorithm's behavior, for a query edge $(u,v)$, we need to simulate this algorithm on both $u$ and $v$, requiring the knowledge of both $\Gamma^k(u, G)$ and $\Gamma^k(v, G)$. Since $v$ is sparse and $\Gamma^k(u, G) \subseteq \Gamma^{k+1}(v,G)$, we have $|\Gamma^k(u, G)| \leq |\Gamma^{k+1}(v,G)| \leq \Delta \cdot |\Gamma^{k}(v,G)| < \Delta L$ by Obs.~\ref{obs:sparsedeg}. So, we need $O(\Delta^2 L)$ \neighborP~probes to compute the subgraph of $G$ induced by $\Gamma^k(u, G)$ and $\Gamma^k(v, G)$. We must also test up to $O(\Delta L)$ vertices to determine whether they are sparse or not, so our simulation process requires $O(\Delta^2 L^2)$ probes in total. We conclude the analysis of our LCA for computing $\Hsparse$ as the following lemma; see Table~\ref{table:sparse-probes} for a summary of probe complexities.

\begin{lemma}[$\Hsparse$ properties and probe complexity]\label{lem:k2sparse}
For any stretch factor $k\geq1$, there exists an LCA that w.h.p., given an edge $(u,v)\in E$, decides whether $(u,v) \in \Hsparse$ using probe complexity $O(\Delta^2 L^2)$, where $\Hsparse$ is a $k$-spanner of $\Gsparse$ with $O(kn^{1+1/k})$ edges.
\end{lemma}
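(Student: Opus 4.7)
The plan is to describe an LCA that, for a query edge $(u,v)\in E$, first determines whether at least one endpoint lies in $\Vsparse$ (answering \NO\ and halting otherwise), and then \emph{locally simulates} the distributed $(2k-1)$-spanner algorithm of \cite{baswana07,Censor-HillelPS16} on $\Gsparse$ from the perspective of the sparse endpoint. Consistency across edge queries is inherited directly from the distributed algorithm: it guarantees that at least one endpoint will decide to include $(u,v)$ in $H$, so our LCA answers \YES\ whenever the simulation at \emph{either} endpoint (among those that are sparse) decides so, and \NO\ otherwise.

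First I would argue the sparseness test. Using the BFS variant already defined, computing $D^k_L(w,G)$ costs $O(\Delta L)$ probes, since probing all neighbors of any $L-1$ discovered vertices reveals an $L^\textrm{th}$ vertex unless the $k$-neighborhood is exhausted. Combined with the observation that $w \in \Vsparse$ iff $|D^k_L(w,G)|<L$ and $D^k_L(w,G)\cap S=\emptyset$, one can decide sparseness of any vertex in $O(\Delta L)$ probes. In particular, when $v$ is sparse, Observation~\ref{obs:sparsedeg} gives $|\Gamma^k(v,G)| < L$, so probing the entire induced subgraph on $\Gamma^k(v,G)$ costs $O(\Delta L)$ probes, while $\Gamma^k(u,G)\subseteq \Gamma^{k+1}(v,G)$ has size at most $\Delta L$, costing $O(\Delta^2 L)$ probes to expand.

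Next, to simulate the distributed algorithm, we must classify every vertex inside $\Gamma^k(u,G)\cup\Gamma^k(v,G)$ as sparse or dense, so that only edges in $\Esparse$ are used in the simulation. There are $O(\Delta L)$ such vertices, and each sparseness test costs $O(\Delta L)$ probes, yielding the bottleneck $O(\Delta^2 L^2)$ probes overall. Once the local picture of $\Gsparse$ restricted to $\Gamma^k(v,G)\cup\Gamma^k(u,G)$ is known, the $k$-round distributed algorithm can be replayed exactly for each endpoint using the shared random tape (this is where the $O(\log n)$-wise independence of \cite{Censor-HillelPS16} is invoked); the LCA answers \YES\ iff at least one endpoint's simulation includes $(u,v)$. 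The spanner property and the $O(kn^{1+1/k})$ edge bound then follow from the corresponding guarantees on $\Gsparse$, since every edge of $\Esparse$ has an endpoint in $\Vsparse$ and is therefore taken care of by a spanner path of stretch at most $2k-1$ that stays inside $\Gamma^k(v,G)$.

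The main obstacle I anticipate is making sure the two endpoints' local simulations agree on membership of $(u,v)$ in $\Hsparse$: both must observe the same subgraph structure and the same random bits for the vertices whose decisions affect $(u,v)$. This is resolved by two facts: (a) the distributed algorithm's decision about $(u,v)$ at either endpoint depends only on information within that endpoint's $k$-neighborhood, which we fully reconstruct; and (b) we only need \emph{some} endpoint (not both) to claim $(u,v)\in \Hsparse$, by the cited property of \cite{baswana07}. So as long as the sparse endpoint's local simulation is faithful, the LCA is consistent with a single globally-defined spanner $\Hsparse\subseteq \Esparse$.
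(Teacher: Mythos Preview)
Your approach is essentially the paper's, and the probe-count bookkeeping is correct. There is, however, a genuine gap in the consistency argument. You restrict the simulation to endpoints ``among those that are sparse,'' and conclude that ``as long as the sparse endpoint's local simulation is faithful, the LCA is consistent.'' This is not sufficient. The cited property of the Baswana--Sen algorithm says that for every spanner edge $(u,v)\in\Hsparse$, \emph{at least one} endpoint chooses to keep it---but that endpoint may well be the dense one. If $u$ is dense and $v$ is sparse, and in the global run on $\Gsparse$ it is $u$ that decides to keep $(u,v)$, then simulating only at $v$ answers \NO, and your LCA is inconsistent with the globally defined $\Hsparse$.

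The fix is exactly what the paper does: simulate the $k$-round algorithm at \emph{both} endpoints, regardless of whether $u$ is sparse. You already pay for this---you correctly note that $\Gamma^k(u,G)\subseteq\Gamma^{k+1}(v,G)$ has size at most $\Delta L$ and costs $O(\Delta^2 L)$ probes to expand, and you classify all $O(\Delta L)$ vertices therein at cost $O(\Delta L)$ each. So the information is there; you just need to run $u$'s simulation as well. Once you do, the LCA answers \YES\ iff either endpoint's simulation keeps $(u,v)$, and this is faithful to $\Hsparse$ because each endpoint's decision depends only on its $k$-neighborhood in $\Gsparse$, which you have reconstructed, together with the shared random tape. (A minor aside: the claim that the spanner path ``stays inside $\Gamma^k(v,G)$'' is neither true in general nor needed---the stretch bound is a global guarantee on $\Gsparse$.)
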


\begin{table*}
\centering
\renewcommand{\arraystretch}{1.5} % this reduces the vertical spacing between rows
%\linespread{0.75}\selectfont
\begin{tabular}{ l|l }
\textbf{Subroutine} & \textbf{Probe complexity} \\ \hline \hline
determine whether $v$ is a center & none \\ \hline
compute $D^k_L(v,G)$, and test whether $v \in \Vsparse$ or $v \in \Vdense$ & $O(\Delta L)$ \\\hline
for $(u,v) \in \Esparse$, compute $\Gamma^k(u,G)$ and $\Gamma^k(v,G)$ & $O(\Delta^2 L)$ \\ \hline
test $(u,v)\in\Esparse$ whether $(u,v) \in \Hsparse$ & $O(\Delta^2 L^2)$
\end{tabular}
\caption[Probe complexities for computing $\Hsparse$]{Probe complexities of various subroutines used for computing $\Hsparse$.}
\label{table:sparse-probes}
\end{table*}

\subsection{LCA for computing an $O(k^2)$-spanner $\Hdense$ for $\Edense$}

Recall that $\Vdense=V\setminus \Vsparse$ is the collection of dense vertices characterized as $\Gamma^{k}(v,G) \cap S\neq\emptyset$, and can be verified by computing $D^k_L(v,G)$ with $O(\Delta L)$ probes.
We will now take care of $\Edense=E(\Vdense,\Vdense)$ by constructing an $O(k^2)$-spanner $\Hdense \subseteq \Edense$ so that $H=\Hsparse \cup \Hdense$ becomes the desired spanner of $G$. To do so, we follow the general approach of \citet{LeviLenzen17} with several keys modifications along the way.
Table~\ref{table:dense-probes} keeps track of the probe complexities for various useful operations for constructing $\Hdense$.

In the following, we show how to partition the dense vertices into Voronoi cells, and connect vertices in each cell via a low-depth tree structure in Section~\ref{sec:k2-dense-voronoi}. We then show how to subdivide Voronoi cells into clusters of size $O(L)$ in Section~\ref{sec:k2-dense-cluster}, and discuss how we connect them into the desired spanner in Sections~\ref{sec:k2-dense-connect-overview}-\ref{sec:k2-dense-connect-proof}. We denote the set of spanner edges connecting vertices \emph{inside} Voronoi cells by $\HdenseI$, and edges connecting \emph{between} clusters by $\HdenseB$, so $\Hdense = \HdenseI \cup \HdenseB$.

\subsubsection{Partitioning of dense vertices into Voronoi cells}\label{sec:k2-dense-voronoi}

We partition the dense vertices into $|S|=O((n\log n)/L)=O(n^{2/3}\log n)$ \emph{Voronoi cells} with respect to centers $s_i \in S$, where each dense vertex $v$ chooses the first center $s_i$ that it discovers when executing the proposed BFS variant. We denote by $c(v)$ the center of $v$, and $\Vor(s)$ the Voronoi cell centered at $s$, consisting of all vertices that choose $s$ as its center. 
%Our tasks here are (1) showing that $\Vor(s)$ spans a connected region on $G$, (2) proposing a sparse subgraph (in fact, a tree) with diameter $2k$ spanning $\Vor(s)$, and (3) providing an efficient subroutine for an LCA that computes the neighbors of a specified vertex in this spanning tree.

\paragraph{Order of vertex discovery in BFS} Clearly, the vertices are discovered in increasing distance from $v$. We claim that the distance ties are broken according to their lexicographically-first shortest path \emph{from} $v$ (with respect to vertex $\ID$s).\footnote{Between paths of the same length $d$, $\langle v_0, \ldots, v_d \rangle \prec \langle u_0, \ldots, u_d \rangle$ if, for the minimum index $i$ such that $v_i \neq u_i$, $\ID(v_i) < \ID(u_i)$.} More formally, let $\pi(v,u)$ denote the lexicographically-first shortest path from $v$ to $u$ in $G$, and $|\pi(v,u)|$ denote its length (namely $\dist(v,u,G)$, the number of edges in the shortest $v$-$u$ path). We claim that the BFS from $v$ discovers $u$ before $u'$ if either $|\pi(v,u)|<|\pi(v,u')|$, or $|\pi(v,u)|=|\pi(v,u')|$ and $\pi(v,u)\prec\pi(v,u')$: assuming the induction hypothesis that vertices at the same distance $d$ from $v$ are discovered (enqueued) in this lexicographical order, we dequeue them in the same order, then enqueue the neighbors of each vertex in the order of their $\ID$s, proving the hypothesis for distance $d+1$.

\paragraph{Connectedness of each Voronoi cell on $G$} To prove that every $\Vor(s_i)$ induces a connected component in $G$, consider a vertex $v$ and its shortest path $\pi(v,c(v)) = \langle v_0=v, v_1, \ldots, v_{d-1}, v_d=c(v)\rangle$: we show that \emph{all} vertices in this path are in $\Vor(s_i)$. Assume the contrary: let $u = v_i$ be the first vertex on $\pi(v,c(v))$ choosing a different center $c(u)$ via $\pi(u,c(u))= \langle v_i=u, v'_{i+1}, \ldots, v'_{d-1}, v'_d=c(u)\rangle$; note that $|\pi(u,c(u))|=d-i+1$ because there is no center in $\Gamma^{d-1}(v,G) \supset \Gamma^{d-i}(u,G)$. Then we have that $\langle v_i, v'_{i+1}, \ldots, v'_d\rangle \prec \langle v_i, v_{i+1}, \ldots, v_d\rangle$, yielding $\langle v_0=v, \ldots, v_i=u, v'_{i+1}, \ldots, v'_d=c(u)\rangle \prec \pi(v,c(v))$, a contradiction.

\paragraph{Construction of depth-$k$ trees spanning Voronoi cells} We straightforwardly connect each $v$ to its center $s=c(v)$ via the edges of $\pi(v, s)$. Observe that due to the lexicographic condition, the vertex after $v$ on $\pi(v,s)$ must be the vertex of the minimum $\ID$ in $\Gamma(v,G) \cap \Gamma^{|\pi(v,s)|-1}(s,G)$; that is, each vertex $v \in \Vor(s)$ has a fixed ``next vertex'' to reach $s$. Consequently, the union of edges in $\pi(v, s)$ for every $v \in \Vor(s)$ forms a \emph{tree} rooted at $s$, where every level $d$ contains vertices at distance exactly $d$ away from $s$.%\footnote{We emphasize that the constructed tree spanning $\Vor(s)$, rooted at the center $s$, is \emph{not} the BFS tree resulting from running our BFS variant from $s$. Likewise, the $v$-$s$ path on the tree for $v \in \Vor(s)$ is a lexicographically-first shortest path \emph{from} $v$ \emph{to} $s$, not vice versa.}

Due to the resulting tree structure, we henceforth refer to the constructed subgraphs spanning the Voronoi cells as \emph{Voronoi trees}. The union of these trees forms the spanner edge set $\HdenseI$. As our BFS variant for finding a center terminates after exploring radius $k$, our Voronoi trees are also of depth at most $k$, or diameter at most $2k$, as desired. Lastly, by augmenting our proposed BFS variant to record the BFS tree edges, we can also retrieve the Voronoi tree path $\pi(v,s)$ using $O(\Delta L)$ probes. In particular, $(u,v)$ is a Voronoi tree edge if $u$ is on $\pi(v,c(v))$ or $v$ is on $\pi(u,c(u))$, implying the following lemma.

\begin{lemma} [$\HdenseI$ properties and probe complexity] \label{lem:k2denseI}
There exists a partition of dense vertices $v \in \Vdense$ into $O((n\log n)/L)$ Voronoi cells $\{\Vor(s)\}_{s\in S}$ according to their respective first-discovered centers $c(v)$ under the provided BFS variant. The set of edges $\HdenseI$, defined as the collection of lexicographically-first shortest paths $\pi(v,c(v))$, forms Voronoi trees, each of which spans its corresponding Voronoi cell and has diameter at most $2k$. Further, there exists an LCA that w.h.p., given an edge $(u,v)\in E$, decides whether $(u,v)\in\HdenseI$ using $O(\Delta L)$ probes.
\end{lemma}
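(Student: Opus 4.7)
The plan is to establish the three assertions of the lemma in turn: the partition of $\Vdense$, the tree/diameter structure of $\HdenseI$, and the LCA probe complexity. For the partition size, a Chernoff bound on the Bernoulli-$p$ sampling of $S$ with $p=\Theta((\log n)/L)$ gives $|S|=O((n\log n)/L)$ w.h.p. Since every dense vertex $v$ satisfies $\Gamma^k(v,G)\cap S\neq\emptyset$, the proposed BFS variant from $v$ must enqueue some center within depth $k$, so $c(v)$ --- the first center discovered --- is well-defined, and $\{\Vor(s)\}_{s\in S}$ with $\Vor(s):=c^{-1}(s)$ partitions $\Vdense$.

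For the tree structure, I would first establish by induction on distance layers that the BFS variant discovers vertices in the total order induced by $(|\pi(v,\cdot)|,\pi(v,\cdot))$, where $\pi(v,u)$ denotes the lex-first shortest $v$-$u$ path; this is because enqueueing neighbors in $\ID$-order preserves the lex priority of children as distance increases. Consequently $c(v)$ is the center minimizing this lex key over all centers. The key lemma to prove is then: for $v\in\Vor(s)$, every vertex on $\pi(v,s)=\langle v_0=v,\ldots,v_d=s\rangle$ also lies in $\Vor(s)$. I would prove this by contradiction: if $c(v_i)=s'\neq s$ via $\pi(v_i,s')=\langle v_i,v'_{i+1},\ldots,v'_{d'}\rangle$, then $d'=d-i$ (otherwise routing through $v_i$ would let $v$ reach some center at distance less than $d$, contradicting $c(v)=s$), and $v_i$'s preference of $s'$ over $s$ forces $\langle v_i,v'_{i+1},\ldots\rangle \prec \langle v_i,v_{i+1},\ldots,v_d\rangle$ lexicographically; prepending $\langle v_0,\ldots,v_i\rangle$ to both yields a $v$-to-$s'$ path of length $d$ that lex-precedes $\pi(v,s)$, contradicting $c(v)=s$. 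Given this, each $v\in\Vor(s)$ has a uniquely determined next hop (the lex-smallest neighbor of $v$ in $\Gamma^{d-1}(s,G)$ that lies on a shortest path to $s$), so the union of paths $\pi(v,s)$ assembles into a tree rooted at $s$ spanning $\Vor(s)$. Since the BFS halts upon the first discovery of a center, and every dense vertex reaches one within depth $k$, each such tree has depth $\leq k$ and diameter $\leq 2k$.

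For the LCA on a query $(u,v)$, I would first test $u,v\in\Vdense$ via $D^k_L$-computations, each costing $O(\Delta L)$ probes. An edge $(u,v)$ belongs to $\HdenseI$ iff either $u$ is the next hop of $v$ on $\pi(v,c(v))$, or symmetrically. I would augment the BFS from each endpoint with parent pointers; it halts after at most $L$ discoveries w.h.p.\ (by the same hitting-set argument, since the endpoint is dense), after which the path to the center is reconstructed by following parent pointers and its next hop is read off. Both checks together cost $O(\Delta L)$ probes, matching the claim. The main obstacle I anticipate is the connectedness step in the tree-structure argument: unlike the ``lowest-$\ID$-center'' tie-breaking used in \cite{levi2016local,LeviLenzen17}, the ``first-discovered-center'' rule here requires the joint (distance, lex-first-path) ordering to behave correctly under prefixes, and the contradiction argument must carefully rule out alternate centers that tie in distance but differ in their lex-first path.
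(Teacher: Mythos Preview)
Your proposal is correct and follows essentially the same approach as the paper: the induction on BFS layers to establish the $(|\pi(v,\cdot)|,\pi(v,\cdot))$ discovery order, the contradiction argument for connectedness of Voronoi cells along $\pi(v,s)$, and the LCA via augmented BFS with parent pointers are precisely the paper's ingredients. One small point to tighten: in your contradiction step you argue only $d'\geq d-i$; you should also note $d'\leq d-i$ because $v_i$'s BFS discovers $s'$ no later than $s$, and then observe that the concatenated path $P$ is itself a shortest $v$--$s'$ path so $\pi(v,s')\preceq P\prec\pi(v,s)$, which is what actually yields the contradiction on $c(v)$.
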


\begin{table*}
\centering
\renewcommand{\arraystretch}{1.5} % this reduces the vertical spacing between rows
\resizebox{\textwidth}{!}{%
\begin{tabular}{ l|l }
\textbf{Subroutine} & \textbf{Probe complexity} \\ \hline \hline
verify that $v \in \Vdense$, choose $c(v)$, and compute $\pi(v,c(v))$ & \multirow{2}{*}{$O(\Delta L)$} \\ 
verify if a given edge $(u,v)$ is a Voronoi tree edge (i.e., $(u,v) \in \HdenseI$)&  \\ \hline
compute all children of $v$ in the Voronoi tree $T(c(v))$ & $O(\Delta^2 L)$ \\ \hline
verify whether $v$ is heavy or light, and determine $|T(v)|$ when $v$ is light & $O(\Delta^2 L^2)$ \\ \hline
compute the entire cluster containing $v$ & $O(\Delta^3 L^2)$\\ \hline
given an entire cluster $A$, compute $c(\partial A)$ and $E(A,\Vor(s))$ for any $s\in c(\partial A)$ & $O(\Delta^2 L^2)$\\ \hline
%compute the centers of all (dense) neighbors of $v$'s cluster &  \multirow{2}{*}{$O(\Delta^4 L^2)$} \\
test $(u,v)\in\Edense$ whether $(u,v) \in \Hdense$ & $O(p\Delta^4 L^3 \log n)$
\end{tabular}
}
\caption[Probe complexities for computing $\Hdense$]{Probe complexities of various subroutines used for computing $\Hdense$. This table addresses $u,v \in \Vdense$, but these probe complexities do not assume that the LCA originally knows that $u$ and $v$ are dense.}
\label{table:dense-probes}
\end{table*}

\subsubsection{Refinement of the Voronoi cell partition into clusters}\label{sec:k2-dense-cluster}

We now further partition the Voronoi cells into \emph{clusters}, each of size $O(L)$. Our cluster structure is based on the construction of \cite{LeviLenzen17} but has two major differences. First, whereas in \cite{LeviLenzen17} the Voronoi cells are partitioned into $\Theta(\Delta n/L)$ clusters, in our algorithm we need the number of clusters to be independent of $\Delta$, and more specifically bounded by $O((n\log n)/L)$. Second, unlike the clusters in \cite{LeviLenzen17} that are always connected in $G$, each of our clusters may not necessarily induce a connected subgraph of $G$; they are still connected in the spanner via $\HdenseI$, namely by the Voronoi tree of diameter at most $2k$.

\paragraph{Refinement of Voronoi cells into clusters} For $s\in S$, let $T(s)$ denote the Voronoi tree spanning $\Vor(s)$. We extend this notation for non-centers, so that $T(v)\subseteq T(s)$ denote the subtree of $T(s)$ rooted at $v \in \Vor(s)$. For every $v \in \Vor(s)$, let $p(v)$ denote the \emph{parent} of $v$ in $T(s)$, and $|T(v)|$ be the number of vertices in the subtree. We define heavy and light vertices as follows.

\begin{definition}[Heavy and light vertices]
%A vertex $w$ in a BFS tree $T$ is \emph{heavy} if $|T(w)|>L$ and otherwise, it is \emph{light}.
A dense vertex $v$ is \emph{heavy} if $|T(v)|>L$ and otherwise, it is \emph{light}.
\end{definition}

We are now ready to define the cluster of $v \in \Vor(s)$ using the heavy and light classification.
\begin{compactenum}[(a)]
\item \textbf{$s$ is light:} That is, the Voronoi cell containing $v$, $\Vor(s)$, contains at most $L$ vertices. Then, all vertices in $\Vor(s)$ form the cluster centered at $s$. 
\item \textbf{$v$ is heavy:} Then the cluster of $v$ is the singleton cluster $\{v\}$.
\item \textbf{$s$ is heavy and $v$ is light:} Let $u$ be the first heavy vertex on $\pi(v,s)$, and $W=\{w: p(w)=u \textrm{ and } w \textrm{ is light}\}$ be the set of $u$'s \emph{light} children on the Voronoi tree. Consistently ordering the vertices $W=\{w_1, \ldots, w_\ell\}$ (e.g., according to the adjacency-list order from $u$), we iterate through these $w_i$'s, grouping $T(w_i)$'s into clusters of sizes between $L$ and $2L$; the last remaining cluster is allowed to have size strictly less than $L$. See Figure~\ref{fig:clusters} for an illustration of this rule.
\end{compactenum}
Clearly each cluster contains at most $2L =O(L)$ vertices, and any pair of vertices in the same cluster has a path of length at most $2k$ on $T(s)$ because they belong to the same Voronoi cell. Next, we show that the number of clusters resulting from this refinement is not asymptotically larger than the number of Voronoi cells.

%\iffalse
\begin{figure}[!h]
  %\vspace{-10pt}
  \begin{center}
    \includegraphics[width=0.6\textwidth]{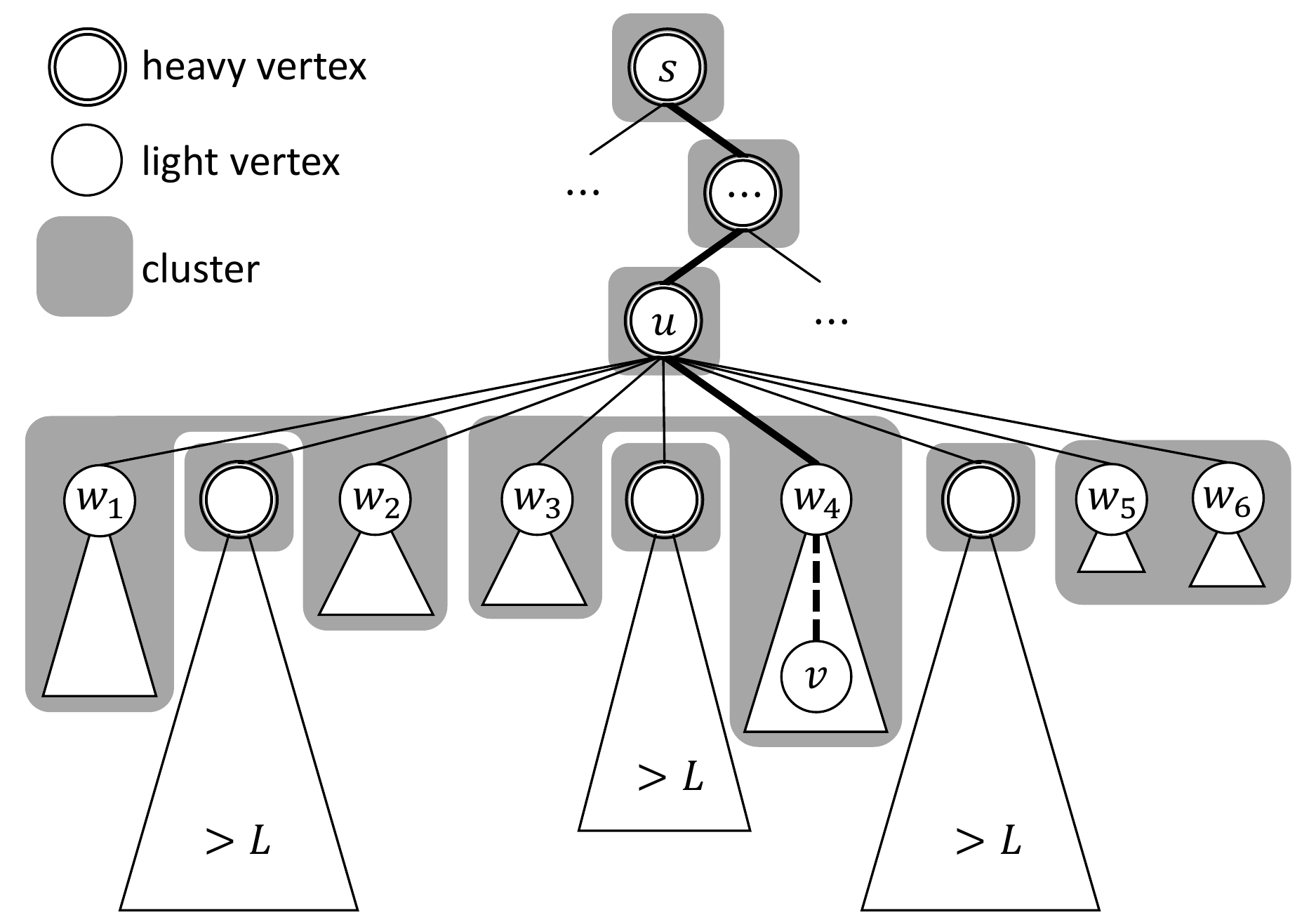}
  \end{center}
  %\vspace{-20pt}
  \caption[Illustration for cluster partitioning rule (c)]{Illustration for cluster partitioning rule (c). The interesting part of the Voronoi tree $T(s)$ is shown: heavy vertices are denoted with double borderlines, and thick edges are edges of $\pi(v, s)$. Shaded areas are clusters: observe that heavy vertices form singleton clusters, while many clusters do not induce a connected subgraph of $T(s)$. In this example, $u$ is the first heavy ancestor of $v$, so we compute all light children $W=\{w_1, \ldots, w_6\}$ of $u$, along with their subtree sizes $|T(w_i)|$'s. (For the heavy children, it suffices to only verify that they are heavy.) We group these $T(w_i)$'s into clusters of sizes in $[L, 2L]$, except possibly for the remainder cluster ($T(w_5)\cup T(w_6)$ in this case). Here, $v$'s cluster is $T(w_3) \cup T(w_4)$.}
\label{fig:clusters}
  %\vspace{-10pt}
\end{figure}
%\fi

\begin{claim}\label{obs:nclusters}
The number of clusters is $O((n\log n)/L)=O(n^{2/3}\log n)$.
\end{claim}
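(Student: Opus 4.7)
The plan is to split the clusters into three groups matching the three cases (a), (b), (c) of the refinement rule, bound each count separately, and sum. Let $N_a$, $N_b$, $N_c$ denote these counts.

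Bounding $N_a$ is immediate: case (a) produces at most one cluster per center, so $N_a \leq |S| = O((n \log n)/L)$. For $N_b$, the key step is to bound the number of heavy vertices in a single Voronoi tree $T(s)$ of depth at most $k$ with $|\Vor(s)|=N_s$ nodes. I would use the following charging argument: each heavy $v$ has $|T(v)|>L$, so it can distribute a total charge of $1$ over the vertices of $T(v)$, giving less than $1/L$ charge to each descendant. Since $T(s)$ has depth at most $k$, each vertex of $T(s)$ is a descendant of at most $k+1$ ancestors and therefore receives total charge less than $(k+1)/L$. Summing over the $N_s$ vertices yields $|\{v\in\Vor(s):v\text{ heavy}\}|\leq N_s(k+1)/L$. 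Summing over $s\in S$ and using $k=O(\log n)$ (the regime of interest) gives $N_b \leq \sum_s N_s(k+1)/L = O(kn/L) = O((n\log n)/L)$.

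For $N_c$, I would argue as follows. Each cluster produced by rule (c) is associated with a unique heavy vertex $u$ (the first heavy ancestor); for this $u$, let $X_u$ be the sum of $|T(w)|$ over its light children $w$. By construction the clusters formed at $u$ have size in $[L,2L]$ with at most one remainder of size $<L$, so they number at most $\lfloor X_u/L\rfloor + 1$. The crucial observation is that the subtrees $T(w)$ appearing in $X_u$, as $u$ ranges over heavy vertices with light children, are pairwise disjoint and together cover exactly the light vertices of heavy Voronoi cells: every light vertex $x$ in a heavy cell has a unique highest light ancestor $w$, whose parent $p(w)$ is heavy. Hence $\sum_u X_u \leq n$, giving $N_c \leq n/L + \#\{\text{heavy vertices}\} = O((n\log n)/L)$ by the bound on $N_b$.

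Combining the three bounds yields $N_a + N_b + N_c = O((n\log n)/L) = O(n^{2/3}\log n)$ for the chosen $L=\Theta(n^{1/3})$. The only delicate step is the charging argument for heavy vertices; everything else is telescoping/packing. I expect no obstacles provided we are indeed in the regime $k=O(\log n)$, which the paper justifies at the start of the section.
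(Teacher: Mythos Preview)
Your proof is correct and follows essentially the same decomposition and bounds as the paper. The only cosmetic difference is in the count of heavy vertices: the paper observes that heavy vertices at a fixed level of the Voronoi trees have pairwise disjoint subtrees of size exceeding $L$, giving at most $n/L$ per level and hence $kn/L$ total; your charging argument (each vertex receives at most $(k{+}1)/L$ charge from its ancestors) is an equivalent reformulation of the same disjointness fact. Your treatment of $N_c$ likewise matches the paper's split into at most $n/L$ ``full'' clusters plus one remainder cluster per heavy parent.
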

\begin{proof}%[Proof of Observation \ref{obs:nclusters}]
Recall that there are $|S| = O((n\log n)/L)$ Voronoi cells: this bounds the number of clusters of type (a). Observe that in any fixed level, among \emph{all} Voronoi trees, there can be at most $n/L$ heavy vertices because these heavy vertices' subtrees are disjoint. Since the Voronoi tree has depth $k$, there are at most $kn/L$ heavy vertices, bounding the number of clusters of type (b). 

We only subdivide the subtrees of heavy vertices into clusters, and within each such subtree, all clusters, except for at most one, have size at least $L$. Hence, there can be up to $n/L$ clusters of size at least $L$, and $kn/L$ clusters of smaller sizes (one for each heavy parent), establishing the bound for clusters of type (c). Thus, there are in total at most $O((n\log n)/L) + (2k+1)n/L = O((n\log n)/L)$ clusters (as we only consider $k = O(\log n)$).
\end{proof}
%}%\APPENDONCLUST

\paragraph{Probe complexity for identifying a vertex's cluster} Recall that via our BFS variant we can find the center $s$ and the path $\pi(v, s)$ for a dense vertex $v \in \Vor(s)$ using $O(\Delta L)$ probes. We begin by establishing the probe complexity for deciding whether $v$ is light or heavy. Observe that we can find all children of $v$ on $T(s)$ using $O(\Delta^2 L)$ probes: run the BFS on all neighbors of $v$, then any $w$ with center $s$ such that $\pi(w,s)$ passes through $v$ is a child of $v$. Using this subroutine, we traverse the subtree $T(v)$ to compute $|T(v)|$ if $v$ is light, or stop after $L+1$ and declare that $v$ is heavy. Since $O(L)$ vertices are investigated, the probe complexity for this process is $O(L)\cdot O(\Delta^2 L) = O(\Delta^2 L^2)$.

We can then compute $v$'s cluster as follows. If $v$ is heavy then we have $\{v\}$ as the cluster of type (b). Otherwise, we follow the path $\pi(v, s)$ up the Voronoi tree, one vertex at a time, and check each vertex's subtree size until we reach some heavy ancestor $u$ of $v$; if there is no such $u$ then the entire $\Vor(s)$ is the cluster of type (a). During this process of traversing up the Voronoi tree, we also record every computed subtree size, so that we do not need to revisit any subtree. Hence, finding the first heavy ancestor $u$ essentially only requires visiting $O(L)$ descendants of $u$, which only takes $O(\Delta^2 L^2)$ probes. Once we detect $u$, we check each of $u$'s children if it is light, and compute its subtree size correspondingly. Using this information, we determine all subtrees that form the cluster of type (c) containing $v$, as desired. This last case dominates the probe complexity: since we must check whether each of $u$'s children is heavy or light, our algorithm require $\Delta \cdot O(\Delta^2 L^2) = O(\Delta^3 L^2)$ probes to identify $v$'s entire cluster. The following lemma concludes the properties of the cluster partitioning of dense vertices.

\begin{lemma}[Probe complexity for computing clusters]\label{lem:compute-cluster}
There exists a refinement of the Voronoi cell partition into $O((n \log n)/L)$ clusters of size $O(L)$ each. Further, there exists an LCA that w.h.p., given a dense vertex, compute all vertices in the cluster containing $v$ using $O(\Delta^3 L^2)$ probes.
\end{lemma}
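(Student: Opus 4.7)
The plan is to consolidate the construction and probe-counting arguments already scattered through Section~\ref{sec:k2-dense-cluster} into a single proof of the lemma's three claims: (a) every cluster has size $O(L)$, (b) the total number of clusters is $O((n\log n)/L)$, and (c) an LCA can compute the entire cluster containing a given dense vertex in $O(\Delta^3 L^2)$ probes.

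For the size bound, I would simply inspect the three rules defining clusters. Under rule (a), the entire Voronoi cell $\Vor(s)$ is a single cluster, but $s$ being light means $|T(s)| \leq L$. Under rule (b), the cluster is a singleton. Under rule (c), by construction each cluster is a union of subtrees $T(w_i)$ with cumulative size in $[L, 2L]$ (with one possible remainder cluster of smaller size). Hence every cluster has at most $2L = O(L)$ vertices. The count bound is exactly Claim~\ref{obs:nclusters}, which is already proved: $O((n\log n)/L)$ Voronoi cells account for type (a), $O(kn/L)$ heavy vertices (disjoint subtrees across $k$ levels of the Voronoi trees) account for type (b), and type (c) contributes at most $n/L$ ``full'' clusters plus one remainder per heavy parent.

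For the LCA, I would proceed as follows for a given dense vertex $v$. First, invoke Lemma~\ref{lem:k2denseI} to recover $c(v)=s$ and the tree path $\pi(v,s)$ using $O(\Delta L)$ probes. Next, walk up $\pi(v,s)$, one ancestor at a time, and for each ancestor $x$ test whether $x$ is heavy by traversing $T(x)$ up to $L+1$ vertices; each such traversal discovers children using the subroutine ``run BFS from every neighbor of the current node and keep those whose $\pi(\cdot,s)$ passes through it'' at cost $O(\Delta^2 L)$ per vertex, giving $O(\Delta^2 L^2)$ per ancestor tested. Caching previously discovered subtrees avoids re-exploration, so finding the first heavy ancestor $u$ (or concluding no such ancestor exists, which places us in case (a)) costs $O(\Delta^2 L^2)$ in total. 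If $v$ itself is heavy we output $\{v\}$; if no ancestor is heavy we output $\Vor(s)$, which we have already explored. Otherwise we are in case (c), and I would probe for all $\Delta$ children of $u$, testing each child for heaviness and computing $|T(w_i)|$ for every light child $w_i$, at cost $\Delta \cdot O(\Delta^2 L^2) = O(\Delta^3 L^2)$. The same fixed ordering on children used in the definition of rule (c) lets us deterministically assemble the $T(w_i)$'s into clusters of size in $[L,2L]$ and return the one containing $v$.

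The main obstacle, and the reason the $\Delta^3$ factor appears, is step~(c): to identify $v$'s cluster we must examine \emph{all} children of $u$, not just the ones on $\pi(v,s)$, because the grouping of $T(w_i)$'s is global to $u$. Each of these up-to-$\Delta$ children requires a full $O(\Delta^2 L^2)$-probe heavy/light test, and I do not see how to avoid this without a different clustering rule; since consistency across LCA invocations forces every call to perform the same exhaustive grouping, this cost seems intrinsic to the present definition and dominates the analysis.
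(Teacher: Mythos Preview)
Your proposal is correct and follows essentially the same approach as the paper: the size bound by case inspection, the count via Claim~\ref{obs:nclusters}, and the probe complexity by (i) walking up $\pi(v,s)$ while caching subtree sizes to locate the first heavy ancestor $u$ in $O(\Delta^2 L^2)$ probes, then (ii) paying the dominating $\Delta \cdot O(\Delta^2 L^2)=O(\Delta^3 L^2)$ to test all children of $u$ and assemble the type-(c) cluster. Your identification of the bottleneck at step (c) matches the paper's analysis exactly.
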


\subsubsection{Overview: connecting Voronoi cells}\label{sec:k2-dense-connect-overview}

\paragraph{The supergraph intuition} To establish some intuition for connecting the Voronoi cells while maintaining a low stretch factor, let us imagine constructing an LCA for a \emph{supergraph}, where each of the $|S| = \widetilde{O}(n/L) = \widetilde{O}(n^{2/3})$ Voronoi cells is a supervertex, and all edges between the same pair of Voronoi cells are merged into a single superedge. Leveraging the classic clustering approach, to compute a spanner on this supergraph, we mark each supervertex independently with probability $p=n^{-1/3}$, so roughly $\widetilde{O}(pn/L) = \widetilde{O}(n^{1/3})$ supervertices are marked. These marked supervertices now act as the centers in this supergraph.

In the constructed spanner, we keep superedges between adjacent Voronoi cells according to the following three rules. Rule (1): we keep all superedges incident to a marked supervertex. There are $\widetilde{O}(n^{2/3})$ supervertices in total, and $\widetilde{O}(n^{1/3})$ supervertices are marked, contributing to $\widetilde{O}(n)$ total superedges. Rule (2): we can also keep incident superedges of supervertices without any marked neighbors: if they had more than $\widetilde{O}(n^{1/3})$ neighboring Voronoi cells, then w.h.p., one of them would have been marked. %The only remaining case is when the query superedge connects unmarked supervertices, both of which have some marked neighbor.
Lastly, rule (3): for each (not necessarily adjacent) pair of a supervertex $\Vn{a}$ and a \emph{marked} supervertex $\Vn{c}$, we keep a superedge from $\Vn{a}$ to a \emph{single} common neighbor $\Vn{b^*} \in \Gamma(\Vn{a})\cap\Gamma(\Vn{c})$ -- by consistently choosing $\Vn{b^*}$ with the lowest $\ID$, for instance. The number of added superedges is $\widetilde{O}(n)$ via the same analysis as that of rule (1).

We claim that connectivity is preserved: consider an omitted superedge $(\Vn{a}, \Vn{b})$. Since rule (2) does not keep $(\Vn{a}, \Vn{b})$, $\Vn{b}$ has some marked neighbor $\Vn{c}$. By rule (3), there exists some $\Vn{b^*} \in \Gamma(\Vn{a})\cap\Gamma(\Vn{c})$ with lower $\ID$ than $\Vn{b}$, such that $(\Vn{a}, \Vn{b^*})$ is kept by the LCA. Recall that $\Vn{c}$ is marked, so combining with rule (1), the spanner path $\langle \Vn{a}, \Vn{b^*}, \Vn{c}, \Vn{b}\rangle$ connects $\Vn{a}$ and $\Vn{b}$, as desired. Thus, an LCA, given a query $(\Vn{a}, \Vn{b})$, keeps this superedge if there exists a supervertex $\Vn{c}\in \Gamma(\Vn{b})$ where $\Vn{b}$ has the minimum $\ID$ among $\Gamma(\Vn{a})\cap\Gamma(\Vn{c})$, producing a $3$-spanner of the supergraph with $\widetilde{O}(n)$ superedges. 

However, such a supergraph-level approach cannot be implemented efficiently under the cluster refinement in the original input graph. Recall the original graph before the Voronoi cell contraction: the LCA is only given a vertex (query edge's endpoint) in the Voronoi cell, and we cannot afford to enumerate all vertices in the \emph{entire} Voronoi cell (supervertex $\Vn{b}$) and identify \emph{all} of its neighboring Voronoi cells (supervertex $\Vn{c}$) -- finding the Voronoi cell $\Vn{b^*}$ of minimum $\ID$ is outright impossible in sub-linear probes. Nonetheless, we construct an LCA based on this approach despite incomplete information of the supergraph.

\paragraph{Local implementation based on clusters} Employing the developed cluster refinement, as we mark a Voronoi cell, we also mark the clusters therein. We will show that the number of clusters (resp., marked clusters), do not significantly increase from the number of Voronoi cells (resp., marked Voronoi cells); hence, we may still add an edge from every cluster that is (1) marked, or (2) not adjacent to any marked clusters, to all adjacent Voronoi cells, modularly imitating the corresponding supergraph rules while still using $\widetilde{O}(n)$ edges. Nonetheless, attempting to implement rule (3) poses a problem because the LCA can only see the \emph{clusters} containing the query edge's endpoints (while keeping the desired probe complexity). From them, we can only find out the Voronoi cells neighboring these clusters -- not all Voronoi cells neighboring to the current Voronoi cell may be visible to the LCA. Due to this limitation, we cannot implement rule (3) which requires knowing \emph{all} of $\Vn{b}$ and $\Vn{c}$'s neighboring Voronoi cells.

To resolve this problem, \cite{LeviLenzen17} observes that the desired connectivity is still preserved if the LCA implements a variation of rule (3) that only checks the neighboring Voronoi cells of the queried cluster in $\Vn{b}$ and a canonical cluster in $\Vn{c}$. Recall that the LCA must answer ``is the superedge $(\Vn{a}, \Vn{b})$ in the spanner?'' We need to show that $\Vn{a}$ and $\Vn{b}$ are connected under this rule, so if the LCA keeps $(\Vn{a}, \Vn{b})$ then we are done. Otherwise $(\Vn{a}, \Vn{b})$ is omitted, which implies that there exists a marked Voronoi cell $\Vn{c}$ and a Voronoi cell $\Vn{b'} \in \Gamma(\Vn{a}) \cap \Gamma(\Vn{c})$ with $\ID(\Vn{b'})<\ID(\Vn{b})$, such that there exists a path $\langle \Vn{b}, \Vn{c}, \Vn{b'}\rangle$ in the spanner thanks to rule (1). Hence, it suffices to show that $\Vn{a}$ and $\Vn{b'}$ are connected in the spanner. Since the supergraph contains the superedge $(\Vn{a}, \Vn{b'})$ (because $\Vn{b'} \in \Gamma(\Vn{a})$), we will inductively rely on how the LCA ensures connectivity between $\Vn{a}$ and $\Vn{b'}$ when it handles the query $(\Vn{a}, \Vn{b'})$.

So far, we have only managed to defer the original burden of proving the connectivity between $\Vn{a}$ and $\Vn{b}$ to the LCA's answer to the question ``is the superedge $(\Vn{a}, \Vn{b'})$ in the spanner?'' Again, even if $\Vn{b'}$ indeed has the minimum $\ID$ among $\Gamma(\Vn{a}) \cap \Gamma(\Vn{c})$, the LCA may not perceive this fact when it cannot see all of $\Vn{b'}$'s neighboring Voronoi cells, notably $\Vn{c}$. Still, we have progress: the Voronoi cell $\Vn{b'}$ in question has a \emph{lower} $\ID$~than $\Vn{b}$. Thus we may repeat this same argument inductively on the $\ID$ of $\Vn{a}$'s neighbor, which strictly decreases at each step -- this argument will eventually terminate (albeit possibly in as many as $\Theta(|S|)$ steps), establishing the desired connectivity guarantee. Moreover, \cite{LeviLenzen17} enhances the LCA further by assigning random \emph{ranks} on the Voronoi cells instead of using $\ID$s directly, showing that $a$'s neighbor's rank is halved at each inductive step in expectation, so the stretch of the constructed spanner (on this supergraph) is, w.h.p., $O(\log n)$.

\begin{figure}[!h]%{r}{0.5\textwidth}
  \begin{center}
    \includegraphics[width=0.6\textwidth]{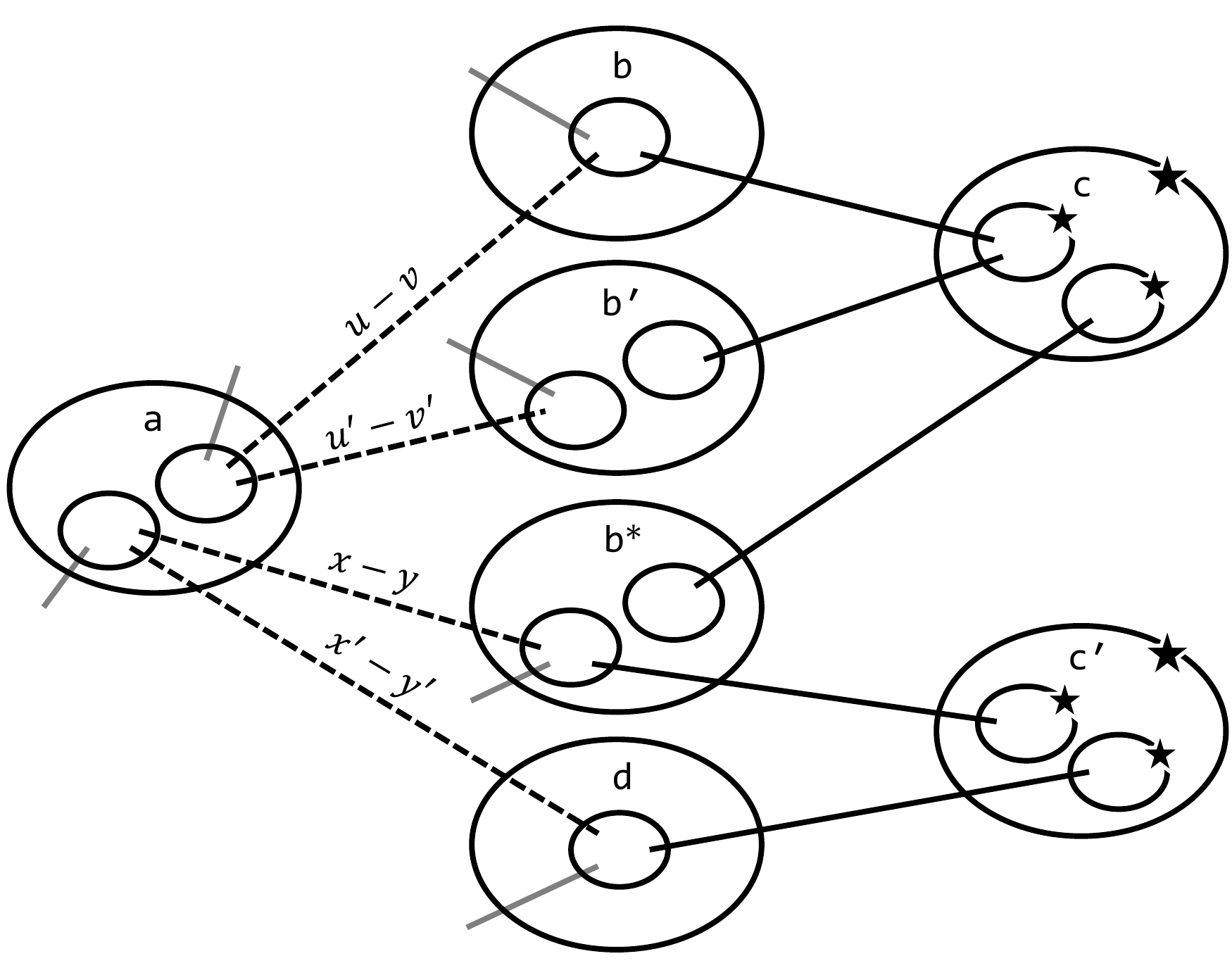}
  \end{center}
  \caption[Illustration accompanying the example of clusters connection rule (3)]{Illustration accompanying the example of clusters connection rule (3): Large and small ovals denote Voronoi cells and clusters; marked Voronoi cells and clusters therein are marked with stars. Dashed edges are query edges we consider in the example -- their labels shows the names of the endpoints (vertices) inside the clusters they connect.}
\label{fig:vorconnect}
\end{figure}
%\fi

\paragraph{Illustrated example} \label{par:example} Consider Figure~\ref{fig:vorconnect}. All solid edges are added by rule (1). We focus on rule (3), so to prevent an application of rule (2), we add solid grey lines to indicate that all incident clusters are adjacent to some marked Voronoi cells. Let $\ID(\Vn{b})>\ID(\Vn{b'})>\ID(\Vn{b^*})>\ID(\Vn{d})$. The ``supergraph-level'' \emph{Voronoi cell} connection rule (3) would add $(x,y)$ and $(x',y')$ because $\Vn{b^*}$ and $\Vn{d}$ are Voronoi cells of minimum $\ID$s in $\Gamma(\Vn{a})\cap\Gamma(\Vn{c})$ and $\Gamma(\Vn{a})\cap\Gamma(\Vn{c'})$, respectively. Instead, consider now the \emph{cluster} connection rule (3).
\begin{compactitem}
\item Query edge $(x',y')$: The LCA applies the cluster connection rule (3) w.r.t.~$\Vn{c'}$ and keeps $(x',y')$.
\item Query edge $(u,v)$: This edge may be omitted because the LCA finds the Voronoi cell $\Vn{b'}$ also adjacent to $\Vn{c}$ with lower $\ID$ than $\Vn{b}$, so rule (3) w.r.t.~$\Vn{c}$ does not keep this edge. The inductive argument turns to consider $(u',v')$ (not $(x,y)$, even if $\Vn{b^*}$ actually has the lowest $\ID$ among $\Gamma(\Vn{a})\cap\Gamma(\Vn{c})$).
\item Query edge $(u',v')$: This edge may also be omitted because the LCA cannot reach $\Vn{c}$ from $v'$ despite the fact that $\Vn{c} \in \Gamma(\Vn{b})$; hence it cannot apply rule (3) w.r.t.~$\Vn{c}$. Note that $(u',v')$ is engaged in another application of rule (3) w.r.t.~the (undepicted) other marked endpoint of the grey edge incident to $v'$'s cluster -- $(u',v')$ may indeed be kept by this application, and if not, the inductive argument will continue.
\item Query edge $(x,y)$: This edge is kept, but \emph{not} because $\Vn{b^*}$ is the minimum-$\ID$ Voronoi cell of $\Gamma(\Vn{a})\cap\Gamma(\Vn{c})$: the LCA exploring the graph from $y$ could not have found $\Vn{c}$. Instead, it finds $\Vn{c'}$, but still cannot find $\Vn{d}$. Apparently, $\Vn{b^*}$ becomes the Voronoi cell of minimum $\ID$ among $\Gamma(\Vn{a})\cap\Gamma(\Vn{c'})$ that it actually finds (here, the only one, in fact). Hence, the LCA applies rule (3) w.r.t.~$\Vn{c'}$ and keeps $(x,y)$.
\end{compactitem}

\paragraph{Reducing the stretch factor} Unlike the scenario of \cite{LeviLenzen17}, we aim for an $O(k^2)$-spanner of size $\widetilde{O}(n^{1+1/k})$ (in the original graph); in particular, we are allowed an extra factor of $\widetilde{O}(n^{1/k})$ in the number of edges. So, between each pair of a cluster (in Voronoi cell $\Vn{a}$) and a marked cluster (in Voronoi cell $\Vn{c}$), and we add edges from the cluster in $\Vn{a}$ to $\widetilde{\Theta}(n^{1+1/k})$ lowest-rank Voronoi cells $\Vn{b}$, instead of just the lowest-rank one. This adjustment reduces the ranks in the inductive argument much more rapidly: w.h.p., the argument terminates in only $O(k)$ steps, yielding an $O(k)$-spanner on this supergraph. Since each Voronoi cell has diameter at most $2k$, as we expand back our supervertices into Voronoi cells, we obtain the desired $O(k^2)$ stretch factor. 
%Other remaining implementation details and proofs are largely adopted fom \cite{LeviLenzen17}.

\subsubsection{Implementation details and probe complexity analysis}\label{sec:k2-dense-connect-algo}

\paragraph{Marked Voronoi cells and clusters} Recall that we randomly choose a set $S$ of $O((n \log n)/L)$ centers, and mark each Voronoi cell \emph{center} independently with probability $p=1/L=n^{-1/3}$. For each marked center $s_i$, we also mark \emph{all} the clusters in $\Vor(s_i)$. We claim that the number of marked clusters is not significantly more than the number of marked centers.

\begin{claim}
The number of marked clusters is $O((pn \log^2 n)/L) = O(n^{1/3} \log^2 n)$.
\end{claim}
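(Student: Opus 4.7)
The argument amounts to bounding the expected number of marked clusters by linearity and then invoking a Chernoff-type concentration inequality. By the preceding claim on the total number of clusters, with high probability the total cluster count $T$ across all Voronoi cells satisfies $T = O((n\log n)/L)$. Marking the center $s$ marks all $c_s$ clusters in $\Vor(s)$, so writing $W = \sum_{s \in S} X_s c_s$ with $X_s \sim \textrm{Ber}(p)$ independent conditional on $S$, linearity of expectation gives
$$\E[W \mid S] \;=\; p \sum_{s \in S} c_s \;=\; p T \;=\; O((pn\log n)/L).$$

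To upgrade this expectation bound into the advertised high-probability statement (with one extra $\log n$ slack), I would apply a Bernstein/Chernoff-type inequality for weighted sums of independent Bernoullis, conditional on $S$. Mimicking the decomposition used in the proof of the total-clusters claim, $c_s \leq 1 + 2h_s + N_s/L$ (with $h_s$ the number of heavy vertices in $\Vor(s)$ and $N_s = |\Vor(s)|$), the sum $W$ splits into three subsums. The first is simply $|M| = \sum_s X_s$, a Binomial$(|S|,p)$ which concentrates around its mean $p|S| = O((pn\log n)/L)$ by a standard Chernoff bound. The other two subsums weight the independent $X_s$'s by $h_s$ (with $\sum_s h_s = O(kn/L)$) and by $N_s$ (with $\sum_s N_s = n$), respectively, and each contributes at most $O(\log n)$ times its expectation after the concentration step.

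The main obstacle is that individual weights $c_s$ may be as large as $\Omega(n/L)$ for a single fat Voronoi cell, making a naive Chernoff on the weighted sum too weak. I would resolve this by a truncation at a threshold $c^\star = \Theta(L \log n)$: the contribution from Voronoi cells with $c_s \leq c^\star$ admits a Chernoff bound with bounded maximum weight, yielding $O(pT \log n) = O((pn \log^2 n)/L)$ with high probability, while the number of cells with $c_s > c^\star$ is at most $T/c^\star = O(n/L^2)$ by the total-clusters bound, so the expected number of \emph{marked} such cells is at most $p \cdot O(n/L^2) = O(1)$ and a union bound shows their combined contribution is absorbed by the main bound. Combining the three subsum bounds yields $W = O((pn\log^2 n)/L) = O(n^{1/3}\log^2 n)$ with high probability, as claimed.
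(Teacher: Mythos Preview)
Your expectation computation $\E[W]=p\sum_{s}c_s=pT=O((pn\log n)/L)$ is correct and in fact tighter than what the paper states. The paper, however, takes a different and more elementary route: it buckets Voronoi cells dyadically by their cluster count $c_s\in[x,2x]$, observes that each bucket holds at most $T/x$ cells so that $O(pT/x)$ of them are marked, hence each bucket contributes $O(pT)$ marked clusters, and then sums over the $O(\log n)$ buckets. The extra $\log n$ in the stated bound thus arises from the number of dyadic scales, not from any concentration slack.

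Your attempt to upgrade to a high-probability statement has a real gap in the handling of the fat cells. Knowing that the expected number of marked cells with $c_s>c^\star=\Theta(L\log n)$ is $O(1)$ does not control their \emph{contribution} once marked. Nothing in the construction caps the size of a single Voronoi cell: a cell with $N_s$ vertices has $c_s=O(kN_s/L)$ clusters, and $N_s$ may be as large as $\Theta(n)$, giving $c_s=\Theta(kn/L)$. Such a cell is marked with probability $p=n^{-1/3}$, which is \emph{not} $1/\poly(n)$; when that event occurs, this one cell already contributes $\Theta(kn/L)=\Omega(n^{2/3})$ marked clusters, exceeding the claimed $O(n^{1/3}\log^2 n)$ by a polynomial factor. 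No union bound closes this, and your truncation threshold cannot be raised high enough to exclude such cells while still keeping the Chernoff step on the light cells meaningful. The paper's argument (and its downstream use in the spanner-size lemma) should be read as an expectation bound; the extra $\log n$ is an artifact of the dyadic sum rather than a high-probability cushion.
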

\begin{proof}
Since there are $O(\frac{n \log n}{L})$ clusters, then for any value $x>0$, there are $O(\frac{n \log n}{xL})$ Voronoi cells with $t\in [x,2x]$ clusters. So, we have at most $O(\frac{pn \log n}{xL})$ marked Voronoi cells with at most $2x$ clusters, yielding $O(\frac{pn \log n}{L})$ such clusters. Applying the argument for $O(\log n)$ different values of $x$ yields $O(\frac{pn \log^2 n}{L})$ total marked clusters.
\end{proof}

\paragraph{Random ranks} We assign each center $s \in S$ an independent random \emph{rank} $r(s) \in [0, 1)$ (e.g., a random hash function applied to their $\ID$s): these random ranks implicitly impose a random ordering of the centers. We sometimes refer to the rank of a Voronoi cell's center simply as the rank of that Voronoi cell. We remark that in Section~\ref{sec:boundedindepsparse}, we will show that $\Theta(\log n)$-independence random bits suffice for our purpose of choosing centers and assigning random ranks: our algorithm can be implemented with $O(\log^2 n)$ random bits.

\paragraph{Adjacent clusters and Voronoi cells} The following definitions are as in \cite{LeviLenzen17}. We say that clusters $A$ and $B$ are \emph{adjacent} if there exists $u \in A$ and $v \in B$ which are neighbors. In the same manner, cluster $A$ is \emph{adjacent} to $\Vor(s)$ if there exists $B \in \Vor(s)$ such that $A$ and $B$ are adjacent. For a cluster $A$, let $\Vor(A)$ denote the Voronoi cell containing $A$. Define the \emph{adjacent centers} of a cluster $A$ as $c(\partial A)=\{c(v): \Gamma(v) \cap A \neq \emptyset\}\setminus \{c(A)\}$. Roughly speaking, this is a partial collection of neighboring Voronoi cell centers of $\Vor(A)$, restricted to those visible to the LCA from $A$.

\paragraph{Connecting clusters and Voronoi cells} By ``connecting'' two adjacent subsets of vertices $A$ and $B$, we refer to the process of adding the edge of minimum $\ID$ in $E(A,B)$ to $\HdenseB$, where the $\ID$ of an edge $(u,v) \in E(A,B)$ is given by $(\ID(u),\ID(v))$. The comparison is lexicographic: first compare against $\ID(u)$, break ties with $\ID(v)$.

For every marked cluster $C$, define the \emph{cluster of clusters} of $C$, denoted by $\mathcal{C}(C)$, as the set of all clusters consisting of $C$ and all other clusters which are adjacent to $C$. A cluster $B \in \mathcal{C}(C)$ is \emph{participating} in $\mathcal{C}(C)$ if the edge of minimum $\ID$ in $E(B,\Vor(C))$ also belongs to $E(B,C)$. That is, if we want to connect the cluster $B$ to a certain marked Voronoi cell by choosing the edge $(u,v)$ of minimum $\ID$ (where $u \in B$ and $v$ is in that Voronoi cell), then ``$B$ is participating in $\mathcal{C}(C)$'' means that, $C$ is the (unique) cluster in the Voronoi cell containing $v$.

\paragraph{Constructing $\HdenseB$}
Adjacent clusters are connected in $\Hdense$ using the following rules, where $A$ and $B$ denote the clusters containing the two respective endpoints of the query edges $(u,v)$. It suffices to apply these rules when $u$ and $v$ belong to different Voronoi cells, $c(u)\neq c(v)$; otherwise $\HdenseI$ spans them already. Note that these conditions as written are not symmetric: we must also verify them with the roles of the $u$ and $v$ (e.g., $A$ and $B$) switched.

\begin{figure}[!h]
\vspace{5pt}\noindent\fbox{\begin{minipage}{\dimexpr\textwidth-2\fboxsep-2\fboxrule\relax}
\textbf{Global construction of $\HdenseB$ for edges between clusters.}
\begin{compactenum}[(1)]\label{desc:edges}
%\item{(1)}
\item Every marked cluster is connected to each of its adjacent clusters. 
%\item{(2)}
\item Each cluster $B$ that is not participating in any cluster-of-clusters (i.e., no cell adjacent to $B$ is marked), is connected to each of its adjacent \emph{Voronoi cells}.
%\item{(3)}
\item For each pair of cluster $A$ and marked cluster $C$, consider the centers of clusters adjacent to both $A$ and $C$, namely $c(\partial A)\cap c(\partial C)$. If the rank $r(s)$ of the center $s \in c(\partial A)\cap c(\partial C)$ is among the $q = \Theta(n^{1/k}\log n)$ lowest ranks of centers in $c(\partial A)\cap c(\partial C)$, then $A$ is connected to $\Vor(s)$.
\end{compactenum}
\end{minipage}}\vspace{4pt}
\caption[Procedure for the global construction of $\HdenseB$]{Procedure for the global construction of $\HdenseB$.}\end{figure}

\iffalse
Let cluster $A$ be adjacent to cluster $B$ which is adjacent to a marked cell $\Vor(s_i)$.
Let $C$ be the unique cluster in $\Vor(s_i)$ for which $B$ participates in $\mathcal{C}(C)$. 
The cluster $A$ is connected with $B$ if the following conditions hold:
\begin{itemize}
\item 
The center $c(B)$ is amongst the $O(n^{1/k}\log n)$ minimum ranked centers\footnote{Recall that the centers are given random ranks in $[|S|]$.} in $c(\partial A)\cap c(\partial C)$.  \\(This is the main modification w.r.t.~\cite{LeviLenzen17})
\item
The edge $(u,v)$ has the minimum ID in $E(A,\Vor(B))$.
\end{itemize}
\end{description}
This completes the description of constructing the spanner $\Hdense$, the final spanner is given by $H=\Hdense\bigcup \Hsparse$. 
\fi

\paragraph{The local algorithm and its probe complexity} We now describe the local algorithm that decides whether a query edge $(u,v)\in \HdenseB$. Using the subroutines constructed so far, assume that the LCA has verified that $(u,v)\in\Edense$, identified their centers $c(u)\neq c(v)$, and computed the entire respective clusters $A$ and $B$, using $O(\Delta^3 L^2)$ probes according to Lemma~\ref{lem:compute-cluster}. We then verify the global rules of $\HdenseB$ in a local fashion, answering \YES~indicating that $(u,v)\in\HdenseB$ if any of the following condition holds.

\begin{figure}[!h]
\vspace{5pt}\noindent\fbox{\begin{minipage}{\dimexpr\textwidth-2\fboxsep-2\fboxrule\relax}
\textbf{Local construction of $\Hdense$ for edges between clusters.}
\begin{compactenum}[(1)]
\label{desc:whenyes}
\item%{(i)}
$A$ is a marked cluster and $(u,v)$ has the minimum edge $\ID$ amongst the edges in $E(A,B)$.
\item%{(ii)}
$B$ is not adjacent to any of the marked clusters, and $(u,v)$ has the minimum edge $\ID$ among all edges in $E(B,\Vor(A))$. 
\item%{(iii)}
There exists a marked cluster $C$ such that all of the following holds:
\begin{compactitem}
\item
$B$ is participating in $\mathcal{C}(C)$,
\item
The rank of $c(B)$ is amongst the $q=\Theta(n^{1/k}\log n)$ lowest ranks in $c(\partial A) \cap c(\partial C)$,
\item
The edge $(u,v)$ has the minimum $\ID$ among all the edges in $E(A,\Vor(B))$.
\end{compactitem}
\end{compactenum}
\end{minipage}}\vspace{4pt}
\caption[Procedure for the local construction of $\HdenseB$]{Procedure for the local construction of $\HdenseB$.}
\end{figure}

As we have already computed the entire clusters $A$ and $B$, we may verify condition (1) by checking all incident edges of $A$ for those with the other endpoints in $B$. For condition (2), we compute the set $c(\partial A)$ of Voronoi cell centers $c(w)$ for neighboring vertices $w$ of $A$; note that $|c(\partial A)|\leq \Delta L$. Then we check whether any of them is marked using $O(\Delta L)$ probes each. The edge of minimum $\ID$~in $E(A,\Vor(B))$ is among these $O(\Delta L)$ edges incident to $A$, allowing us to check whether $(u,v)=E(A,\Vor(B))$ as well. Overall condition (2) can be verified with $O(\Delta^2 L^2)$ probes.

For condition (3), we instead consider the neighboring vertices of $B$ and compute their centers. During the process, we also keep track of the edge of minimum $\ID$~in $E(B,\Vor(s_i))$ of each encountered \emph{marked} center $s_i$. There are up to $\Delta L$ neighboring Voronoi cells of $B$ in total, but w.h.p., only $O(p \cdot \Delta L \cdot \log n)$ of them are marked. For each marked $\Vor(s_i)$, starting from the recorded endpoint in there, we compute the entire cluster $C_i$ such that $B$ is participating $\mathcal{C}(C_i)$ using $O(\Delta^3 L^2)$ probes. Then, we compute the centers' $\ID$s of all neighboring vertices of $C_i$, namely $c(\partial C)$, spending another $O(\Delta^2 L^2)$ probes for each $C_i$. Combining with $c(\partial A)$ computed earlier, we can deduce if the rank of $c(B)$ is sufficiently low that $E(A,\Vor(B))$ must be added. In total, we require $O(p \Delta L \log n) \cdot (O(\Delta^3 L^2)+O(\Delta^2 L^2)) = O(p\Delta^4 L^3 \log n)$ probes, as desired:

\begin{lemma}[$\HdenseB$ probe complexity]\label{lem:k2denseprobe}
There exists an LCA that w.h.p., given an edge $(u,v)\in E$, decides whether $(u,v) \in \HdenseB$ using probe complexity $O(p \Delta^4 L^3 \log n)$, where $\HdenseB$ is as defined in Section~\ref{sec:k2-dense-connect-algo}.
\end{lemma}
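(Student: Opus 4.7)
The plan is to exhibit an explicit local procedure that implements the three global rules of $\HdenseB$ given in Section~\ref{sec:k2-dense-connect-algo}, and then charge the probes rule-by-rule. First I would invoke the subroutines of Lemma~\ref{lem:k2denseI} and Lemma~\ref{lem:compute-cluster} to verify that both $u$ and $v$ are dense, discover their centers $c(u) \neq c(v)$ (otherwise the edge is already handled by $\HdenseI$), and recover the entire clusters $A \ni u$ and $B \ni v$ using $O(\Delta^3 L^2)$ probes each. With $A$ and $B$ in hand, the remainder is bookkeeping: each of the three rules (plus its mirror image with $u,v$ swapped) is checked on top of this shared data.

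Rules (1) and (2) are the cheap ones. Rule (1) only requires knowing whether $A$ (or $B$) is marked, a determination made from the random seed without any probes, together with a scan for the minimum-$\ID$ edge in $E(A,B)$ which is already available from the cluster computation. For rule (2) I must first produce the set $c(\partial B)$ of adjacent Voronoi-cell centers, which entails probing the neighbours of every vertex of $B$ and running the center-identifying BFS variant on each; there are $O(\Delta L)$ such neighbours and each center computation is $O(\Delta L)$, for a total of $O(\Delta^2 L^2)$. Checking that none of these centers is marked is free, and the minimum-$\ID$ edge in $E(B,\Vor(A))$ lies among the edges already enumerated.

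Rule (3) is the main obstacle and drives the final bound, because it ranges over marked clusters $C$ such that $B$ participates in $\mathcal{C}(C)$. The key observation is that every such $C$ must live in a marked Voronoi cell adjacent to $B$; by Chernoff applied to the marking probability $p$ over the $O(\Delta L)$ neighbouring Voronoi cells of $B$, w.h.p.~only $O(p \Delta L \log n)$ of them are marked, so I only enumerate this many candidates. For each candidate, the minimum-$\ID$ edge from $B$ into that cell (already recorded during the construction of $c(\partial B)$) lands in a unique cluster $C$ for which $B$ participates; recovering all of $C$ costs another $O(\Delta^3 L^2)$ probes via Lemma~\ref{lem:compute-cluster}, and computing $c(\partial C)$ a further $O(\Delta^2 L^2)$. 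With $c(\partial A)$ and $c(\partial C)$ in hand, the ranks of all centers in $c(\partial A) \cap c(\partial C)$ are computable locally from the random seed, so I can test whether $r(c(B))$ is among the $q = \Theta(n^{1/k}\log n)$ lowest; the minimum-$\ID$ check in $E(A,\Vor(B))$ is against edges already gathered while building $c(\partial A)$.

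Summing the three stages: preprocessing costs $O(\Delta^3 L^2)$, rules (1) and (2) contribute $O(\Delta^2 L^2)$, and rule (3) dominates at $O(p \Delta L \log n) \cdot \bigl(O(\Delta^3 L^2) + O(\Delta^2 L^2)\bigr) = O(p \Delta^4 L^3 \log n)$, matching the claimed bound. The two delicate points I expect to justify carefully are: (a) the high-probability bound on the number of marked Voronoi cells adjacent to $B$ must hold \emph{simultaneously} for every cluster $B$ that could ever be queried, which follows from a union bound over the $O((n \log n)/L)$ clusters; and (b) the ``participation'' predicate really is locally decidable from $B$ and a single candidate $C$ in the neighbouring marked cell, which is immediate since the participating cluster in any marked neighbouring cell of $B$ is the unique one containing the $B$-side endpoint of the minimum-$\ID$ edge of $E(B,\Vor(C))$.
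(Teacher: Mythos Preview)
Your proposal is correct and follows essentially the same approach as the paper's own argument: compute the clusters $A$ and $B$ via Lemma~\ref{lem:compute-cluster}, handle rules (1) and (2) by enumerating neighbours of $A$ and $B$ and computing their centers in $O(\Delta^2 L^2)$ probes, and for rule (3) enumerate the $O(p\Delta L\log n)$ marked Voronoi cells adjacent to $B$, locate the unique participating cluster $C$ in each via the minimum-$\ID$ edge, and pay $O(\Delta^3 L^2)$ per such $C$. Your explicit attention to the union bound in point~(a) and the local decidability of participation in point~(b) is if anything slightly more careful than the paper, which simply asserts the w.h.p.\ bound on the number of marked neighbours without spelling out the union over all clusters.
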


\subsubsection{Proof of connectivity, stretch, and size analysis}\label{sec:k2-dense-connect-proof}

\paragraph{Stretch and size analysis of $\Hdense$}
Denote by $G_{\Vor}$ the supergraph obtained from $G$ by merging vertices within each Voronoi tree into a supervertex (e.g., by contracting $\HdenseI$), and by $H_{\Vor}$ its subgraph obtained by applying the same operation in the spanner $\Hdense$ (e.g., the same edges as $\HdenseB$ but joining corresponding supervertices instead). Since we add strictly more edges than the algorithm of \cite{LeviLenzen17} does, the connectivity follows by the exact same argument (see Lemma 4 of \cite{LeviLenzen17}); for completeness, we provide it here (with only slightly modifications). See Figure~\ref{fig:spsp} for an illustration.
\begin{lemma}[Connectivity by $\HdenseB$]\label{lem:connected}
$H_{\Vor}$ preserves the connectivity of the Voronoi cells: if $\Vor$ and $\Vor_0$ are connected in $G_{\Vor}$, they remain connected in $H_{\Vor}$.
\end{lemma}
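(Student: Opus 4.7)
The plan is to reduce the claim to an edge-wise statement and then to a rank-based induction. Since each Voronoi tree is contained in $\HdenseI$ and contracts to a single supervertex in both $G_{\Vor}$ and $H_{\Vor}$, it suffices to show that for every $(\Vn{a},\Vn{b}) \in E(G_{\Vor})$, the supervertices $\Vn{a}$ and $\Vn{b}$ lie in the same connected component of $H_{\Vor}$. I fix $\Vn{a}$ and induct on the rank $r(\Vn{b})$. Each such edge is witnessed by a pair of adjacent clusters $A \subseteq \Vn{a}$ and $B \subseteq \Vn{b}$, which I fix at the outset.

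First I dispose of the easy cases. If $A$ or $B$ is marked, rule~(1) inserts an edge of $E(A,B)$ into $\HdenseB$ directly. If $B$ has no adjacent marked cluster, rule~(2) connects $B$ to every adjacent Voronoi cell, including $\Vn{a}$. Assume henceforth that neither applies; then $B$ participates in $\mathcal{C}(C)$ for some marked cluster $C$ lying in a Voronoi cell $\Vn{c}$, and because $B$ is adjacent to both $A$ and $C$, the center of $\Vn{b}$ lies in $c(\partial A) \cap c(\partial C)$.

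For the inductive step I examine rule~(3) with the pair $(A,C)$. If $r(\Vn{b})$ is among the $q = \Theta(n^{1/k}\log n)$ lowest ranks in $c(\partial A) \cap c(\partial C)$, rule~(3) connects $A$ directly to $\Vn{b}$, disposing of the base case. Otherwise, at least $q$ centers $s' \in c(\partial A) \cap c(\partial C)$ satisfy $r(s') < r(\Vn{b})$; pick any such $s'$ and set $\Vn{b'} = \Vor(s')$. Rule~(3) applied to the same pair $(A,C)$ connects $A$ to $\Vn{b'}$, yielding $(\Vn{a},\Vn{b'}) \in E(H_{\Vor})$, while rule~(1) on the marked $C$ links $C$ both to $B$ and to a cluster of $\Vn{b'}$ adjacent to $C$, giving a length-two $\Vn{b'}$-to-$\Vn{b}$ path through $\Vn{c}$ in $H_{\Vor}$. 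Because $s' \in c(\partial A)$ forces $(\Vn{a},\Vn{b'}) \in E(G_{\Vor})$ and $r(\Vn{b'}) < r(\Vn{b})$, the induction hypothesis produces an $\Vn{a} \leadsto \Vn{b'}$ path in $H_{\Vor}$, which concatenates with the detour through $\Vn{c}$ to reach $\Vn{b}$.

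The main obstacle I anticipate is a mild mismatch between the clean global statement of rule~(3) and the participation-based trigger used by the local implementation of $\HdenseB$: I must verify that the chosen $s'$ above actually gives a configuration in which \emph{some} cluster of $\Vn{b'}$ participates in a cluster-of-clusters centered at a reachable marker, so that the edge from $A$ to $\Vn{b'}$ is in fact deposited by the LCA. The freedom to pick any of the $q$ lower-rank candidates, together with the fact that a marked Voronoi cell has all of its clusters marked, should make this local witness always findable and let the induction go through.
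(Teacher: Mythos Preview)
Your argument follows the same inductive scheme as the paper's: fix a cluster $A$ on one side, decrease the rank of the neighboring Voronoi cell at each step, and use rule~(1) through the marked cell to chain the pieces together. Two remarks are in order.

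First, the line ``pick any such $s'$\,'' followed by ``rule~(3) applied to the same pair $(A,C)$ connects $A$ to $\Vn{b'}$'' is imprecise: rule~(3) only fires for the $q$ \emph{lowest}-rank centers in $c(\partial A)\cap c(\partial C)$, and an arbitrary $s'$ with $r(s')<r(\Vn{b})$ need not be among them. More importantly, this direct rule-(3) edge is redundant: once you have $s'\in c(\partial A)$ with strictly smaller rank, the induction hypothesis already gives the $\Vn{a}\leadsto\Vn{b'}$ connection, and rule~(1) through the marked $\Vn{c}$ supplies the rest of the path. The paper's proof uses exactly this: take $s'$ of \emph{minimum} rank, apply rule~(1) twice through $\Vn{c}$, and recurse on $(\Vn{a},\Vn{b'})$. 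Drop the direct rule-(3) claim and the argument is clean.

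Second, your last paragraph correctly flags the participation mismatch, but the resolution you propose (exploiting the $q$ candidates) is not how the paper handles it. The paper instead chooses the witness clusters more carefully: at every step, $B_i$ is taken to be the cluster of $\Vn{b}_i$ containing the endpoint of the minimum-$\ID$ edge in $E(A,\Vn{b}_i)$. Then one picks $C_i$ to be the unique marked cluster in which $B_i$ participates. With this choice, all three conditions of the \emph{local} rule~(3) are met verbatim in the base case, so the edge is genuinely deposited by the LCA. With your arbitrary $B$, the min-$\ID$ edge of $E(A,\Vn{b})$ may land in a different cluster $B^\ast$ that is not adjacent to $C$ at all, and the local trigger can fail.
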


%\iffalse
\begin{figure}[!h]%{r}{0.5\textwidth}
  %\vspace{-15pt}
  \begin{center}
    \includegraphics[width=0.5\textwidth]{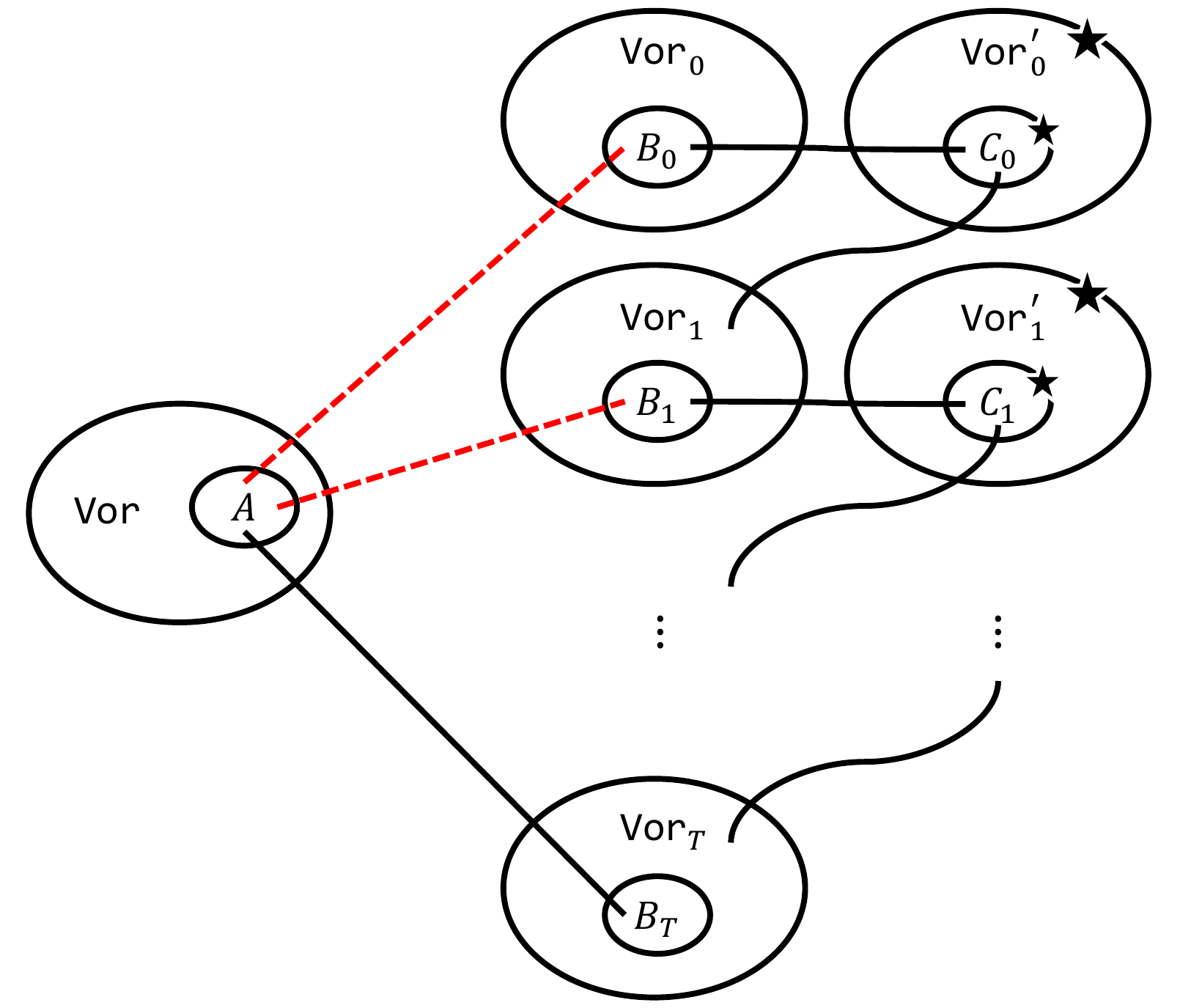}
  \end{center}
  %\vspace{-20pt}
  \caption[Illustration for the proof of connectivity and stretch for $\Hdense$]{Illustration for the proof of connectivity and stretch for $\Hdense$: Dashed red edges show the edges of interest at each inductive step, where the top one joining clusters $A$ and $B_0$ represents the original query. Solid black edges show the path of length $2T+1 = O(k)$ in $H_{\Vor}$ between $\Vor$ and $\Vor_0$.}% Marked Voronoi cells and clusters are marked with stars.}
\label{fig:spsp}
  %\vspace{-10pt}
\end{figure}
%\fi
\iffalse
\begin{{figure}{r}{0.4\textwidth}
  \vspace{-20pt}
  \begin{center}
    \includegraphics[width=0.4\textwidth]{spsp}
  \end{center}
  \vspace{-20pt}
  \caption{Illustration for the proof of connectivity and stretch for $\Hdense$. Dashed red edges show the edges of interest at each inductive step. Solid black edges show the path of length $2T+1 = O(k)$ in $H_{\Vor}$ between $\Vor$ and $\Vor_0$. Marked Voronoi cells and clusters are marked with stars.}
\label{fig:spsp}
  \vspace{-20pt}
\end{wrapfigure}
\fi

\begin{proof}
Consider clusters $A \subseteq \Vor$ and $B=B_0 \subseteq \Vor_0$ such that the edge $e$ of minimum $\ID$ in $E(\Vor,\Vor_0)$ is in $E(A,B_0)$. If $B_0$ is not adjacent to any marked cell, then by condition (2) there is an edge between $\Vor$ and $\Vor_0$ in $H_{\Vor}$. Hence, we assume that $B_0$ is adjacent to a marked cell $\Vor'$. Let $C_0 \subseteq \Vor'_0$ be the cluster such that $B_0$ is participating in $\mathcal{C}(C_0)$. 
%Recall that the rank of center $s \in S$ is denoted $r(s)$.

Let $s_0$ be the center of $\Vor_0$. If the rank $r(s_0)$ is among the $q$ lowest ranks of the centers $c(\partial A)\cap c(\partial C_0)$, then $e$ is added to $\HdenseB$ by condition (3). Otherwise, $\Vor_0$ is connected to $\Vor'_0$ in $H_{\Vor}$ as the edge of minimum $\ID$ in $E(B,C_0)$ is added to $\HdenseB$ by condition (1), since $C_0 \subseteq \Vor'_0$ is marked. Let $\Vor_1$ be the cell whose center has the minimum rank in $c(\partial A)\cap c(\partial C_0)$, and let $B_1 \subseteq \Vor_1$ be the cluster such that the edge of minimum $\ID$ in $E(A,\Vor_1)$ is in $E(A,B_1)$. Again by condition (1), $\Vor_1$ is also connected to $\Vor'_0$ in $H_{\Vor}$.

At that point, it suffices to show that $\Vor$ is connected to $\Vor_1$ in $H_{\Vor}$, where the rank of $\Vor_1$ is strictly smaller than the rank of $\Vor_0$. We may proceed with the proof by induction, with the hypothesis that all $\Vor_i$'s are connected in $H_{\Vor}$. Since the ranks of $\Vor_i$'s are strictly decreasing, the inductive argument halts after $T < |S|$ steps: at this point, $A$ is connected to $B_T \subseteq \Vor_T$ in $H_{\Vor}$, as desired.
\end{proof}

We next claim that stretch of the our spanner $\Hdense$ is $O(k^2)$, while \cite{LeviLenzen17} provides a stretch factor of $O(\log n \cdot (\Delta + \log n))$. The second factor of $O(\Delta \log n)$ has been reduced down to $O(k)$ thanks to the new partitioning criteria and algorithms described so far. To remove the remaining factor of $O(\log n)$, we leverage the fact we may add a factor of $O(n^{1/k}\log n)$ more edges to the spanner $\Hdense$, allowing the ranks in the inductive argument to decrease more rapidly. For simplicity, we assume now that the ranks of the centers are fully independent. In Section \ref{sec:boundedindepsparse} (Theorem~\ref{lem:stretchk_bounded}) we extend the following claim to the case where the ranks of the centers are formed by short random seed of $O(\log^2 n)$ bits.

\begin{lemma}[Stretch guarantee by $\HdenseB$]\label{lem:stretchk}
If the ranks of the centers are assigned independently, uniformly at random from $[0, 1)$, then w.h.p., the stretch of $H_{\Vor}$ w.r.t.~$G_{\Vor}$ is $O(k)$.
\end{lemma}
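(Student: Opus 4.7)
The plan is to reduce the stretch bound to a purely combinatorial--probabilistic statement about the inductive process in Lemma~\ref{lem:connected}: since that proof exhibits a $\Vor$-to-$\Vor_0$ path in $H_{\Vor}$ of length $2T+1$ (with $T$ the number of induction steps before termination), showing $T = O(k)$ w.h.p.\ suffices. Throughout the argument $A$ and $\Vor = \Vor(A)$ are fixed; the process defines a sequence of distinct cells $\Vor_0, \Vor_1, \dots, \Vor_T \in c(\partial A)$, and at each step $i < T$ the set $S_i := c(\partial A) \cap c(\partial C_i)$ (determined by $\Vor_i$ and the fixed graph structure, \emph{not} by the random ranks) contains $\Vor_i$ together with at least $q$ further cells whose rank is strictly below $R_i := r(\Vor_i)$; the algorithm then sets $\Vor_{i+1} = \arg\min_{s \in S_i} r(s)$.

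First I would establish the per-step rank-contraction claim: conditional on $R_i$ and on the continuation event at step $i$, $R_{i+1}$ is distributed as the minimum of at least $q$ iid uniforms on $[0,R_i)$, so for any $\alpha \le q$,
\[
\Pr[R_{i+1} > \alpha R_i/q \mid R_i, \text{continuation}] \le (1-\alpha/q)^{q} \le e^{-\alpha}.
\]
Choosing $\alpha = 5\ln n$ and using $q = \Theta(n^{1/k}\log n)$ gives $R_{i+1} \le 5 R_i \log n / q = O(R_i / n^{1/k})$ with probability $\ge 1 - n^{-5}$.

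Next I would iterate this bound and exploit the trivial lower bound $R_T \ge \min_{s \in c(\partial A)} r(s)$. A union bound over $T^* := 3k+1$ steps yields
\[
R_{T^*} \le R_0 \cdot (5 \log n / q)^{T^*} \le 1/n^{3}
\]
with probability $\ge 1 - O(k/n^{5})$. On the other hand, since $|c(\partial A)| \le n$, standard order statistics give $\min_{s \in c(\partial A)} r(s) \ge 1/n^{4}$ with probability $\ge 1 - 1/n^{3}$. These two bounds are contradictory, so the process must have terminated by step $3k+1$, yielding $T \le 3k+1 = O(k)$. A final union bound over all $O(n^{2})$ possible query pairs $(A,\Vor_0)$ promotes the conclusion to hold simultaneously with probability $1-o(1)$, which gives the claimed $O(k)$ stretch for $H_{\Vor}$ with respect to $G_{\Vor}$.

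The main technical obstacle is that $S_i$ depends adaptively on the ranks through $\Vor_i$, so one must justify the ``$\ge q$ iid uniforms below $R_i$'' description used in the per-step claim. I would resolve this via a lazy-revelation argument: expose the rank of each cell only the first time it enters some $S_j$. The key observation is that every previously-revealed cell $s \in \bigcup_{j<i} S_j \setminus \{\Vor_0,\dots,\Vor_i\}$ satisfies $r(s) \ge R_{j+1} \ge R_i$, because $\Vor_{j+1}$ was the minimum of $S_j$ and the $R_j$'s are strictly decreasing. Hence every cell in $S_i$ with rank strictly below $R_i$ is \emph{freshly} revealed at step $i$, and its rank is genuinely uniform on $[0,1)$ independent of the history, making the conditional distribution above legitimate and closing the argument.
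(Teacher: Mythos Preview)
Your approach is essentially the paper's: both argue that each inductive step of Lemma~\ref{lem:connected} shrinks the current rank by a factor $n^{\Omega(1/k)}$ w.h.p., handle the adaptivity of $S_i$ via the same deferred-decisions observation (all previously revealed ranks lie at or above $R_i$, so the $\ge q$ ranks below $R_i$ are fresh and conditionally uniform on $[0,R_i)$), and terminate by comparing against a global lower bound on the minimum rank.

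There is one arithmetic slip. The two bounds you derive, $R_{T^*} \le 1/n^{3}$ and $\min_{s} r(s) \ge 1/n^{4}$, are \emph{not} contradictory: together they only say $1/n^{4} \le R_{T^*} \le 1/n^{3}$. You need the upper bound on $R_{T^*}$ to dip \emph{below} the lower bound on the minimum rank. Two easy fixes: either take $T^* = 5k$ so that $R_{T^*} \le (5\log n/q)^{5k} \le n^{-5} < 1/n^{4}$, or keep $T^*=3k+1$ but use the sharper (and still global, so no per-pair union bound needed) estimate $\min_{s} r(s) \ge 1/n^{2}$, which holds with probability $\ge 1-1/n$ since $|S|\le n$. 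Either way the conclusion $T=O(k)$ stands.
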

\begin{proof}
The connectivity proof in Lemma~\ref{lem:connected} uses an inductive argument, where each step in the induction, increases the length of the path in $G_{\Vor}$ by $2$. Thus it suffices to show that the induction of Lemma~\ref{lem:connected} halts, w.h.p., after $O(k)$ steps. In comparison, in Lemma 4 of \cite{LeviLenzen17}, the induction uses $O(\log n)$ steps and hence the stretch in $H_{\Vor}$ is also $O(\log n)$. 

Observe that while the construction of $G_{\Vor}$ heavily relies on the $\ID$s of vertices, the rank assignment of vertices is random and independent of $G_{\Vor}$.
%Consider any pair of adjacent cells $\Vor,\Vor_1$ (i.e., neighbors in $G_{\Vor}$) and let 
Again, let $A \subseteq \Vor, B=B_0 \subseteq \Vor_0$ be two adjacent clusters of interest. Folowing the argument of Lemma~\ref{lem:connected}, at each step $i\geq 0$, we consider $\Vor_i$ which by the inductive hypothesis satisfies the following.
\begin{compactenum}[(a)]
\item $A$ and $\Vor_i$ are adjacent.
\item The distance between $\Vor_0$ and $\Vor_i$ in $H_{\Vor}$ is at most $2i$.
\item The rank of $c(\Vor_i)$ is the minimum rank among those of all centers in the collection $\{c(\partial A)\cap c(\partial C_j)\}_{j<i}$.
\end{compactenum}

It is straightforward to verify that these conditions hold for the base case $i = 0$. For the inductive step, hypothesis (a) holds because we choose $\Vor_i$ with center $s_i \in c(\partial A)\cap c(\partial C_{i-1})$, so $\Vor_i$ is adjacent to $\Vor$. It is also connected to the marked cluster $C_{i-1}$, which in turn is connected to $B_{i-1}$ in $\Vor_{i-1}$ by rule (1), thereby proving condition (b). Lastly, condition (c) follows, because $c(\Vor_i)$ is the center of minimum rank in the set of centers $c(\partial A)\cap c(\partial C_{i-1})$, which contains $c(\Vor_{i-1})$.

It remains to show that the induction terminates after $O(k)$ steps with high probability. Let $r_i=r(c(\Vor_i))$. We claim that in each step, either the process terminates or, w.h.p., chooses a center of rank $r_{i+1} \leq r_i / n^{1/k}$. Suppose that the process does not terminate at step $i$. Observe that at this point, all ranks ever ``revealed'' by our algorithm so far are of the centers in condition (c): no rank lower than $r_i$ has been encountered. Then in the beginning of step $i$, there are at least $q$ cluster centers in $c(\partial A)\cap c(\partial C_i)$ whose ranks are uniformly distributed in $[0, r_i)$ (since we assume that ranks are chosen independently).
For each of these $q = \Theta(n^{1/k} \log n)$ unrevealed ranks, the probability that the rank is at most $r_i/n^{1/k}$ is at least $n^{-1/k}$.
By the Chernoff bound we obtain that, w.h.p., at least one of these ranks turns out to be at most $r_i / n^{1/k}$.
Similarly, w.h.p.,~no center has rank below $\Theta(1/(n \log n))$. Thus, the algorithm terminates in $\log_{n^{1/k}} (n \log n) = \Theta(k)$ steps, as desired.
\end{proof}

Next, we proceed to bounding the size of $\HdenseB$.
\begin{lemma}[Size of $\HdenseB$]\label{lem:denseb-size}
W.h.p., $\HdenseB$ contains $O(\frac{pn^{2+1/k} \log^4 n}{L^2}+\frac{n \log^2 n}{pL})=O(n^{1+1/k}\log^4 n)$ edges.
\end{lemma}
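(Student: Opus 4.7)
The plan is to upper-bound separately the contribution of each of the three construction rules for $\HdenseB$, showing that rules (1) and (3) together account for the first summand in the stated bound while rule (2) accounts for the second.

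Rule (3) is handled first. For every pair of a cluster $A$ and a marked cluster $C$, the rule adds at most $q = \Theta(n^{1/k}\log n)$ spanner edges. By Claim~\ref{obs:nclusters} there are $O((n\log n)/L)$ clusters in total, and by the earlier marked-cluster claim there are $O((pn\log^2 n)/L)$ marked clusters, so the number of such pairs is $O((pn^2\log^3 n)/L^2)$; multiplying by $q$ yields a contribution of $O((pn^{2+1/k}\log^4 n)/L^2)$, the first term.

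Rule (2) applies only when a cluster $B$ has no marked adjacent Voronoi cell. If $B$ is adjacent to $t$ distinct Voronoi cells, then since the markings are independent the probability that none is marked is at most $(1-p)^t \leq e^{-pt}$. Taking the constant in $t = \Theta((\log n)/p)$ large enough gives failure probability at most $n^{-c}$ per cluster, and a union bound over the $O((n\log n)/L)$ clusters implies that w.h.p.~every cluster eligible for rule (2) is adjacent to at most $O((\log n)/p)$ Voronoi cells and hence contributes at most that many edges. Summing over clusters yields $O((n\log^2 n)/(pL))$ edges, the second term.

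Rule (1) is then dispatched by a clean inclusion argument: every edge added by rule (1) is charged to a pair $(C, A)$ with $C$ marked and $A$ adjacent to $C$, and each such pair is in particular a pair $(A, C)$ considered by rule (3), which imposes no adjacency restriction. Hence the number of rule-(1) edges is at most the total pair count from the rule-(3) analysis, $O((pn^2\log^3 n)/L^2)$, which is absorbed into the first term since $q \geq 1$. Summing the three contributions and substituting $L = \Theta(n^{1/3})$, $p = \Theta(1/L)$ recovers the stated simplification $O(n^{1+1/k}\log^4 n)$. The one place requiring care is the union bound in the rule-(2) step: it needs sufficient independence of the markings, which is routine under the full-independence assumption used here and is later reduced to $O(\log^2 n)$ random bits as discussed in Section~\ref{APPEND:bounds}. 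The principal conceptual obstacle is avoiding a spurious $\Delta$ dependence in the count for rule (1), and this is exactly what the inclusion-into-rule-(3)-pairs argument resolves.
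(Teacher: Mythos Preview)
Your proof is correct and follows essentially the same approach as the paper's. The paper bounds rule~(1) by directly multiplying $(\#\text{marked clusters})\times(\#\text{clusters})$, which is exactly the pair count you invoke via the inclusion-into-rule-(3) observation; the rule~(2) and rule~(3) arguments match verbatim.
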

\begin{proof}
Recall that there are $O((n \log n)/L)=O(n^{2/3}\log n)$ clusters and $((pn \log^2 n)/L)=O(n^{1/3}\log^2 n)$ marked clusters, bounding the number of edges from condition (1) by $O((pn^2 \log^3 n)/L^2)=O(n \log^3 n)$. For condition (3), the algorithm adds $O(n^{1/k}\log n)$ edges for each such pair, which is $O((pn^{2+1/k} \log^4 n)/L^2)=O(n^{1+1/k} \log^4 n)$ edges in total. Lastly for conditon (2), every cluster that is not participating in any cluster of clusters (i.e., not adjacent to any marked Voronoi cell) w.h.p.~has $O((\log n)/p) = O(n^{1/3} \log n)$ adjacent Voronoi cells, because these cells are independently marked with probability $p = n^{-1/3}$. (On the other hand, clusters are not marked independently, so in condition (2) we add one edge from $A$ to every adjacent Voronoi cell, rather than every adjacent cluster.) Hence, the number of edges added by condition (c) is $O((n \log^2 n)/(pL))=O(n\log^2 n)$.
\end{proof}

\paragraph{Putting everything together} Recall that our overall spanner is $H=\Hsparse\cup\Hdense$ where $\Hdense = \HdenseI\cup\HdenseB$. Combining all results so far in this section, we achieve at our main result, Theorem~\ref{thm:lowdegree}, as follows.
\begin{proof}[Proof of Theorem~\ref{thm:lowdegree}]
\paragraph{(i) Size} The size of $\Hsparse$, $\HdenseI$ and $\HdenseB$ are $O(kn^{1+1/k})$, $O(n)$ and $O(n^{1+1/k} \log^4 n)$ due to Lemma~\ref{lem:k2sparse}, Lemma~\ref{lem:k2denseI} (from the fact that $\HdenseB$ is a forest), and Lemma~\ref{lem:denseb-size}, respectively. More precisely, for parameters $L$ and $p$, we present a spanner with $O(kn^{1+1/k}+\frac{pn^{2+1/k} \log^4 n}{L^2}+\frac{n \log^2 n}{pL})$ edges.

\paragraph{(ii) Stretch} The case of $\Hsparse$ taking care of $\Esparse$ is immediate by Lemma~\ref{lem:k2sparse}, hence we focus on $\Hdense$. The stretch argument follows by Lemma~\ref{lem:stretchk} for $\HdenseB$ together with the fact that in each Voronoi cell we have a Voronoi tree of depth $O(k)$ in $\HdenseI$ by \ref{lem:k2denseI}. That is, between two adjacent Voronoi cells, the spanner has a path of length $O(k)$ in the Voronoi graph $H_{\Vor}$. Within each Voronoi cell (supervertex in $G_{\Vor}$) there exists a path of length $2k$ connecting any pair of vertices. Thus, there is a path of length $O(k^2)$ in $\Hdense$ between any pair of neighboring dense vertices.
 
\paragraph{(iii) Probes} The LCA can verify whether $(u,v) \in \Esparse$, and if so, check if $(u,v) \in \Hsparse$ using $O(\Delta L)$ probes by Lemma~\ref{lem:k2sparse} using $O(\Delta^2 L^2)$ total probes. Otherwise, Lemma~\ref{lem:k2denseI} allows the LCA to verify whether $u$ and $v$ belongs to the same Voronoi cell, and if so, check whether $(u,v)\in\HdenseI$ using $O(\Delta^2 L^2)$ probes. Lastly for $u$ and $v$ from different Voronoi cells, the LCA can check whether $(u,v)\in\HdenseB$ using $O(p\Delta^4 L^3 \log n)$ probes via Lemma~\ref{lem:k2denseprobe}. Substituting $L=n^{1/3}$ and $p=1/L$ yields the desired result.
 %It remains to bound the number of edges added due to clusters that have at least one marked adjacent cell. The algorithm adds $O(n^{1/k}\log n)$ edges for any pair of such cluster and marked Voronoi cell. Hence, overall, this adds $\widetilde{O}(n^{1/3}\cdot n^{2/3} \cdot n^{1/k})$ edges. The theorem follows.
\end{proof}

Theorem~\ref{thm:lowdegree} implies that there exists an LCA with sub-linear probe complexity for any $\Delta = O(n^{1/12-\epsilon})$. In fact, we remark that by using the argument of Lemma \ref{lem:stretchk}, we can achieve a spanner $H$ with $\widetilde{O}(n^{1+1/k} + nq)$ edges with stretch $O(k \log_q n) = O((k \log n)/ \log q)$. As a reminder, the theorem above does not show that the LCA uses a polylogarithmic number of independent random bits. To complete the proof of Theorem~\ref{thm:lowdegree}, Section~\ref{sec:boundedindepsparse} describes the required adaptation for working with only $O(\log^2 n)$ random bits. 
%\begin{proof}[Proof of Theorem~\ref{thm:lowdegree}]
%The theorem follows from the proof of Theorem~\ref{thm:spannergeneral} where the independent random bits assumption are resolved according to Section~\ref{APPEND:bounds}. More specifically, we show the hitting set argument under polylogarithmic number of independent random bits in Section~\ref{sec:bound-hit}, and the random ranking argument for bounding the stretch factor of $H_{\Vor}$ in Lemma~\ref{lem:stretchk_bounded} (Section~\ref{sec:boundedindepsparse}).
%\end{proof}

\paragraph{A summary of the differences between our algorithm and the algorithm of \citet{LeviLenzen17}}
Our algorithm can be considered as an extension of~\cite{LeviLenzen17} that provides a trade-off between the stretch factor and the size of the subgraph. In particular, we show that the stretch factor's dependency on $\Delta$ and $n$ can be removed completely. We conclude by summarizing several key differences between our approaches. 
\begin{compactitem}
\item
In \cite{LeviLenzen17}, the distinction between dense and sparse vertices depends on a radius $\ell$ sampled uniformly at random from a given range that depends on $\Delta$. In our construction, the radius is $k$, the stretch parameter.
\item
In \cite{LeviLenzen17}, the sparse and dense graphs are vertex disjoint and the parameter $\ell$ guarantees that the number of edges between these graphs is small. In contrast, in our construction the sparse and dense graphs share vertices and in fact, these graphs are only {\em edge-disjoint}.
\item
The BFS algorithm of \cite{LeviLenzen17} for detecting a center explores an entire level of the BFS tree in each step, choosing the closest center with minimum $\ID$. We provide a more efficient variant that explores the neighborhood of one vertex at a time, and chooses the closest center with lexicographically-first shortest path, improving the probe complexity by a factor of $\Delta$.
\item
For the sparse case, \cite{LeviLenzen17} uses the distributed algorithm of \citet{elkin2017efficient}, whereas we use the algorithm of \citet{baswana07} since it has been proved to work with $O(\log n)$-wise independence~\cite{Censor-HillelPS16}. 
\item
For the dense case, in \cite{LeviLenzen17}, the radius of the Voronoi cells is $\ell=\Theta(\Delta+\log n)$ and in our case, it is $k$. 
\item 
The number of clusters in \cite{LeviLenzen17} depends on $\ell$ and $\Delta$. In our construction, the number of clusters is $\widetilde{O}(n^{2/3})$, each containing $O(n^{1/3})$ vertices. 
\item 
We allow $O(n^{1/k}\log n)$ edges between a cluster and neighboring clusters of a given marked clusters, whereas \cite{LeviLenzen17} only adds a single such edge.
\item
The algorithm of \cite{LeviLenzen17} uses random seed of size $O(\Delta \cdot n^{2/3})$. However, our algorithm only uses a poly-logarithmic number of random bits.
\end{compactitem}

\section{Bounded Independence}\label{APPEND:bounds}
In this section, we show that all our LCA constructions succeed w.h.p.~using {\em $\Theta(\log n)$-wise independent hash functions} which only require $\Theta(\log^2 n)$ random bits. 
We use the following standard notion of $d$-wise independent hash functions as in \cite{Vadhan12}. In particular, our algorithms use the explicit construction of $\Hcal$ by \cite{Vadhan12}, with the parameters as stated in Lemma~\ref{lem: d-wise independent}.
\begin{definition}
	\label{def: d-wise independent}
	For	$N,M,d \in \mathbb{N} $ such that $d \leq N$, a family of functions $\Hcal = \set{h : [N] \rightarrow
		[M]}$ is $d$-wise independent if for all distinct $x_1,...,x_d \in [N],$ the
	random variables $h(x_1),...,h(x_d)$ are independent and uniformly distributed
	in $[M]$ when $h$ is chosen randomly from $\Hcal$.
\end{definition}

\begin{lemma}[{Corollary 3.34 in \cite{Vadhan12}}]
	\label{lem: d-wise independent}
	For every $\gamma,\beta,d \in \mathbb{N},$ there is a family of $d$-wise independent functions $\mathcal{H}_{\gamma,\beta} = \set{h : \set{0,1}^\gamma \rightarrow \set{0,1}^\beta}$ such that choosing a random function from $\mathcal{H}_{\gamma,\beta}$ takes $d \cdot \max \set{\gamma,\beta}$ random bits, and evaluating
	a function from $\mathcal{H}_{\gamma,\beta}$ takes time $\poly(\gamma,\beta,d)$.
\end{lemma}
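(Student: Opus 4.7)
The plan is to give the standard polynomial-based construction over a finite field, which is the textbook construction behind Vadhan's Corollary 3.34. Let $\ell = \max\{\gamma,\beta\}$, and work in the finite field $\mathbb{F}_{2^\ell}$, whose elements we identify with bit-strings in $\{0,1\}^\ell$ via a fixed $\mathbb{F}_2$-linear isomorphism. Embed the domain $\{0,1\}^\gamma$ into $\mathbb{F}_{2^\ell}$ by zero-padding (call this map $\iota$), and define a truncation map $\tau : \mathbb{F}_{2^\ell}\to\{0,1\}^\beta$ that projects onto the first $\beta$ coordinates. A random hash function $h\in\mathcal{H}_{\gamma,\beta}$ is specified by sampling $d$ independent uniform coefficients $a_0,a_1,\ldots,a_{d-1}\in\mathbb{F}_{2^\ell}$ and setting
\[
h(x) \;=\; \tau\!\left(\sum_{i=0}^{d-1} a_i \,\iota(x)^i\right).
\]
This uses exactly $d\ell = d\cdot\max\{\gamma,\beta\}$ random bits to pick $h$, matching the claim.

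The core step is to verify $d$-wise independence. Fix any distinct inputs $x_1,\ldots,x_d\in\{0,1\}^\gamma$; since $\iota$ is injective, the field elements $\iota(x_1),\ldots,\iota(x_d)$ are distinct. Consider the linear map $\Phi:\mathbb{F}_{2^\ell}^d\to\mathbb{F}_{2^\ell}^d$ sending $(a_0,\ldots,a_{d-1})$ to $\bigl(p(\iota(x_1)),\ldots,p(\iota(x_d))\bigr)$ where $p(y)=\sum a_i y^i$. Its matrix is the Vandermonde matrix on the distinct points $\iota(x_j)$, hence invertible over the field. Therefore, when the coefficient vector is uniform on $\mathbb{F}_{2^\ell}^d$, the tuple $\bigl(p(\iota(x_1)),\ldots,p(\iota(x_d))\bigr)$ is also uniform on $\mathbb{F}_{2^\ell}^d$, giving independent and uniform values in $\mathbb{F}_{2^\ell}$. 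Applying the truncation $\tau$ coordinate-wise preserves independence and yields uniform values in $\{0,1\}^\beta$, since $\tau$ is a surjective $\mathbb{F}_2$-linear map with every fiber of equal size $2^{\ell-\beta}$. This establishes the required $d$-wise independence.

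For the evaluation-time bound, one first implements arithmetic in $\mathbb{F}_{2^\ell}$ (addition and multiplication of $\ell$-bit strings modulo a fixed degree-$\ell$ irreducible polynomial over $\mathbb{F}_2$) in time $\poly(\ell)$. Given a stored description of $h$ (the coefficients $a_0,\ldots,a_{d-1}$), evaluating the polynomial via Horner's rule costs $O(d)$ field operations, for a total of $\poly(\gamma,\beta,d)$ time, as claimed. An auxiliary preprocessing step is to construct an explicit irreducible polynomial of degree $\ell$ over $\mathbb{F}_2$; this can be done deterministically in $\poly(\ell)$ time (e.g., via Shoup's algorithm), and only needs to be done once.

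The only mildly subtle point I expect is the asymmetry between $\gamma$ and $\beta$: the construction must be stated in a field large enough to distinguish domain elements \emph{and} produce $\beta$-bit outputs, which is why we take $\ell=\max\{\gamma,\beta\}$ and use the zero-padding/truncation pair $(\iota,\tau)$. Everything else follows from the standard Vandermonde argument; no tail bounds or probabilistic analysis are required, since the claim is exact independence.
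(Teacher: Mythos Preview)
The paper does not supply its own proof of this lemma; it simply quotes the result as Corollary~3.34 of Vadhan~\cite{Vadhan12}. Your polynomial-over-$\mathbb{F}_{2^\ell}$ construction with the Vandermonde argument is precisely the standard proof behind that corollary, and it is correct as written.
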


Then, we exploit the following result to show the concentration of $d$-wise independent random variables:

\begin{fact}[Theorem 5(III) in \cite{schmidt1995chernoff}]
\label{fc:bounded}
If $X$ is a sum of $d$-wise independent random variables, each of which is in the interval $[0,1]$ with $\mu=\mathbb{E}(X)$, then:
\begin{compactitem}
\item{(I)}
For $\delta\leq 1$ and $d \leq \lfloor \delta^2 \mu e^{-1/3}\rfloor$, it holds that 
$\Pr[|X-\mu|\geq \delta \mu]\leq e^{-\lfloor d/2\rfloor}.$
\item{(II)}
For $\delta\geq 1$ and $d=\lceil \delta \mu\rceil$, it holds that:
$\Pr[|X-\mu|\geq \delta \mu]\leq e^{-\delta\mu/3}.$ 
\end{compactitem}
\end{fact}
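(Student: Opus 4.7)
The plan is to apply the \emph{$d$-th moment method} together with Markov's inequality, which is the standard route to Chernoff-style tail bounds under only limited independence. I would first assume $d$ is even (replacing $d$ by $d-1$ otherwise, which is the source of the floor $\lfloor d/2\rfloor$ in the final bound), so that $(X-\mu)^d \geq 0$ pointwise and
$$\Pr[\,|X-\mu| \geq \delta\mu\,] \;\leq\; \frac{\E[(X-\mu)^d]}{(\delta\mu)^d}.$$

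The key observation reducing to the fully independent case is that $\E[(X-\mu)^d]$ is a polynomial in $X_1,\dots,X_n$ of degree at most $d$. Writing $Y_i = X_i - \E[X_i]$ and expanding $(Y_1+\cdots+Y_n)^d$ as a multinomial, every monomial $Y_{i_1}\cdots Y_{i_d}$ involves at most $d$ distinct indices, and under $d$-wise independence the expectation of such a product factorises over its distinct indices \emph{exactly} as it would under full independence. Hence $\E[(X-\mu)^d]$ equals the corresponding central moment $M_d$ of a sum of fully independent $[0,1]$-valued variables with the same marginals, and it suffices to bound $M_d$.

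To bound $M_d$ I would use that $\E[Y_i]=0$, so a monomial survives only if every index that appears does so at least twice. Grouping surviving monomials by the partition $\pi$ of $\{1,\dots,d\}$ into blocks of size at least $2$, and using $|Y_i|\leq 1$ together with $\sum_i \E[Y_i^2]\leq \sum_i \E[X_i] = \mu$, one obtains a bound of the form $M_d \leq \sum_{|\pi|\leq d/2} \mu^{|\pi|}\cdot (\text{block factors})$. A Stirling-type estimate for the number of such partitions then yields $M_d \leq (c\,d\mu)^{d/2}$ for an explicit constant $c$ (the refined $e^{1/3}$ factor comes out of careful block-factor tracking). Substituting into Markov gives
$$\Pr[\,|X-\mu| \geq \delta\mu\,] \;\leq\; \Bigl(\tfrac{c\,d}{\delta^2\mu}\Bigr)^{d/2},$$
so in part (I) the hypothesis $d\leq \lfloor\delta^2\mu e^{-1/3}\rfloor$ forces each factor to be at most $e^{-1}$, yielding the claimed $e^{-\lfloor d/2\rfloor}$.

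For part (II) with $\delta\geq 1$ and $d=\lceil\delta\mu\rceil$, I would keep the same framework but, since $\delta\mu$ may dominate $\mu$, bound surviving blocks of size $\geq 2$ more uniformly, optimising to obtain $e^{-\delta\mu/3}$. The main obstacle is not the structure of the argument but the \emph{constant tracking}: hitting exactly $e^{-1/3}$ in (I) and $e^{-\delta\mu/3}$ in (II) requires the refined combinatorial accounting of Schmidt--Siegel--Srinivasan rather than a cruder Rosenthal-type moment inequality. The high-level reduction from $d$-wise to fully independent moments, however, is clean and generic, and is what makes the whole argument go through.
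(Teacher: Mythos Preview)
The paper does not prove this statement at all: it is stated as a \emph{Fact} with an explicit citation to Theorem~5(III) of Schmidt--Siegel--Srinivasan, and is used as a black box in the bounded-independence analysis. So there is no ``paper's own proof'' to compare against.

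That said, your outline is the correct one and matches how the cited reference actually establishes the bound: reduce to the fully independent case by noting that $\E[(X-\mu)^d]$ is a degree-$d$ polynomial whose expectation is determined by $d$-wise marginals, then bound the central moment via the partition/multinomial expansion and apply Markov. Your caveat is also accurate: the structural reduction is clean, but obtaining the precise constants ($e^{-1/3}$ in the threshold for~(I), the factor $1/3$ in the exponent for~(II)) requires the careful block-size accounting carried out in the original paper rather than a generic Rosenthal-type estimate. For the purposes of this paper, though, none of that is needed---the fact is simply quoted.
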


\paragraph{Bounded independence for hitting set procedures}
Most of our algorithms are based on the following hitting set procedure. For a given threshold $\Delta \in [1,n]$, each vertex flips a coin with probability $p=(c\log n)/\Delta$ of being head and the set of all vertices with head outcome join the set of centers $S$. Assuming the outcome of coin flips are fully independent, by the Chernoff bound, the followings hold w.h.p.:
\begin{description}
\item{(HI)}
There are $\Theta(p n)$ sampled vertices $S$.
\item{(HII)}
For each vertex of degree at least $\Delta$, it has $\Theta(\log n)$ centers among its first $\Delta$ neighbors.   
\end{description}
Here we show that to satisfy properties (HI) and (HII), it is sufficient to assume that the outcomes of the coin flips are $d$-wise independent. By Lemma~\ref{lem: d-wise independent}, to simulate $d$-wise independent coin flips for all vertices, the algorithm only requires $t = \Theta(d (\log n + \log 1/p))$ random bits: more precisely, setting $\gamma=\Theta(\log n)$ and $\beta=\log {1/p}$ (for simplicity, lets assume that $\log 1/p$ is an integer), there exits a family of $d$-wise independent functions $\Hcal=\set{h : \set{0,1}^{\Theta(\log n)} \rightarrow \set{0,1}^{\log (1/p)}}$ such that a random function $h \in \Hcal$ can be specified by a string of random bits of length $t$. In other words, each function $h\in \Hcal$ maps the $\ID$ of each vertex to the outcome of its coin flip according to a coin with bias $p$. 
Then, from a string $\mathcal{R}$ of $t$ random bits, the algorithm picks a function $h_{\mathcal{R}} \in \Hcal$ at random to simulate the coin flips of the vertices accordingly: the outcome of the coin flip of $v$ is head if $h_{\mathcal{R}}(\ID(v)) = 0$ (which happens with probability $p$) and the coin flips are $d$-wise independent.
Setting $d=c \log n$ for some constant $c>1$, we prove the following:

\begin{claim}\label{cl:bounded}
If the coin flips are $d$-wise independent then properties (HI) and (HII) holds. Furthermore, the sequence of $n$ $d$-wise independent coin flips can be simulated using a string of $O(\log^2 n)$ random bits.
\end{claim}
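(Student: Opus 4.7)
}

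My plan is to reduce both (HI) and (HII) to direct applications of Fact~\ref{fc:bounded}(I), since each property asks for concentration of a sum of $d$-wise independent indicators around a mean that is at least $\Omega(\log n)$. For the random-bit count, I would simply invoke Lemma~\ref{lem: d-wise independent} with the parameters $\gamma = \Theta(\log n)$, $\beta = \log(1/p)$, and $d = c\log n$.

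First, for (HI), define $X_v$ to be the indicator that $v$ is sampled into $S$, and let $X = \sum_{v\in V} X_v$, so $\mu := \E[X] = pn = cn\log n/\Delta \geq c\log n$ for every $\Delta \leq n$. The $X_v$'s are $d$-wise independent with $d = c\log n$, and $X_v \in [0,1]$. Fixing a small constant $\delta < 1$, the condition $d \leq \lfloor \delta^2\mu\,e^{-1/3}\rfloor$ holds provided the constant $c$ is chosen large enough (relative to $\delta$). Fact~\ref{fc:bounded}(I) then gives $\Pr[|X - \mu| \geq \delta\mu] \leq e^{-\lfloor d/2\rfloor} \leq 1/n^{\Omega(1)}$, so $|S| = \Theta(pn)$ w.h.p., proving (HI).

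Second, for (HII), fix any vertex $v$ of degree at least $\Delta$ and let $\Gamma_{\Delta,1}(v)$ denote its first $\Delta$ neighbors. Define $Y_v = \sum_{u \in \Gamma_{\Delta,1}(v)} X_u$; then $\mu_v := \E[Y_v] = p\Delta = c\log n$, and the summands are again $d$-wise independent and lie in $[0,1]$. The same constant $\delta < 1$ gives, via Fact~\ref{fc:bounded}(I), $\Pr[|Y_v - \mu_v| \geq \delta\mu_v] \leq e^{-\lfloor d/2\rfloor} \leq 1/n^{\Omega(1)}$. Taking the exponent in $d = c\log n$ larger than needed, a union bound over the at most $n$ vertices of degree at least $\Delta$ shows that $Y_v = \Theta(\log n)$ simultaneously for all such $v$, proving (HII).

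Finally, for the seed length, pick the random function $h_{\mathcal{R}}$ from the explicit $d$-wise independent family $\mathcal{H}_{\gamma,\beta}$ of Lemma~\ref{lem: d-wise independent} with $\gamma = \Theta(\log n)$ (sufficient to index all $\ID$s), $\beta = \log(1/p) = O(\log n)$, and $d = c\log n$, and declare $v \in S$ iff $h_{\mathcal{R}}(\ID(v)) = 0$. The seed length is $d \cdot \max\{\gamma,\beta\} = O(\log n)\cdot O(\log n) = O(\log^2 n)$, as claimed. The only delicate point, and in some sense the main obstacle, is making sure the independence parameter $d$ is simultaneously large enough to satisfy Fact~\ref{fc:bounded}(I) with a constant $\delta$ and to absorb the union bound over all $n$ vertices in (HII); both requirements are met by taking $d = c\log n$ for a suitably large constant $c$, which only affects the hidden constants in the $O(\log^2 n)$ seed length.
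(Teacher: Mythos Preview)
Your approach is exactly the one the paper intends (the paper does not spell out a proof beyond setting up Fact~\ref{fc:bounded} and Lemma~\ref{lem: d-wise independent}), and the seed-length calculation is correct.

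There is one small but genuine slip in how you invoke Fact~\ref{fc:bounded}(I). In (HII) you have $\mu_v = p\Delta = c\log n$ and you take $d = c\log n$, then assert that the hypothesis $d \leq \lfloor \delta^2 \mu_v\, e^{-1/3}\rfloor$ holds ``provided $c$ is large enough.'' But with $d = \mu_v$ this inequality reads $1 \leq \delta^2 e^{-1/3}$, which is false for every $\delta \leq 1$ regardless of $c$; the same issue arises in (HI) when $\Delta$ is close to $n$. The fix is the standard one: since $c\log n$-wise independence implies $d'$-wise independence for any $d' \leq c\log n$, apply Fact~\ref{fc:bounded}(I) with the smaller parameter $d' = \lfloor \delta^2 \mu_v\, e^{-1/3}\rfloor = \Theta(c\log n)$. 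The conclusion $e^{-\lfloor d'/2\rfloor} = n^{-\Omega(c)}$ is still polynomially small, and choosing $c$ large enough now legitimately absorbs the union bound over $n$ vertices.
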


\subsection{Construction of representatives in Section~\ref{sec:5rep}} 
The analysis above (for hitting set procedures) also extends to the process of computing $\RepsOf$. Each crowded vertex chooses values $c\log n$ random indices (of its neighbor-list) in $[\MedDeg]$, each of which has probability $1/2$ of hitting a neighbor of degree at least $\SuperDeg$. Let $\{Z_i\}_{i \in [c \log n]}$ be indicators for these events and $Z$ denote their sum, then the expected sum $\mathbb{E}(Z) \geq (c/2)\log n$. Imposing $d$-wise independence, Fact \ref{fc:bounded}(I) implies that w.h.p., $Z>0$, so the representative set is non-empty. We apply the union bound to show that $\RepsOf(v) \neq \emptyset$ for every $v\in\CrowdedVer$, as desired.

\subsection{Bounded independence for Section \ref{sec:fullsparse}} \label{sec:boundedindepsparse}
To define the $\ell=\lceil \log n \rceil$-bit random rank $r(v)$, we will use a collection of $k$ hash functions (where $k$ is the stretch parameter). Letting $N=\lceil \log n /k \rceil$, each function $h_i:\{0,1\}^\ell\to \{0,1\}^N$ is an $O(\log n)$-wise independent hash function for $i \in \{1, \ldots, k\}$.

To do so, we view the rank $r(v)$ as consisting of $k$ blocks, each with $N$ bits. 
Specifically, let $r(v)=[b_1, \ldots, b_{\ell}] \in \{0,1\}^\ell$ and let $R_i(v)=[b_{(i-1)\cdot N}, \ldots, b_{i\cdot N-1}]$ be the $i^{th}$ block of $N$ bits in $r(v)$. For every center $v$, define
$$R_i(v)=h_i(\ID(v)) ~\mbox{and}~ r(v)=R_{1}(v) \circ R_{2}(v) \circ \ldots \circ R_{k}(v)~.$$
The collection of these $h_1, \ldots, h_k$ functions are obtained by a uniform sampling from a family $\mathcal{H}=\{ h:\{0,1\}^\ell\to \{0,1\}^N\}$ of $O(\log n)$-wise independent hash functions.

Our goal is prove Lemma \ref{lem:stretchk} using these ranks instead of fully independent random ranks.
\begin{lemma}[Stretch guarantee by $\HdenseB$]\label{lem:stretchk_bounded}
%Using the ranks $r(v)$, w.h.p., the stretch of $H_{\Vor}$ w.r.t $G_{\Vor}$ is $O(k)$.
If the ranks of the centers are generated according to the above construction, then w.h.p., the stretch of $H_{\Vor}$ w.r.t.~$G_{\Vor}$ is $O(k)$.

\end{lemma}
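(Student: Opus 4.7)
The plan is to trace through the inductive argument of Lemma~\ref{lem:stretchk} while substituting each use of the standard Chernoff bound (valid under full independence of the ranks) with an application of the bounded-independence concentration inequality Fact~\ref{fc:bounded}(I), and organizing the steps so that the $i$-th inductive step relies only on $h_i$. The only probabilistic ingredient used in the proof of Lemma~\ref{lem:stretchk} is the claim that, at the $i$-th step of the induction, among the $q = \Theta(n^{1/k}\log n)$ candidate centers in $c(\partial A)\cap c(\partial C_{i-1})$, at least one has rank at most $r_{i-1}/n^{1/k}$ with probability $1 - 1/n^{\Omega(1)}$. The expected number of such candidates is $q \cdot n^{-1/k} = \Theta(\log n)$, which is precisely the regime where Fact~\ref{fc:bounded}(I) with $d = \Theta(\log n)$ reproduces the exponential tail bound $e^{-\Theta(\log n)} = n^{-\Omega(1)}$ as in the fully-independent case.

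To make this substitution legal, I would condition on $h_1, \ldots, h_{i-1}$ at the start of step $i$. This freezes the entire prior history of the induction---the intermediate Voronoi cells $\Vor_1, \ldots, \Vor_{i-1}$, their associated marked clusters $C_1, \ldots, C_{i-1}$, the current minimum $r_{i-1}$, and the candidate set $c(\partial A)\cap c(\partial C_{i-1})$ of size at least $q$---together with the first $i-1$ blocks $R_1(v), \ldots, R_{i-1}(v)$ of each candidate's rank. The remaining relevant randomness for step $i$ is encapsulated by $h_i$, which is independent of $h_1, \ldots, h_{i-1}$ and is drawn from a $\Theta(\log n)$-wise independent family (Lemma~\ref{lem: d-wise independent}). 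Consequently, the block values $\{R_i(v)\}_{v\in c(\partial A)\cap c(\partial C_{i-1})}$ are $\Theta(\log n)$-wise independent and uniform on $\{0,1\}^N$, where $N = \lceil\log n/k\rceil$ so that $2^N = \Theta(n^{1/k})$. Mirroring the deferred-revelation argument in Lemma~\ref{lem:stretchk}, the event ``$r(v) \leq r_{i-1}/n^{1/k}$'' reduces to a condition on $R_i(v)$ occupying a $2^{-N} = n^{-1/k}$ fraction of $\{0,1\}^N$, so Fact~\ref{fc:bounded}(I) applied to the indicator sum supplies the desired concentration.

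The main obstacle is reconciling the discrete, block-by-block revelation with the continuous deferred-revelation analysis of the original proof: conditioned on $h_1, \ldots, h_{i-1}$, one must verify that the conditional distribution of the ``good'' $R_i$-values still carries probability $\Omega(n^{-1/k})$ relative to $\{0,1\}^N$, and that the candidate-set lower bound $q$ persists under this conditioning. I plan to handle this by maintaining an inductive invariant that ties the first $i-1$ blocks of $r_{i-1}$ to the prefix structure of enough candidates in $c(\partial A)\cap c(\partial C_{i-1})$ so that lowering $R_i(v)$ by the required amount yields a genuine rank improvement, and by observing that the $q$-size lower bound on the candidate set is essentially a structural (marking-based) property, whose high-probability guarantee under bounded independence was already established by the hitting-set analysis of Claim~\ref{cl:bounded}. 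Finally, a union bound across the $T = O(k) = O(\log n)$ inductive steps and across the $\widetilde{O}(n)$ pairs of neighboring Voronoi cells in $G_{\Vor}$ then yields the claimed $O(k)$ stretch of $H_{\Vor}$ with respect to $G_{\Vor}$ with high probability, matching the conclusion of Lemma~\ref{lem:stretchk}.
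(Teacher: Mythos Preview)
Your plan is correct and is essentially the paper's own proof: condition on $h_1,\ldots,h_{i-1}$, reveal block $R_i$ at step $i$, and apply Fact~\ref{fc:bounded}(I) to the $\Theta(\log n)$-wise independent indicators $\mathbf{1}[R_i(v)=\bar 0]$ over the $q$ candidates. The paper makes your ``prefix structure'' invariant crisp: at the start of step $i$, the current cell $c(\Vor_i)$ has its first $i$ rank-blocks equal to $\bar 0$ (so it is unrevealed), while every revealed center already has some non-zero block among its first $i$; hence the event you want, ``$r(v)\le r_{i-1}/n^{1/k}$'', becomes exactly ``$R_i(v)=\bar 0$'' for unrevealed $v$, with no further reconciliation needed between the discrete blocks and the continuous phrasing.

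One point to sharpen: the ``$q$-size lower bound on the candidate set'' is \emph{not} a high-probability hitting-set event to be established via Claim~\ref{cl:bounded}. It is a deterministic case split. If fewer than $q$ unrevealed centers remain in $c(\partial A)\cap c(\partial C_i)$, then since every revealed center there has strictly larger rank than $c(\Vor_i)$ (by the invariant above), $r(c(\Vor_i))$ is automatically among the $q$ lowest ranks in that set, so rule~(3) already adds the edge and the induction halts. Only in the complementary case---at least $q$ unrevealed candidates---do you invoke Fact~\ref{fc:bounded}(I). The paper then bounds termination globally by tracking the number $U_i$ of unrevealed centers: each step shrinks $U_i$ by a $\Theta(n^{-1/k})$ factor w.h.p.\ (again via Fact~\ref{fc:bounded}(I)), so after $O(k)$ steps $U_i<q$ and the process halts.
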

\begin{proof}
Note that $G_{\Vor}$ is independent of the rank assignment. Consider any pair of adjacent cells $\Vor,\Vor_1$ (i.e., neighbors in $G_{\Vor}$) and let $A \subseteq \Vor, B \subseteq \Vor_1$ be two adjacent clusters of interest in these Voronoi cells. 

At the beginning all vertices are unrevealed and throughout the process some of them will get revealed by exposing \emph{one} $N$-size block $R_j$ of their rank. Let $q=\lceil  c \log n \cdot n^{1/k}\rceil$ for some large enough constant $c$, as used by our spanner construction algorithm. 
In each inductive step $i$, we either halt or we reveal the $i^\textrm{th}$ block $R_i(v)$ in the ranks of at least $q$ oblivious \emph{unrevealed} centers $v$. At that point, we will also reveal the $i^\textrm{th}$ block in the rank of all the centers $w$ with $R_i(w)\neq \bar{0}$ (where $\bar{0} = [0,\ldots, 0]$). 

We now describe this induction process in details. At the beginning of step $i\geq 0$, we look at $c(\Vor_i)$ which by induction assumption satisfies the following.
\begin{compactenum}[(a)]
\item $A$ and $\Vor_i$ are adjacent.
\item The distance between $\Vor_0$ and $\Vor_i$ in $H_{\Vor}$ is at most $2i$.
\item The rank of $c(\Vor_i)$ is the minimum rank among those of all centers in the collection $\{c(\partial A)\cap c(\partial C_j)\}_{j<i}$.
\end{compactenum}
Observe that all vertices whose ranks are revealed are precisely those included in property (c).
In particular, we will show property (c) as a result of two sub-properties: 
\begin{compactenum}[(c1)]
\item The first $i$ blocks in the rank of $c(\Vor_i)$ are all \emph{zeros}.
\item For every center $v$ whose rank is revealed, there is exists $j \leq i$ such that $R_{j}(v) \neq \bar{0}$.
\end{compactenum}

%(c1) the first $i$ blocks in the rank of $c(\Vor_i)$ are all \emph{zeros}, and (c2) for every center $v$ whose rank is revealed, there is exists $j \leq i$ such that $R_{j}(v) \neq \bar{0}$.  
%the first $i-1$ bits in the rank of $c(\Vor_i)$ are all \emph{zeros} and 

For the base case, at the beginning of step $i$, all claims hold. 

Assume that the claims hold up to the beginning of step $i\geq 1$. We will show that either we halt at that step or that all properties hold at the beginning of step $i+1$. By property (c2), each revealed center $v$ at the beginning of step $i$ has at least one non-zero block among the first $i$ blocks of $r(v)$. 
Or, in other words, the first $i$ blocks in the ranks of all the unrevealed vertices at the beginning of step $i$, are \emph{all-zeros}.

We may assume that there is a marked cluster $C_i$ such that $B_i$ (the cluster in $\Vor_i$ such that the edge of minimum $\ID$ in $E(A, \Vor_i)$ is in $E(A, B_i)$) participates in $\mathcal{C}(C_i)$ (as otherwise, we halt). If there are less than $q$ unrevealed centers in $c(\partial A)\cap c(\partial C_i)$, then the process terminates: by property (c), all revealed centers have a strictly larger rank than $c(\Vor_i)$. Otherwise, (i.e., there are at least $q$ unrevealed centers in $c(\partial A)\cap c(\partial C_i)$), we probe the $i^\textrm{th}$ block (using the hash function $h_i$) in the rank of these $q$ unrevealed centers in $c(\partial A)\cap c(\partial C_i)$. We let $\Vor_{i+1}$ be a cell with a center $s_{i+1}=c(\Vor_{i+1})$ satisfying that $s_{i+1} \in c(\partial A)\cap c(\partial C_i)$ and
$R_i(s_{i+1})=\bar{0}$. If there are several such centers that satisfy these two conditions, we pick one arbitrarily. We now claim:
\begin{claim}\label{cl:lognindepranks}
W.h.p., there exists at least one $s_{i+1}\in c(\partial A)\cap c(\partial C_i)$ such that $R_i(s_{i+1})=\bar{0}$.
\end{claim}
\begin{proof}
Let $S'$ a subset of $q$ unrevealed centers in $c(\partial A)\cap c(\partial C_i)$.
For every $s_j \in S'$, let $X_j \in \{0,1\}$ be the event that $R_i(s_j)=\bar{0}$. Since $R_i(s_j)=h_i(\ID(s_j))$, we have that $\mathbb{E}(X_j)=1/2^N$ and $\mathbb{E}(X)=q/2^N=\Theta(\log n)$ where $X=\sum_{j=1}^q X_j$.
Since the $X_j$ variables are $O(\log n)$-independent, using the Chernoff bound from Fact \ref{fc:bounded}(I), we obtain that w.h.p.~$X \geq 1$ and hence there exists $s_j \in S'$ that satisfies the above. The claim follows.
\end{proof}
The proofs of the first two properties remain unchanged.
%We are now ready to show that properties (I-III) hold for the beginning of step $i+1$. Properties (I) holds by the definition of $\Vor_{i+1}$, i.e., the fact that $c(\Vor_{i+1})\in c(\partial A)\cap c(\partial C_i)$. Property (II) holds since $\Vor(C_i)$ is a marked Voronoi cell adjacent to $\Vor_{i}$, by Desc. \ref{desc:whenyes}(i), $\Vor(C_i)$ and $\Vor_i$ are connected in $H_{\Vor}$. In addition, by the definition of $\Vor_{i+1}$, it also holds that $\Vor(C_i)$ and $\Vor_{i+1}$ are connected in $H_{\Vor}$. Hence there are is a path from $\Vor_1$ to $\Vor_{i+1}$ that goes through $\Vor(C_i)$, overall the distance between $\Vor_1$ to $\Vor_{i+1}$ is increased by $2$.
Property (3a) holds by induction and by the selection of $\Vor_{i+1}$. In particular, by induction, all the first $i$ blocks of the rank $r(s_{i+1})$ are all zeros (as $s_{i+1}$ is unrevealed at the beginning of step $i$) and we select $s_{i+1}$ since $R_{i}(s_{i+1})=\bar{0}$. Property (c2) holds by induction and by the fact that the $i^\textrm{th}$-block in the ranks of all those centers that got revealed in step $i$ is \emph{nonzero}. By combining (c1) and (c2), property (c) holds as well since $s_{i+1}$ has the minimum rank among all those that got revealed so far. 

Finally, we claim that w.h.p., the process terminates after $O(k)$ induction steps. We will show that by claiming that in every step $i$, at least a $(1-c'\cdot n^{-1/k})$ fraction of the remaining unrevealed centers are revealed for some constant $c'>0$. Let $U_i$ be the number of unrevealed centers at the beginning of step $i$. Hence, $U_1=n$. 
If we did not halt at step $i$, it means that $U_i\geq q=\Omega(\log n \cdot n^{1/k})$. 
We now bound the number $UZ_i$ of unrevealed centers at the beginning of step $i$ whose $i^\textrm{th}$ block is all-zero. The probability of having an all-zero block for a single center is $1/2^N$ and hence in expectation there are $U_i/2^N$ such centers. Since $U_i\geq q$, and since the ranks are $O(\log n)$-wise independent, using Chernoff bound of Fact \ref{fc:bounded}(I), with get that w.h.p. $UZ_i \in [c_1 \cdot U_i/2^N, c_2 \cdot U_i/2^N]$ for some constants $0<c_1<c_2$. Hence, w.h.p., $U_{i+1}=U_{i}-UZ_i \geq (1-c'/2^N)U_i$. 
Overall, after $O(k)$ induction steps, there are at most $q$ unrevealed vertices and at that point we halt. The lemma follows.
\end{proof}

\section{Lower Bounds}\label{sec:lowerbound}
\def\q{\mathcal{L}}
In this section, we establish lower bounds for the problem of locally constructing a spanner consisting of an asymptotically sub-linear number of edges from the input graph. Our results largely follows from the analysis of~\cite{kaufman2004tight} on the lower bound construction of~\cite{LRR14}; a compact version of this proof is given here for completion.

For simplicity, we assume that each vertex occupies a unique ID from $\{1, \ldots, n\}$; this assumption may only strengthen our lower bound. We define an \emph{instance} of a $d$-regular graph on $n$ vertices as a perfect matching between cells of a table of size $n \times d$: a matching between the cells $(u, i)$ and $(v, j)$ indicates that $v$ is the $i^\textrm{th}$ neighbor of $u$ and $u$ is the $j^\textrm{th}$ neighbor of $v$. An edge can be then expressed as a quadruple $(u,i,v,j)$; note that the endpoints are always interchangeable. For consistency with this notation, we let the \neighborP~probe with parameter $\langle u,i \rangle$ for $i \leq \deg(u)$ return $(v, j)$ (instead of only $v$) -- this change can only provide more information to the algorithm. We say that an instance $G$ and the edge $(u,i,v,j)$ are \emph{compatible} if $G$ contains $(u,i,v,j)$. Our lower bounds are established for sufficiently large $n \equiv 2 \textrm{ mod } 4$ and odd integer $d$.

\paragraph{The overall argument} First, we construct two distributions $\mathcal{D}_{(x,a,y,b)}^+$ and $\mathcal{D}_{(x,a,y,b)}^-$ over undirected $d$-regular graph instances for $x,y \in V$ and $a, b \in [d]$. Any graph instance $G^+$ in the support of $\mathcal{D}_{(x,a,y,b)}^+$ contains the edge $(x,a,y,b)$ such that with high probability, removing this edge does not disconnect $x$ and $y$. In particular, $\mathcal{D}_{(x,a,y,b)}^+$ is the uniform distribution over all instances compatible with $(x,a,y,b)$. On the other hand, any graph instance $G^-$ in the support of $\mathcal{D}_{(x,a,y,b)}^-$ contains the edge $(x,a,y,b)$ such that removing this edge disconnects $x$ and $y$ (leaving them in separate connected components). 

We show that when given the query $(x,a,y,b)$, any \emph{deterministic} LCA $\mathcal{A}$ that only makes $o(\min\{\sqrt{n},{n\over d}\})$ probes can only distinguish whether the underlying graph is a graph randomly drawn from $\mathcal{D}_{(x,a,y,b)}^+$ or $\mathcal{D}_{(x,a,y,b)}^-$ with probability $o(1)$. We prove this claim by defining two processes $\mathcal{P}_{(x,a,y,b)}^+$ and $\mathcal{P}_{(x,a,y,b)}^-$ which interact with $\mathcal{A}$ and generate a random subgraph from $\mathcal{D}_{(x,a,y,b)}^+$ and $\mathcal{D}_{(x,a,y,b)}^-$ respectively. We then argue that for each probe the answers that these two processes return are nearly identically distributed, and so are their \emph{probe-answer histories}.

Aiming for an overall success probability of $2/3$, $\mathcal{A}$ must keep the edge $(x,a,y,b)$ in its spanner with probability $\frac{2}{3}(1-o(1)) > 1/2$. Since an instance in $\mathcal{D}_{(x,a,y,b)}^+$ is chosen uniformly at random, then for more than half of the instances in the support of $\mathcal{D}_{(x,a,y,b)}^+$, which are exactly the instances compatible with $(x,a,y,b)$, $\mathcal{A}$ returns \YES~on query $(x,a,y,b)$. Applying this argument for all possible edges (quadruples $(x,a,y,b)$), we obtain that $\mathcal{A}$ returns \YES~on at least half of all compatible instance-query pairs. Consequently, over the uniform distribution over all instances, in expectation any deterministic algorithm $\mathcal{A}$ must return \YES~on more than $m/2$ edges. Employing Yao's principle, we conclude that any (randomized) LCA cannot compute a spanning subgraph with $o(m)$ edges using $o(\min\{\sqrt{n},n/d\})$ probes. 

\subsection{Analysis of the probe-answer histories}
%\vspace{-.1in}
Similarly to the work of \cite{LRR14}. we construct our distributions as follow.
\begin{itemize}
\item \textbf{Distribution $\mathcal{D}_{(x,a,y,b)}^+$.} $\mathcal{D}_{(x,a,y,b)}^+$ is a uniform distribution over all $d$-regular graph instances, conditioned that $(x,a,y,b)$ is in the instance. More precisely, the edges of $G$ in the family is determined by the following process. Consider a two-dimensional table of size $n\times d$ which is called {\em matching table} and is denoted by $M$. Any perfect matching between cells in this table corresponds to a graph in $\mathcal{D}_{(x,a,y,b)}^+$. Note that the generated graphs are not necessarily simple.
 
\item \textbf{Distribution $\mathcal{D}_{(x,a,y,b)}^-$.} Let $V = V_0 \uplus V_1$ be a {\em random} partition of the vertex set into two equal sets such that $x \in S$ and $y \in T$. Now consider two matching tables of each of size ${n/2} \times d$ denoted by $M_1$ and $M_2$. For a graph $G$ in this family, besides the edge $(x,a,y,b)$, the rest of edges are determined by choosing a random perfect matching \emph{within} each of $M_1$ and $M_2$ (over the remaining cells). Thus, $(x,a,y,b)$ is the only edge connecting between $M_1$ and $M_2$.
\end{itemize}
For brevity we drop the subscript $(x,a,y,b)$ for now as it is clear from the context. For sufficiently large values of $d = \Omega(1)$, w.h.p, each instance $G$ from $\mathcal{D}^+$ is connected even when $(x,y)$ is removed (see e.g., \cite{bollobas2001random}). On the other hand, removing $(x,y)$ from any $G^-\in \mathcal{D}^-$ clearly disconnects $x$ and $y$. Thus, unless a \emph{deterministic} algorithm $\mathcal{A}$ can determine whether it is given $(x,a,y,b)$ of an instance from $\mathcal{D}^+$ or $\mathcal{D}^-$, it must return \YES~on $(x,a,y,b)$ for a $(2/3)$-fraction of these instances. For simplicity we assume that $\mathcal{A}$ has a knowledge of the construction (including the degree $d$), and never makes a probe that does not reveal any new information. 

Let $\q$ denote the number of probes made by the algorithm, and $Q$ denote the set of probes performed by $\mathcal{A}$. Observe that $\mathcal{A}$ is a deterministic mapping from the probe-answer histories $\langle (q_1,a_1), \cdots, (q_t,a_t) \rangle \mapsto q_{t+1}$ for $t<\q$ and to $\{\YES,\NO\}$ for $t=\q$. Each probe $q_i$ is either a \neighborP~probe or an \adjacencyP~probe.

Next, similarly to~\cite{kaufman2004tight}, we define two processes $\mathcal{P}^+$ and $\mathcal{P}^-$ which interact with an arbitrary algorithm $\mathcal{A}$ and respectively construct a random graph from $\mathcal{D}^+$ and $\mathcal{D}^-$. Defining $D_t^+$ and $D_t^-$ to be the distribution of the probe-answer histories of the interaction of $\mathcal{P}^+$ and $\mathcal{P}^-$ respectively with $\mathcal{A}$ after $t$ probes, we show that if $\q = o(\min\{\sqrt{n}, n/d\})$, then the statistical distance of $D_\q^+$ and $D_\q^-$ is $o(1)$. We now give the formal description of $\mathcal{P}^s$ for $s\in\{+,-\}$:
\begin{compactitem}
\item Let $R^s$ be the set of all graphs in the support of $\mathcal{D}^s$. Let $R^s_{(u,v)}$ and $R^s_{(u,i,v,j)}$ be the set of all graphs in the support of $\mathcal{D}^s$ that are compatible $(u,v)$ and $(u,i,v,j)$ respectively. In the former case, we require at least one matching $(u,i',v,j')$ for some $i',j'\in [d]$ between cells in the rows of $u$ and $v$ in the matching table; however, in the latter case, we only allow a fixed matching $(u,i,v,j)$. We also write $R^s_{\overline{(u,v)}}$ to denote the set of all graphs in the support of $\mathcal{D}^s$ that are not compatible with $u,v$.

\item{Starting from $R_0^s = R^s_{(x,a,y,b)}$, for any $t>0$, $R_t^s$ denotes the set of all graphs in the support of $\mathcal{D}^s$ that are compatible with the first $t$ probes and answers}.
	\begin{compactitem}
		\item{ If $q_t$ is an \adjacencyP~probe of the form $\langle u_t,v_t \rangle$:} We choose whether to add an edge between $u$ and $v$ with probability ${|R^s_{(u_t,v_t)}\cap R^s_{t-1}| / |R^s_{t-1}|}$. If so, we match a pair of cells between the rows of $u$ and $v$: sample $(i_t, i_t) = (i,j)$ with probability ${|R^s_{(u_t,i_t,v_t,j_t)}\cap R^s_{t-1}| / |R^s_{t-1}|}$ set $R^s_t = R^s_{(u_t,i_t,v_t,j_t)}\cap R^s_{t-1}$, and answer $a_t = i_t$. Otherwise, we simply set $R^s_t = R^s_{\overline{(u_t,v_t)}}\cap R^s_{t-1}$ and answer $a_t = \bot$.
		\item{ If $q_t$ is a \neighborP~probe of the form $\langle u_t, i_t \rangle$: For each $v\in V$ and $j_t\in\{1,\ldots,d\}$, we choose a cell to match with $(u_t,i_t)$: sample the answer $a_t = (v_t,j_t)$ with probability ${|R^s_{(u_t,i_t,v_t,j_t)}\cap R^s_{t-1}| / |R^s_{t-1}|}$ and set $R^s_t = R^s_{(u_t,i_t,v_t,j_t)}\cap R^s_{t-1}$.}
	\end{compactitem}
\item{After $\q$ probes, return a random graph uniformly sampled from $R^s_{\q}$.}
\end{compactitem} 

\begin{lemma}[{\bf Lemma 10 in~\cite{kaufman2004tight}}]
For any deterministic algorithm $\mathcal{A}$, the process $\mathcal{P}^s$ {\em ($s\in\{+,-\}$)} when interacting with $\mathcal{A}$, uniformly generates a graph from the support of $\mathcal{D}^{s}_{(x,a,y,b)}$.
\end{lemma}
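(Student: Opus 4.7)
The plan is to show that every graph $G^*$ in the support $R^s_0$ is produced by $\mathcal{P}^s$ with probability exactly $1/|R^s_0|$, via a telescoping argument that hinges on the determinism of $\mathcal{A}$. The first step is the following observation: since $\mathcal{A}$ is deterministic, the probe $q_1$ is fixed, and for any fixed graph $G^*$ the answer $a_1^* = a_1(G^*)$ that $\mathcal{A}$ would receive on $G^*$ is uniquely determined; inductively, $q_{t+1}$ is a deterministic function of $((q_1,a_1^*),\ldots,(q_t,a_t^*))$, and $a_{t+1}^* = a_{t+1}(G^*)$ is again uniquely determined. Hence there is a unique ``$G^*$-history'' $H^*(G^*) = ((q_1,a_1^*),\ldots,(q_\q,a_\q^*))$, and moreover the map $G^* \mapsto H^*(G^*)$ partitions $R^s_0$ according to the terminal set $R^s_\q$ (two graphs yield the same history iff they lie in the same terminal $R^s_\q$).

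Next I would compute $\Pr_{\mathcal{P}^s}[H^*(G^*)\text{ is generated}]$. By the definition of the process, after observing history $((q_1,a_1^*),\ldots,(q_{t-1},a_{t-1}^*))$ the conditional probability that the process picks answer $a_t^*$ equals $|R^s_{t-1} \cap R^s_{\bullet}(a_t^*)|/|R^s_{t-1}| = |R^s_t|/|R^s_{t-1}|$, where $R^s_{\bullet}(a_t^*)$ denotes the set of graphs compatible with the answer $a_t^*$ to the probe $q_t$ (this is well-defined since $G^* \in R^s_{t-1}$, so $R^s_t \ni G^*$ is nonempty). Multiplying over $t = 1,\ldots,\q$, the product telescopes:
\[
\Pr[H^*(G^*)] \;=\; \prod_{t=1}^{\q} \frac{|R^s_t|}{|R^s_{t-1}|} \;=\; \frac{|R^s_\q(G^*)|}{|R^s_0|},
\]
where $R^s_\q(G^*)$ denotes the terminal set associated with the history $H^*(G^*)$.

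Conditional on the history being $H^*(G^*)$, the final step of $\mathcal{P}^s$ outputs $G^*$ with probability $1/|R^s_\q(G^*)|$, since it samples uniformly from $R^s_\q(G^*)$. Multiplying yields $\Pr_{\mathcal{P}^s}[G^*] = 1/|R^s_0|$, independent of $G^*$, which is exactly the uniform distribution on $R^s_0 = \mathrm{supp}(\mathcal{D}^s_{(x,a,y,b)})$. (For $s = +$ this is by definition, and for $s = -$ one observes that $\mathcal{D}^-_{(x,a,y,b)}$ is likewise uniform on its support once one unfolds the ``random partition $+$ random matching within each half'' construction.) The only delicate point to verify carefully is that the conditional probabilities used by $\mathcal{P}^s$ at step $t$ agree with genuine conditional probabilities on $R^s_{t-1}$ under the uniform distribution -- but this is immediate from the definition, since the process literally computes these ratios. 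There is no real obstacle: the argument is the standard ``sampling by successive conditionals'' identity, and the main thing to be careful about is simply the bookkeeping that $\mathcal{A}$'s determinism makes the history $H^*(G^*)$ well-defined and unique for each $G^*$.
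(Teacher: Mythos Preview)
Your argument is correct and is the standard ``deferred sampling'' proof: you use the determinism of $\mathcal{A}$ to pin down a unique probe--answer history for each $G^*$, telescope the conditional probabilities to get $|R^s_\q|/|R^s_0|$, and finish with the uniform draw from $R^s_\q$. Note that the paper does not actually supply a proof of this lemma; it is imported verbatim as Lemma~10 of \cite{kaufman2004tight}, so there is no ``paper's own proof'' to compare against here --- your write-up is exactly the argument one would expect, including the side remark that $\mathcal{D}^-_{(x,a,y,b)}$ is uniform on its support (equal-size halves, equal-size matching tables).
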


Next, we show that the probability that $\mathcal{A}$ can detect an edge with \adjacencyP~probe (asking probe $q=(u,v)$ for which the answer is positive; an edge exists between $u$ and $v$) after performing only $o(n/d)$ is small. We can define $R^s_t$, the set of all graphs in $\mathcal{D}^s$ as $R^s_{B, \overline{D}}$ where $B$ is the set of edges that the graphs in $R^{s}_t$ must contain (namely, all pairs of cells $(u,i,v,j)$ created in some previous probes) and $D$ is the set of edges that the graphs in $R^{s}_t$ must not contain (namely, all pairs $(u,v)$ disallowed by \adjacencyP~probes with negative answer).%Note that the edges in $B$ have the port assignments while the edges in $D$ are only determined by their endpoints and without any port assignment.

Assuming that the algorithm makes $\q=o(n/d)$ probes, we establish the following lemmas that will be useful in bounding the difference between the distributions of probe-answer histories generated by the two processes. In particular, assume the number of conditions $|B|,|D|= o(n/d)$, and the initial conditions $(x,a,y,b) \in {B}$ and $(x,y) \notin D$, in the following three lemmas.
\begin{lemma}\label{lem:edge-presence-density}
For every $(u,i,v,j) \neq (x,a,y,b)$,
%\begin{align*}
${|R^s _{(u,i,v,j)} \cap R^s_{B, \overline{D}}| \over |R^s_{B, \overline{D}}|} = O({1\over nd})$.
%\end{align*}
\end{lemma}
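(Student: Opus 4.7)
The plan is to recast $R^s_{B,\overline{D}}$ and $R^s_{(u,i,v,j)} \cap R^s_{B,\overline{D}}$ as sets of perfect matchings on the cells of the matching table(s), subject to the constraint that every pair in $B$ is matched and no cell-pair witnessing a disallowed vertex pair in $D$ is matched. For $s=+$ this is a single perfect matching on the $nd$ cells; for $s=-$ it is a choice of a balanced partition $V = V_0 \uplus V_1$ (with $x \in V_0$, $y \in V_1$) together with a perfect matching inside each table on $(n/2)d-1$ cells, after the forced cross-cells $(x,a),(y,b)$ are removed. If the query edge $e := (u,i,v,j)$ is inconsistent with the constraints — namely, some endpoint cell is already consumed by $B$, or $(u,v) \in D$, or (in the $s=-$ case) $e$ would be a second cross-edge — then the numerator is zero and we are done. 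Otherwise $e$ is a legal edge among the $T = nd - 2|B| = (1-o(1))\,nd$ free cells.

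The main step is a switching argument. Given any matching $M$ in the numerator and any other free edge $e' = (u',i',v',j') \in M \setminus B \setminus \{e\}$, define $\phi(M, e') := M'$ where $M'$ is obtained from $M$ by replacing $\{e, e'\}$ with $\{(u,i,u',i'),\,(v,j,v',j')\}$. I will call $e'$ \emph{good} if $M' \in R^s_{B,\overline{D}}$, which amounts to $(u,u') \notin D$, $(v,v') \notin D$, and (only for $s=-$) that $u'$ and $v'$ lie in the same partition class as $u$ and $v$. The first two conditions exclude at most $2d|D| = o(n)$ candidates for $e'$, while the partition restriction is vacuous for $s=+$ and automatically satisfied for $s=-$ once $e'$ is drawn from the same table as $e$, which still contains $\Theta(nd)$ free edges because $|B| = o(n/d)$. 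Hence every $M$ in the numerator admits at least $\Omega(nd)$ good partners.

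Next I will verify that $\phi$ is injective: from $M'$, the cells $(u,i)$ and $(v,j)$ are matched in $M'$ to unique cells $(u',i')$ and $(v',j')$, and these, together with $M'$, recover the pair $(M, e')$ via the inverse swap. Since $\mathrm{image}(\phi) \subseteq \{M' \in R^s_{B,\overline{D}} : e \notin M'\} \subseteq R^s_{B,\overline{D}}$, double counting gives
\[
\Omega(nd) \cdot \bigl|R^s_{(u,i,v,j)} \cap R^s_{B,\overline{D}}\bigr| \;\le\; |\mathrm{domain}(\phi)| = |\mathrm{image}(\phi)| \;\le\; |R^s_{B,\overline{D}}|,
\]
yielding the claimed $O(1/(nd))$ bound.

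The principal obstacle I anticipate is handling the $s=-$ case cleanly, because the random partition $V_0 \uplus V_1$ is part of the sample space and the forced cross-edge $(x,a,y,b)$ must be preserved by the switch. The resolution is that $\phi$ keeps both $e$ and $e'$ inside the same table, so it preserves both the partition and the cross-edge; hence the inequality above holds within each fixed partition, and summing over partitions preserves the ratio.
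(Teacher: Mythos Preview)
Your switching argument is correct and complete. The injection $\phi$ is well-defined and injective (the partner cells of $(u,i)$ and $(v,j)$ in $M'$ uniquely determine $e'$ and hence $M$), the count of bad partners $2d|D|=o(n)$ is right because $|D|=o(n/d)$, and your treatment of $s=-$ (fixing the partition, swapping within the table that contains $e$, then summing) is sound since the implied constant in $\Omega(nd)$ is uniform over partitions.

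The paper takes a different route. For $s=-$ it factors the target ratio as
\[
\frac{|R^-_{(u,i,v,j)}\cap R^-_B|}{|R^-_B|}\cdot\frac{|R^-_{(u,i,v,j)}\cap R^-_{B,\overline{D}}|}{|R^-_{(u,i,v,j)}\cap R^-_B|}\cdot\frac{|R^-_B|}{|R^-_{B,\overline{D}}|},
\]
bounds the first factor by $2/(nd)$ via a direct count on the free cells (Claim~\ref{cl:edge-port-presence}), the second trivially by $1$, and the third by $O(1)$ via a union bound over the $|D|$ forbidden pairs (Claim~\ref{cl:ratio}); for $s=+$ it defers to \cite{kaufman2004tight}. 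So the paper first \emph{drops} the $D$-constraints, counts exactly in the resulting uniform matching model, and then pays a constant-factor penalty to reinstate $\overline{D}$. Your approach instead handles $B$ and $\overline{D}$ simultaneously in a single combinatorial injection, and treats $s=+$ and $s=-$ uniformly without appealing to \cite{kaufman2004tight}. The paper's decomposition is more modular---Claims~\ref{cl:edge-port-presence} and~\ref{cl:ratio} are reusable building blocks---while your argument is more self-contained and arguably more robust, since it never needs the intermediate set $R^-_B$ to be nonempty or well-behaved.
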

\begin{proof}
For process $\mathcal{P}^+$, the proof is the same as the proof of the similar statement in Lemma 11 of~\cite{kaufman2004tight}. 
Here, we show that the argument holds for $\mathcal{P}^-$.
\begin{align*}
{|R^- _{(u,i,v,j)} \cap R^-_{B, \overline{D}}| \over |R^-_{B,\overline{D}}|} &= {|R^-_{(u,i,v,j)} \cap R^-_B|\over |R^-_B|} \cdot {|R^-_{(u,i,v,j)} \cap R^-_{B,\overline{D}}|\over |R^-_{(u,i,v,j)} \cap R^-_B|} \cdot {|R^-_B| \over |R^-_{B,\overline{D}}|}\\
&\leq {1 \over \Omega(nd)} \cdot 1 \cdot O(1) = O\left({1\over nd}\right), 
\end{align*}
where the bounds on ${|R^-_{(u,i,v,j)} \cap R^-_B|\over |R^-_B|}$ and ${|R^-_B| \over |R^-_{B,\overline{D}}|}$ are shown in Claim~\ref{cl:edge-port-presence} and~\ref{cl:ratio}.
\end{proof}

\begin{claim}\label{cl:edge-port-presence}
For every $(u,i,v,j) \neq (x,a,y,b)$,
${|R^-_{(u,i,v,j)} \cap R^-_B| \over |R^-_B|} \leq {2\over nd}$.
\end{claim}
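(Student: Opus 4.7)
The claim asserts a bound on the probability that a uniformly random graph from $R^-_B$ contains the specific matching pair $(u,i,v,j)$, so my plan is to compute this probability by factoring the uniform distribution on $R^-_B$ into its two natural components. First, I would decompose the sampling as follows: (i) draw a partition $V=V_0\uplus V_1$ with $x\in V_0$, $y\in V_1$ uniformly from those that are consistent with $B$'s same-side requirements (every non-cross edge in $B$ forces its endpoints onto a common side), and (ii) independently draw uniformly random perfect matchings on the $Nd = nd/2$ cells of each side, subject to the matching constraints inherited from $B$ (and to the fixed cross-edge $(x,a,y,b)$). The ratio in the claim then equals $\Pr_{G\sim\mathrm{Unif}(R^-_B)}[(u,i,v,j)\in G]$.

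Next, I would use the key structural fact that, because $(u,i,v,j)\neq(x,a,y,b)$ is \emph{not} the unique cross-edge of any graph in $\mathcal{D}^-$, the event $\{(u,i,v,j)\in G\}$ forces $u$ and $v$ to lie on the same side of $\pi$. I would bound $\Pr[u,v\text{ same side}]\le 1/2$ by a direct count: when $\{u,v\}=\{x,y\}$ the partition forces opposite sides so the probability is $0$; when $u=x$, $v\notin\{x,y\}$ (or a symmetric case) we get $(N-1)/(n-2)=1/2$ exactly; and when $u,v\notin\{x,y\}$, combining the two symmetric sub-cases gives $(n-4)/(2(n-3))<1/2$.

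Conditional on $\pi$ placing both $u$ and $v$ on a common side $V_s$, the matching on $V_s$ is uniformly random on the free cells subject to $B$'s matching constraints on that side. If either cell $(u,i)$ or $(v,j)$ is pre-matched by $B$ to anything other than its counterpart, the conditional probability is $0$; otherwise, letting $b_s\le |B|$ denote the number of $B$-matchings restricted to $V_s$, the $2b_s$ pre-matched cells are removed, leaving a uniformly random matching on the remaining $Nd-2b_s$ cells that contains both $(u,i)$ and $(v,j)$. The probability it pairs them together is exactly $1/(Nd-2b_s-1)\le 1/(Nd-2b-1)$, using the uniform bound $b_s\le b=|B|=o(n/d)$.

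Combining the two pieces yields
\[
\frac{|R^-_{(u,i,v,j)}\cap R^-_B|}{|R^-_B|}\;\le\;\frac{1}{2}\cdot\frac{1}{Nd-2b-1}\;=\;\frac{1}{nd-4b-2},
\]
which for sufficiently large $n$ with $b=o(n/d)$ is bounded by $2/(nd)$ since $4b+2\le nd/2$. The main subtlety I anticipate is checking that the factorization ``partition then matchings'' really yields independent uniform matchings on each side after conditioning on $B$ (one must verify $R^-_B$ decomposes as a product of partition-compatible and matching-compatible counts), and that the partition bias introduced by $B$'s same-side requirements does not break the $\Pr[u,v\text{ same side}]\le 1/2$ estimate; this case analysis is the chief bookkeeping obstacle but is otherwise routine.
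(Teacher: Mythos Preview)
Your overall route (factor into partition then matching, then bound the matching probability on the relevant side) is the right one, but the attempt to squeeze out an extra factor of $1/2$ via $\Pr[u,v\text{ same side}]\le 1/2$ does not go through, and the ``routine'' bookkeeping you flag is in fact where the argument breaks.

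There are two concrete issues. First, the conditional distribution on partitions given $G\in R^-_B$ is \emph{not} uniform over $B$-consistent partitions: for a partition $\pi$, the number of compatible instances is the product of the intra-side matching counts, which depends on how the $B$-edges split between the two sides. So step~(i) of your decomposition is already off. Second, and more directly, $\Pr[u,v\text{ same side}]\le 1/2$ can simply fail after conditioning on $B$: if $B$ already contains some edge $(u,i',v,j')$ with $(i',j')\neq(i,j)$, then every instance in $R^-_B$ places $u$ and $v$ on the same side, so this probability is $1$, not $\le 1/2$. Your unconditioned count (the $(n-4)/(2(n-3))$ calculation) is therefore not relevant.

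The fix, which is exactly what the paper does, is to drop the $1/2$ entirely and instead bound the ratio partition-by-partition. For any fixed partition consistent with $B$ that puts $u,v$ on the same side~$s$, the matching on side~$s$ is uniform over the free cells, and the conditional probability that $(u,i)$ is matched to $(v,j)$ is exactly $1/(w-1)$, where $w$ is the number of free cells on side~$s$; partitions with $u,v$ on opposite sides contribute $0$. Since $w\ge nd/2 - 2|B| - O(1)$ uniformly over partitions and $|B|=o(n/d)$, one gets $1/(w-1)\le 2/(nd)$ for sufficiently large $n,d$. Averaging over partitions (with whatever weights they carry) preserves this uniform bound. In short: your step~(2) is both unnecessary and unjustifiable; once removed, your argument collapses to the paper's.
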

\begin{proof}
If $u$ and $v$ belong to different partitions or at least one of them is already matched in $B$, then $R^-_{(u,i,v,j)}=\emptyset$. Otherwise, let $w$ denote the lower bound on the number of unmatched cells in the matching table containing rows of $u$ and $v$ in any instance of $R^-_B$. Recall that the number of matched cells is bounded by $o(nd)$, so $w \geq nd-2|B|-1\geq (1-o(1))\cdot nd$.  The probability that cells $(u,i)$ is matched to $(v,j)$ is given by
\begin{align*}
{|R^-_{(u,i,v,j)} \cap R^-_B| \over |R^-_B|} = {1 \over w-1} \leq {2\over nd}
\end{align*}
for sufficiently large $n$ and $d$.
\end{proof}

%\begin{claim}\label{cl:edge-presence}
%For every $u,v$ where $\{u,v\} \neq \{x, y\}$,
%%\begin{align*}
%${|R^-_{(u,v)} \cap R^-_B| \over |R^-_B|} = O({d\over n})$.
%%\end{align*}
%\end{claim}
%\begin{proof}
%For $\{u,v\} \neq \{x, y\}$, we have
%\begin{align*}
%{|R^-_{(u,v)} \cap R^-_B| \over |R^-_B|} = {|\big(\bigcup_{i,j\in[d]} R^-_{(u,i,v,j)}\big) \cap R^-_B|\over |R^-_B|} &\leq {\sum_{i,j\in[d]} |R^-_{(u,i,v,j)} \cap R^-_B|\over |R^-_B|}\leq d^2 \times {c\over nd} = O(d/n).
%\end{align*}
%\end{proof}

\begin{claim}\label{cl:ratio}
%\begin{align*}
${|R^-_B| \over |R^-_{B,\overline{D}}|} =O(1)$.
%\end{align*}
\end{claim}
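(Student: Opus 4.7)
The plan is to show that a uniformly random graph from $R^-_B$ avoids every forbidden edge in $D$ with probability $1-o(1)$, which immediately implies $|R^-_{B,\overline{D}}|/|R^-_B| = 1-o(1)$ and hence the claimed $O(1)$ bound on the reciprocal.

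First, I would fix an arbitrary forbidden pair $(u,v) \in D$ and bound the probability that the edge $(u,v)$ is present in a uniformly random $G \sim R^-_B$. The event ``$(u,v)$ is an edge of $G$'' decomposes as the union, over all port pairs $(i,j) \in [d] \times [d]$, of the events that the port-level edge $(u,i,v,j)$ is matched in $G$. Applying Claim~\ref{cl:edge-port-presence} to each of the $d^2$ port pairs and taking a union bound gives
\[
\Pr_{G \sim R^-_B}[(u,v) \in E(G)] \;\leq\; d^2 \cdot \frac{2}{nd} \;=\; \frac{2d}{n}.
\]

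Next, I would take a union bound over all $(u,v) \in D$. Using the hypothesis $|D| = o(n/d)$, I get
\[
\Pr_{G \sim R^-_B}[\,\exists (u,v) \in D : (u,v) \in E(G)\,] \;\leq\; |D| \cdot \frac{2d}{n} \;=\; o(1).
\]
The complementary event is precisely ``$G$ avoids every edge in $D$,'' i.e., $G \in R^-_{B,\overline{D}}$, so $|R^-_{B,\overline{D}}|/|R^-_B| \geq 1 - o(1)$, yielding $|R^-_B|/|R^-_{B,\overline{D}}| \leq 1/(1-o(1)) = O(1)$ as required.

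No step here looks hard: the only thing to be careful about is that Claim~\ref{cl:edge-port-presence} was stated for port pairs $(u,i,v,j) \neq (x,a,y,b)$, but since the forced edge $(x,a,y,b) \in B$ is always present in $G$ and $(x,y) \notin D$ by hypothesis, the port pair $(x,a,y,b)$ never arises as a witness for any $(u,v) \in D$, so the application of Claim~\ref{cl:edge-port-presence} is legitimate throughout.
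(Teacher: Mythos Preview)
Your proof is correct and follows essentially the same approach as the paper's: bound the probability that a single forbidden pair $(u,v)\in D$ appears via a union over the $d^2$ port pairs and Claim~\ref{cl:edge-port-presence}, then union-bound over $|D|=o(n/d)$ forbidden pairs to get $|R^-_{B,\overline{D}}|/|R^-_B|\geq 1-o(1)$. Your remark that $(x,y)\notin D$ ensures the excluded case of Claim~\ref{cl:edge-port-presence} never arises is exactly the caveat the paper also records.
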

\begin{proof}
As we consider $(u,v) \neq (x, y)$, we have
\begin{align*}
{|R^-_{(u,v)} \cap R^-_B| \over |R^-_B|} = {|\big(\bigcup_{i,j\in[d]} R^-_{(u,i,v,j)}\big) \cap R^-_B|\over |R^-_B|} &\leq {\sum_{i,j\in[d]} |R^-_{(u,i,v,j)} \cap R^-_B|\over |R^-_B|}\leq d^2 \cdot {2\over nd} = O\left({d\over n}\right).
\end{align*}
Then by the union bound,
\begin{align*}
{|R^-_{B, \overline{D}}| \over |R^-_B|} = {|R^-_{B} \cap \big(\bigcap_{r\leq |D|} R^-_{\overline{e_r}}\big)| \over |R^-_{B,\overline{D}}|} = {|R^-_{B} \setminus \big(\bigcup_{r\leq |D|} R^-_{{e_r}}\big)| \over |R^-_{B,\overline{D}}|} &\geq 1 - \sum_{r\in[|D|]} {|R^-_{e_r} \cap R^-_B| \over |R^-_B|}\\ 
&= 1- o\left({n\over d}\right) \cdot O\left({d\over n}\right) = 1- o(1).
\end{align*}
Hence, ${|R^-_B| \over |R^-_{B,\overline{D}}|} = O(1)$.
\end{proof}

Recall again the assumption that $\mathcal{A}$ does not make probes that do not reveal any new information about the instance. Now, we are ready to formally prove the following claim on the \adjacencyP~probes, that with $\q = o(n/d)$ probes, $\mathcal{A}$ is unlikely to obtain any positive answer.
\begin{lemma}\label{lem:no-edge}
Let $\mathcal{A}$ be an arbitrary deterministic algorithm interacting with process $\mathcal{P}^{s}$ {\em ($s\in \{+,-\}$)} and that has probed $o(n/d)$ times. The probability that $\mathcal{A}$ detects an edge with an \adjacencyP~probe {\em of the form $\langle u_t,v_t\rangle$} during the interaction is $o(1)$.    
\end{lemma}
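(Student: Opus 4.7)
The plan is to apply Lemma~\ref{lem:edge-presence-density} one probe at a time and then take a union bound over all $\q = o(n/d)$ probes. First I would observe that after $t-1 = o(n/d)$ probes, the conditioning set $R^s_{t-1}$ can be written as $R^s_{B_{t-1}, \overline{D_{t-1}}}$, where $B_{t-1}$ consists of the edges (cell matchings $(u,i,v,j)$) revealed by prior \neighborP~probes or positively-answered \adjacencyP~probes, together with the planted $(x,a,y,b)$, and $D_{t-1}$ consists of the vertex pairs revealed by prior negatively-answered \adjacencyP~probes. Each probe contributes at most one element to $B_{t-1} \cup D_{t-1}$, so $|B_{t-1}|, |D_{t-1}| = o(n/d)$, and $(x,a,y,b) \in B_{t-1}$, $(x,y) \notin D_{t-1}$, precisely the hypothesis needed for Lemma~\ref{lem:edge-presence-density}.

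Next, fix the $t$-th probe, which is an \adjacencyP~probe $\langle u_t, v_t\rangle$ (we may assume $(u_t,v_t) \neq (x,y)$ since the algorithm gains nothing from probing a known edge). By the description of $\mathcal{P}^s$, the probability of a positive answer at this step, conditioned on the history so far, equals
\[
\frac{|R^s_{(u_t,v_t)} \cap R^s_{t-1}|}{|R^s_{t-1}|}
= \sum_{i,j \in [d]} \frac{|R^s_{(u_t,i,v_t,j)} \cap R^s_{B_{t-1}, \overline{D_{t-1}}}|}{|R^s_{B_{t-1}, \overline{D_{t-1}}}|},
\]
where the equality comes from partitioning the event ``$u_t$ and $v_t$ are adjacent'' according to which ports are matched. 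Each of the $d^2$ summands is $O(1/(nd))$ by Lemma~\ref{lem:edge-presence-density}, so the positive-answer probability at step $t$ is $O(d/n)$.

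Finally, since the choice of $q_t$ is determined by the probe-answer history, I would apply a union bound over all $\q = o(n/d)$ probes: the probability that any \adjacencyP~probe during the interaction yields a positive answer is at most $\q \cdot O(d/n) = o(n/d)\cdot O(d/n) = o(1)$, as desired. The only delicate point is making sure the hypothesis $|B|,|D| = o(n/d)$ of Lemma~\ref{lem:edge-presence-density} remains valid throughout, which I expect to be the main (but minor) obstacle; this is handled by the observation above that each probe adds at most one constraint and we perform only $o(n/d)$ of them, so the bound is preserved for every intermediate step.
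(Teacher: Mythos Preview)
Your proposal is correct and follows essentially the same approach as the paper: bound the per-step positive-answer probability by $d^2 \cdot O(1/(nd)) = O(d/n)$ via Lemma~\ref{lem:edge-presence-density}, then take a union bound over $o(n/d)$ probes. One tiny nitpick: your displayed ``equality'' should be an inequality, since the events $R^s_{(u_t,i,v_t,j)}$ over different $(i,j)$ are not disjoint (a multigraph instance can have parallel edges between $u_t$ and $v_t$), but this does not affect the argument.
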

\begin{proof}
Consider an arbitrary step $t$ in the interaction of $\mathcal{A}$ and $\mathcal{P}^s$ in which the algorithm performs an \adjacencyP~probe. Since, $t=o(n/d)$, by the description of $\mathcal{P}^s$ and applying Lemma~\ref{lem:edge-presence-density}, the probability that the answer to $q_t$ is not $\bot$ is bounded by:
\begin{align*}
{|R^-_{(u_t,v_t)} \cap R^-_{t-1}|\over |R^-_{t-1}|} \leq {\sum_{i_t,j_t\in[d]}|R^-_{(u_t,i_t,v_t,j_t)} \cap R^-_{t-1}|\over |R^-_{t-1}|} \leq d^2 \cdot O\left({1\over nd}\right) = O\left({d\over n}\right).
\end{align*}
Since the total number of probes is $o(d/n)$, by the union bound, the probability that $\mathcal{A}$ detects an edge with an \adjacencyP~probe during its interaction with $\mathcal{P}^s$ is $o(1)$.  
\end{proof}

Next, we similarly show that if $\q=o(\sqrt{n})$, $\mathcal{A}$ is likely to obtain a new vertex from every \neighborP~probe it performs.

\begin{lemma}\label{lem:repetitive-vertex}
Let $\mathcal{A}$ be an arbitrary deterministic algorithm interacting with process $\mathcal{P}^{s}$ {\em ($s\in \{+,-\}$)} and that has probed $o(\sqrt{n})$ times. With probability $1-o(1)$, all \neighborP~probes of $\mathcal{A}$ receive distinct vertices in their answers.    
\end{lemma}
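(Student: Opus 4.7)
The plan is to mirror the structure of Lemma~\ref{lem:no-edge}, using Lemma~\ref{lem:edge-presence-density} to control the per-step probability that a \neighborP{} probe returns a vertex that has already appeared in the probe--answer history, then to take a union bound over the $\q = o(\sqrt{n})$ probes.

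First, I would set up notation. Let $V_t$ be the set of vertices that have been revealed to $\mathcal{A}$ by the end of step $t$: this includes the endpoints $x,y$ of the initial edge, every vertex that has been probed, and every vertex that has appeared in a \degreeP/\neighborP/\adjacencyP{} answer. Since each probe adds at most a constant number of vertices to $V_t$ (the queried vertex plus at most one answer vertex), we have $|V_t| = O(t)$.

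Next, I would bound the probability of a ``collision'' at a single step. Suppose step $t$ is a \neighborP{} probe $\langle u_t, i_t \rangle$, and let $B_{t-1}, D_{t-1}$ denote the sets of positive/negative conditions accumulated so far; by the hypothesis $\q = o(\sqrt{n})$, certainly $|B_{t-1}|, |D_{t-1}| = o(n/d)$, so Lemma~\ref{lem:edge-presence-density} applies. For each previously-seen vertex $v \in V_{t-1}$ and each port $j \in [d]$, the probability that the process $\mathcal{P}^s$ answers with the pair $(v,j)$ equals $|R^s_{(u_t,i_t,v,j)} \cap R^s_{t-1}| / |R^s_{t-1}| = O(1/(nd))$. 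Summing over the $d$ ports yields $O(1/n)$ for the probability that the answer vertex equals $v$, and summing over the $|V_{t-1}| = O(t)$ previously-seen vertices yields $O(t/n)$ as the probability that the answer to step $t$ is a vertex already in $V_{t-1}$.

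Finally, I would take a union bound over all $\q$ steps. The probability that some \neighborP{} probe returns a previously-seen vertex is at most
\[
\sum_{t=1}^{\q} O\!\left(\frac{t}{n}\right) = O\!\left(\frac{\q^2}{n}\right) = o(1),
\]
using $\q = o(\sqrt{n})$. Hence, with probability $1-o(1)$, every \neighborP{} probe reveals a fresh vertex. The only subtlety that I expect to warrant careful wording is confirming that Lemma~\ref{lem:edge-presence-density}'s hypothesis $(u,i,v,j) \neq (x,a,y,b)$ is not an issue: if some previously-seen vertex equals $y$ (or $x$) and the currently probed $u_t$ equals $x$ (resp.~$y$), we simply exclude the single forced pair $(x,a,y,b)$ from the sum---its contribution is a single term of probability $O(1/(nd))$, absorbed into the asymptotic bound---so the argument goes through unchanged.
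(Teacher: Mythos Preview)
Your proposal is correct and essentially identical to the paper's proof: both bound the per-step collision probability by summing Lemma~\ref{lem:edge-presence-density} over $|V_{t-1}|=O(t)$ seen vertices and $d$ ports to get $O(t/n)$, then union-bound over the $o(\sqrt{n})$ steps. One small wording fix: the implication ``$\q=o(\sqrt{n})$, certainly $|B_{t-1}|,|D_{t-1}|=o(n/d)$'' is not automatic for large $d$; in the paper this is handled by the standing assumption $\q=o(n/d)$ stated just before Lemma~\ref{lem:edge-presence-density}, so you should invoke that rather than deriving it from $o(\sqrt{n})$.
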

\begin{proof}
Consider step $t$ in the interaction of $\mathcal{A}$ and $\mathcal{P}^s$ and let $V_{t-1}$ denote the set of vertices seen by $\mathcal{A}$ so far (i.e., participate in some $q_{t'}$ or $a_{t'}$ where $t' \leq t-1$); thus $|V_{t-1}| \leq 2t$. In what follows we bound the probability $p_t$ that $a_t$ (the answer to of the form $(u_t, i_t)$) corresponds to a vertex $v$ which belong to $V_{t-1}$.
\begin{align*}
p_t &= {|\bigcup_{v\in V_{t-1}, j\in [d]} (R^s_{(u_t,i_t,v,j)} \cap R^s_{t-1})|\over |R^s_{t-1}|} \\&\leq \sum_{v\in V_{t-1}, j\in [d]} {|(R^s_{(u_t,i_t,v,j)} \cap R^s_{t-1})|\over |R^s_{t-1}|} \leq 2t \cdot d\cdot O\left({1\over nd}\right) = O\left({1\over\sqrt{n}}\right).
\end{align*}
where the last inequality is implied by Lemma~\ref{lem:edge-presence-density}. Hence, if the total number of probes is $o(\sqrt{n})$, with probability $1-o(1)$ the answer to every \adjacencyP~probe introduces a new vertex
\end{proof}
 
Next, we prove the main result of this section. Lets $D^s_t$ denotes the distribution over the probe-answer histories of $t$ rounds of the interaction of $\mathcal{A}$ and $\mathcal{P}^s$.  
\begin{lemma}\label{lem:distance}
For any arbitrary deterministic $\mathcal{A}$ and $\q=o(\min\{\sqrt{n},n/d\})$, the statistical distance between $D^+_\q$ and $D^-_\q$ is $o(1)$. 
\end{lemma}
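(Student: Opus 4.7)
The plan is to bound $\|D^+_\q - D^-_\q\|_{TV}$ via a chain-rule decomposition over the $\q$ probes, exploiting the structural bounds already established. Let $\mathcal{E}^s$ denote the ``good event'' under $\mathcal{P}^s$ that every \adjacencyP~probe so far has returned $\bot$ and every \neighborP~probe has returned a cell whose vertex did not appear in any earlier probe or answer. Lemmas~\ref{lem:no-edge} and~\ref{lem:repetitive-vertex} give $\Pr_{\mathcal{P}^s}[\overline{\mathcal{E}^s}] = o(1)$ for $s \in \{+,-\}$ whenever $\q = o(\min\{\sqrt{n}, n/d\})$.

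Since $\mathcal{A}$ is deterministic, $q_t$ is a function of the prefix $h_{t-1}$, so the standard chain rule for total variation distance yields
$$\|D^+_\q - D^-_\q\|_{TV} \leq \sum_{t=1}^{\q} \mathbb{E}_{h_{t-1} \sim D^+_{t-1}}\left[\left\|\Pr_{\mathcal{P}^+}[a_t \mid h_{t-1}] - \Pr_{\mathcal{P}^-}[a_t \mid h_{t-1}]\right\|_{TV}\right].$$
I would split each expectation into the contribution from histories that have already violated $\mathcal{E}^+$ at some earlier step and those that remain good. The bad-history contribution is absorbed into $\Pr[\overline{\mathcal{E}^+}] + \Pr[\overline{\mathcal{E}^-}] = o(1)$ via the two lemmas above; only the good-history contribution needs careful analysis.

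For a good $h_{t-1}$, the conditional answer distributions under the two processes should match up to lower-order terms. If $q_t$ is an \adjacencyP~probe, both processes assign probability $1 - O(d/n)$ to the answer $\bot$ by the proof of Lemma~\ref{lem:no-edge} (summing Lemma~\ref{lem:edge-presence-density} over the $d^2$ choices of $i,j$), so the per-step TV contribution is at most $O(d/n)$. If $q_t$ is a \neighborP~probe $\langle u,i\rangle$, I would upgrade Lemma~\ref{lem:edge-presence-density} to a two-sided estimate, showing that for every fresh $(v,j)$ the conditional probability equals $(1\pm o(1))/(nd - O(t))$ under \emph{both} $\mathcal{P}^+$ and the marginalized $\mathcal{P}^-$, with discrepancy only on cells whose partition side has been pinned by an already-revealed matching component containing $x$ or $y$. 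Since at most $O(t)$ such ``linked'' cells exist at step $t$, this gives a per-step TV bound of $O(t/n)$ for neighbor probes. Summing across $t=1,\ldots,\q$ then yields
$$\sum_{t=1}^{\q}\left(O(d/n) + O(t/n)\right) = O(\q d/n + \q^2/n),$$
which is $o(1)$ precisely when $\q = o(\min\{\sqrt{n}, n/d\})$, as required.

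The main obstacle I anticipate is the per-step matching of the fresh-cell distributions under $\mathcal{P}^-$: the random partition is not independent across probes, since any newly revealed matching edge pins the partition membership of an entire connected component to one side (the $x$-side if connected to $x$, the $y$-side if connected to $y$, or uniformly random but consistent otherwise). I plan to handle this by grouping fresh cells according to whether their vertex is ``linked'' (partition already forced by the revealed matching) or ``isolated'' (partition still marginally uniform), applying the counting argument from Claim~\ref{cl:edge-port-presence} separately within each group to produce matching upper and lower bounds. The key leverage is that the total number of linked cells is $O(td)$, so the summed discrepancy in the fresh-cell distribution is only $O(td)\cdot O(1/(nd)) = O(t/n)$ at step $t$, which is exactly what makes the $\sum_t O(t/n) = O(\q^2/n)$ telescope down to $o(1)$ under the hypothesis on $\q$.
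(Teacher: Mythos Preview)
Your chain-rule approach is a genuinely different route from the paper's argument. The paper does not attempt any per-step TV estimate: instead it observes that, restricted to the good set $\Pi'$ of histories (all \adjacencyP~probes return $\bot$, all \neighborP~probes return a fresh vertex), the two measures $D^+_\q$ and $D^-_\q$ are \emph{proportional}. This lets the paper write $\sum_{\pi\in\Pi'}|D^+_\q(\pi)-D^-_\q(\pi)| = \big|\sum_{\pi\in\Pi'}(D^+_\q(\pi)-D^-_\q(\pi))\big|$, which equals the corresponding sum over $\Pi\setminus\Pi'$ and is therefore bounded directly by $\Pr^+[\overline{\mathcal{E}^+}]+\Pr^-[\overline{\mathcal{E}^-}]=o(1)$ via Lemmas~\ref{lem:no-edge} and~\ref{lem:repetitive-vertex}. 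No summation over steps, no two-sided estimate on individual cell probabilities. Your approach instead pays for the discrepancy step by step; it is more laborious but also more explicit about where the $\sqrt{n}$ and $n/d$ thresholds come from.

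Two points to tighten. First, your treatment of bad histories is loose: plugging ``TV at most $1$'' into the chain-rule sum yields $\sum_t \Pr_{D^+}[h_{t-1}\notin\Pi'_t]\le \q\cdot\Pr[\overline{\mathcal{E}^+}]$, which is $o(\q)$ rather than $o(1)$. Either argue via a step-by-step coupling (so divergence is charged only at the \emph{first} bad step, giving $\Pr[\overline{\mathcal{E}^+}]+\Pr[\overline{\mathcal{E}^-}]$), or---cleaner still---drop the good/bad split entirely: Lemma~\ref{lem:edge-presence-density} holds for \emph{any} history with $|B|,|D|=o(n/d)$, so your per-step bounds $O(d/n)$ and $O(t/n)$ apply unconditionally. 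Second, the ``linked vs.\ isolated'' bookkeeping you anticipate for $\mathcal{P}^-$ is heavier than necessary. Any permutation of the fresh vertices preserves $R^-_{t-1}$ (it permutes the compatible partitions as well as the matchings), so all fresh cells have \emph{exactly} equal probability under $\mathcal{P}^-$ with no component analysis needed; the entire discrepancy at step $t$ lives on the $O(td)$ non-fresh cells, each contributing $O(1/(nd))$ by Lemma~\ref{lem:edge-presence-density}, giving your $O(t/n)$ directly.
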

\begin{proof}
Let $\Pi$ be the set of all valid probe-answer histories of length $\q$ and let $\Pi' \subset \Pi$ denote the set of all histories in which every \adjacencyP~probe returns $\bot$ and no \neighborP~probe returns an already-discovered vertex.

Observe that conditioned on $\pi \in \Pi'$, the answers to all \adjacencyP~probes by both $\mathcal{P}^-$ and $\mathcal{P}^+$ are $\bot$. Moreover, the answers to each \neighborP~probe by both processes are chosen uniformly at random among all cells from the rows corresponding to the set of all vertices not visited so far, which is the same for the both processes. That is, $D^+_{\q}(\pi)$ and $D^-_\q(\pi)$ are proportional to each other for every probe-answer history $\pi \in \Pi$. Hence the difference between the probe-answer histories for $\pi \in \Pi'$ in both processes are bounded simply by the difference in their total probabilities:
\begin{align*}\sum_{\pi\in \Pi'} |D_\q^+(\pi) - D_\q^-(\pi)| = \left|\sum_{\pi\in \Pi'} D_\q^+(\pi) - \sum_{\pi\in \Pi'} D_\q^-(\pi)\right| = \left|\sum_{\pi\in \Pi\setminus\Pi'} D_\q^+(\pi) - \sum_{\pi\in \Pi\setminus\Pi'} D_\q^-(\pi)\right|.\end{align*}
Putting everything together, we bound the difference between the distributions of probe-answer histories when $\q = \min\{\sqrt{n},n/d\}$:
%Moreover, for a probe-answer history $\pi\in \Pi$, $D^s(\pi)$ denote the probability that the interaction of $P^s$ and $\mathcal{A}$ results in $\pi$. Then,  
\begin{align*}
\sum_{\pi\in \Pi} |D_\q^+(\pi) - D_\q^-(\pi)| &= \sum_{\pi\in \Pi'} |D_\q^+(\pi) - D_\q^-(\pi)| +\sum_{\pi\in \Pi\setminus\Pi'} |D_\q^+(\pi) - D_\q^-(\pi)|\\ 
&\leq 2\left|\sum_{\pi\in \Pi\setminus\Pi'} D_\q^+(\pi) - \sum_{\pi\in \Pi\setminus\Pi'} D_\q^-(\pi)\right|\\
&\leq 2\left|\sum_{\pi\in \Pi\setminus\Pi'} D_\q^+(\pi)\right|+2\left|\sum_{\pi\in \Pi\setminus\Pi'} D_\q^-(\pi)\right| = o(1).
%\\&\leq  \sum_{\pi\in \Pi'} |D_\q^+(\pi)| + \sum_{\pi\in \Pi\setminus\Pi'} \big(|D_\q^+(\pi)| + |D_\q^-(\pi)|\big) = \sum_{\pi\in \Pi'} |D_\q^+(\pi) - D_\q^-(\pi)| + o(1),
\end{align*}
where the last equation follows as a result of Lemma~\ref{lem:no-edge} and Lemma~\ref{lem:repetitive-vertex}. %Given $\pi \in \Pi'$, the answers to all \adjacencyP~probes by both $P^-$ and $P^+$ are $\bot$. Moreover, the answers to each \neighborP~probe by both processes are chosen uniformly at random among all cells from the rows corresponding to the set of all vertices not visited so far, which is the same for the both processes. Hence, $D^+_{\q}(\pi) = D^-_\q(\pi)$ for all probe-answer history $\pi \in \Pi$. Hence, the total distance of $D^+_\q$ and $D^-_\q$ is bounded by $o(1)$.
\end{proof}
%\vspace{-.2in}
%\subsection{Establishing the lower bound}
%\vspace{-.1in}
Finally we turn to complete the proof of Thm. \ref{them:lowrbound} which holds also for simple graphs.
%. Our construction so far allows parallel edges and self-loops which are not handled by our upper bounds. 
%
%the lower bound even for simple graphs, restated below.
%\lowerbound*
%\iffalse
%\begin{theorem}\label{thm:main-lb}
%Any local randomized LCA that computes, with success probability at least $2/3$, a spanner of the simple input graph $G$ with $o(m)$ edges, has probe complexity $\Omega(\sqrt{n}, n^2/m)$.   
%\end{theorem}
%\fi
\begin{proof}[Proof of Theorem \ref{them:lowrbound}]
As outlined earlier, for the $1-o(1)$ fraction of the instances in $\mathcal{D}^+_{(x,a,y,b)}$, a deterministic $\mathcal{A}$ must keep the edge $(x,a,y,b)$ in its spanner with probability $\frac{2}{3}(1-o(1)) > 1/2$ because, due to Lemma~\ref{lem:distance}, with probability $1-o(1)$ it cannot distinguish whether the given instance is from $\mathcal{D}^+_{(x,a,y,b)}$ or $\mathcal{D}^-_{(x,a,y,b)}$. Since $\mathcal{D}_{(x,a,y,b)}^+$ is the uniform distribution over instances compatible with $(x,a,y,b)$, then for more than half of these instances, $\mathcal{A}$ returns \YES~on query $(x,a,y,b)$. Applying this argument for all $(x,a,y,b)$, we obtain that $\mathcal{A}$ returns \YES~on at least half of all compatible instance-query pairs. Nonetheless, our generated instances in $\mathcal{D}^+_{(x,a,y,b)}$, $\mathcal{D}^-_{(x,a,y,b)}$ often contains parallel edges and self-loops.

In order to remove these non-simple graphs, as similarly noted in \cite{kaufman2004tight}, we observe that our constructed graphs only have $O(d^2)$ parallel edges and $O(d)$ self-loops in expectation. Thus, we may simply fix each instance by modifying $O(d^2)$ matchings so that all instances become simple (assuming sufficiently large $n$ and $d$). Observe that by doing so, the connectivity of the graph strictly increases: the required condition of $\mathcal{D}^+_{(x,a,y,b)}$ that $x$ and $y$ must be connected even when $(x,a,y,b)$ is absent is still upheld. Similarly, for $\mathcal{D}^-_{(x,a,y,b)}$ the modifications must still respect the restriction that no edge other than $(x,a,y,b)$ has endpoints on different tables, so that removing $(x,a,y,b)$ disconnects them.

Due to similarly arguments as Lemma~\ref{lem:no-edge} and Lemma~\ref{lem:repetitive-vertex}, the probability that $\mathcal{A}$ detects these modifications are $o(1)$, and therefore Lemma~\ref{lem:distance} still holds under these changes, as long as the query to $\mathcal{A}$ itself is not one of the modified edges. On the other hand, if a modified edge is given as a query to $\mathcal{A}$, then we do not assume anything about the algorithm's answer for this edge. As the modified edges constitute a fraction of up to $O(d^2)/nd = O(d/n)$ of the total number of edges on each instance on average, the fraction of instance-query pairs where $\mathcal{A}$ answers \YES~can be potentially reduced by at most a fraction of $O(d/n)$: this still leaves a fraction of $\frac{2}{3}(1-o(1))-O(\frac{d}{n}) > 1/2$ for sufficiently small $d = O(n)$. That is, even when restricted to simple graphs, we still obtain that $\mathcal{A}$ returns \YES~on at least half of all compatible instance-query pairs

Over the uniform distribution over all instances, in expectation any deterministic algorithm $\mathcal{A}$ must return \YES~on more than $m/2$ edges. Employing Yao's principle, we conclude that any (randomized) LCA cannot compute a spanning subgraph with $o(m)$ edges with success probability $2/3$ using $o(\min\{\sqrt{n},n/d\})$ probes. Substituting $d = 2m/n$ yields the desired bound.
%Consider the following distribution of input: with probability $1/2$, we select uniformly at random a subgraph $G^-$ in support of $\mathcal{D}^{-}_{(x, y)}$ and ask for $(x, y)$, and with probability $1/2$, we select uniformly at random a $d$-regular graph. {\color{red}the probability that it is not a simple graph}. Since the algorithm does not know the designated edge $(x, y)$, after $o(\min(\sqrt{n}, n/d))$ the statistical difference of the two possible instance is $o(1)$.   
\end{proof}

\subparagraph*{Acknowledgements.} 
MP is supported by Minerva Foundation (124042) and ISF-2084/18.
RR is supported by the NSF grants CCF-1650733, CCF-1733808, IIS-1741137 and CCF-1740751.
AV is supported by the NSF grant CCF-1535851.
AY is supported by the NSF grants CCF-1650733, CCF-1733808, IIS-1741137 and the DPST scholarship, Royal Thai Government.

\end{document}